\documentclass[reqno]{amsart}
\usepackage{amsmath,amsthm,amssymb,eucal,mathrsfs,fullpage,setspace,bm}
\usepackage{graphicx}
\usepackage[round]{natbib}
\usepackage[all,cmtip]{xy}
\usepackage{lineno}
\usepackage{subfig}

\title{structural symmetry in evolutionary games}
\author{alex mcavoy and christoph hauert}

\theoremstyle{definition}
\newtheorem*{axiom}{Axiom}
\newtheorem{corollary}{Corollary}
\newtheorem{definition}{Definition}
\newtheorem{example}{Example}
\newtheorem{lemma}{Lemma}
\newtheorem{proposition}{Proposition}
\newtheorem{remark}{Remark}
\newtheorem{theorem}{Theorem}

\newtheorem*{repeatedProposition}{Proposition \ref{prop:functionOfBeta}}
\newtheorem*{repeatedSymmetricTheorem}{Theorem \ref{thm:mainSymmetricTheorem}}
\newtheorem*{repeatedAsymmetricTheorem}{Theorem \ref{thm:mainAsymmetricTheorem}}

\begin{document}

\allowdisplaybreaks

\begin{abstract}
In evolutionary game theory, an important measure of a mutant trait (strategy) is its ability to invade and take over an otherwise-monomorphic population. Typically, one quantifies the success of a mutant strategy via the probability that a randomly occurring mutant will fixate in the population. However, in a structured population, this fixation probability may depend on where the mutant arises. Moreover, the fixation probability is just one quantity by which one can measure the success of a mutant; fixation \textit{time}, for instance, is another. We define a notion of homogeneity for evolutionary games that captures what it means for two single-mutant states, i.e. two configurations of a single mutant in an otherwise-monomorphic population, to be ``evolutionarily equivalent" in the sense that \textit{all} measures of evolutionary success are the same for both configurations. Using asymmetric games, we argue that the term ``homogeneous" should apply to the evolutionary process as a whole rather than to just the population structure. For evolutionary matrix games in graph-structured populations, we give precise conditions under which the resulting process is homogeneous. Finally, we show that asymmetric matrix games can be reduced to symmetric games if the population structure possesses a sufficient degree of symmetry.
\end{abstract}

\maketitle

\section{Introduction}

One of the most basic models of evolution in finite populations is the Moran process \citep{moran:MPCPS:1958}. In the Moran process, a population consisting of two types, a mutant type and a wild type, is continually updated via a birth-death process until only one type remains. The mutant and wild types are distinguished by only their reproductive fitness, which is assumed to be an intrinsic property of a player. A mutant type has fitness $r>0$ relative to the wild type (whose fitness relative to itself is $1$), and in each step of the process an individual is selected for reproduction with probability proportional to fitness. Reproduction is clonal, and the offspring of a reproducing individual replaces another member of the population who is chosen for death uniformly at random. Eventually, this population will end up in one of the monomorphic absorbing states: all mutant type or all wild type. In this context, a fundamental metric of the success of the mutant type is its ability to invade and replace a population of wild-type individuals \citep{nowak:BP:2006}.

In a population of size $N$, the probability that a single mutant in a wild-type population will fixate in the Moran process is
\begin{linenomath}
\begin{align}\label{fpMoran}
\rho &= \frac{1-r^{-1}}{1-r^{-N}} .
\end{align}
\end{linenomath}
In this version of a birth-death process, the members of the population are distinguished by only their types; in particular, there is no notion of spatial arrangement, i.e. the population is \textit{well-mixed}. \citet{lieberman:Nature:2005} extend the classical Moran process to graph-structured populations, which are populations with links between the players that indicate who is a neighbor of whom. In this structured version of the Moran process, reproduction happens with probability proportional to fitness, but the offspring of a reproducing individual can replace only a \textit{neighbor} of the parent. Since individuals are now distinguished by both their types (mutant or wild) \textit{and} locations within the population, a natural question is whether or not the fixation probability of a single mutant type depends on where this mutant appears in the population. \citet{lieberman:Nature:2005} show that this fixation probability is independent of the location of the mutant if everyone has the same number of neighbors, i.e. the graph is \textit{regular} \citep{bollobas:CUP:2001}. In fact, remarkably, the fixation probability of a single mutant on a regular graph is the same as that of Eq. (\ref{fpMoran})--an observation first made in a special case by \citet{maruyama:TPB:1974}. This result, known as the Isothermal Theorem, is independent of the number of neighbors the players have (i.e. the \textit{degree} of the graph).

The Moran process is \textit{frequency-independent} in the sense that the fitness of an individual is determined by type and is not influenced by the rest of the population. However, the Moran model can be easily extended to account for frequency-\textit{dependent} fitness. A standard way in which to model frequency-dependence is through evolutionary games \citep{taylor:MB:1978,hofbauer:CUP:1998,nowak:Nature:2004}. In the classical setup, each player in the population has one of two strategies, $A$ and $B$, and receives an aggregate payoff from interacting with the rest of the population. This aggregate payoff is usually calculated from a sequence of pairwise interactions whose payoffs are described by a payoff matrix of the form
\begin{linenomath}
\begin{align}
\bordermatrix{%
 & A & B \cr
A &\ a & \ b \cr
B &\ c & \ d \cr
} .
\end{align}
\end{linenomath}
Each player's aggregate payoff is then translated into fitness and the strategies in the population are updated based on these fitness values. Since a player's payoff depends on the strategies of the other players in the population, so does that player's fitness. Traditionally, this population is assumed to be infinite, in which case the dynamics of the evolutionary game are governed deterministically by the replicator equation of \citet{taylor:MB:1978}. More recently, evolutionary games have been considered in finite populations \citep{nowak:Nature:2004,taylor:BMB:2004}, where the dynamics are no longer deterministic but rather stochastic. In order to restrict who interacts with whom in the population, these populations can also be given structure. Popular types of structured populations are graphs \citep{lieberman:Nature:2005,ohtsuki:Nature:2006,szabo:PR:2007}, sets \citep{tarnita:PNAS:2009}, and demes \citep{taylor:Selection:2000,hauert:JTB:2012}.

We focus here on evolutionary games in graph-structured populations that proceed in discrete time steps. Such processes define discrete-time Markov chains, either with or without absorbing states (depending on mutation rates). Typically, in evolutionary game theory, one starts with a population of players and repeatedly updates the population based on some update rule such as birth-death \citep{nowak:Nature:2004}, death-birth \citep{ohtsuki:Nature:2006,zukewich:PLoSONE:2013}, imitation \citep{ohtsuki:JTB:2006}, pairwise comparison \citep{szabo:PRE:1998,traulsen:JTB:2007}, or Wright-Fisher \citep{ewens:S:2004,imhof:JMB:2006}. These update rules can be split into two classes: cultural and genetic \citep[see][]{mcavoy:PLOSCB:2015}. Cultural update rules involve strategy imitation while genetic update rules involve reproduction and inheritance. Without mutations, an update rule may be seen as giving a probability distribution over a number of strategy-acquisition scenarios: a player inherits a new strategy through imitation (cultural rules) or is born with a strategy determined by the parent(s) (genetic rules). Mutation rates disrupt these scenarios by placing a small probability of a player taking on a novel strategy. The way in which strategy mutation rates are incorporated into an evolutionary process depends on both the class of the update rule and the specifics of the update rule itself. In a general sense, we say that strategy mutations are \textit{homogeneous} if they depend on neither the players themselves nor the locations of the players. This notion of homogeneous strategy mutations is analogous to that of a symmetric game, which is a game for which the payoffs depend on the strategies played but are independent of the identities and locations of the players.

The Isothermal Theorem seems to indicate that populations structured by regular graphs possess a significant degree of homogeneity, meaning that different locations within the population appear to be equivalent for the purposes of evolutionary dynamics. However, it is important to note that (i) fixation probability is just one metric of evolutionary success and (ii) the Moran process is only one example of an evolutionary process. For example, in addition to the probability of fixation, one could look at the \textit{absorption time}, which is the average number of steps until one of the monomorphic absorbing states is reached. Moreover, one could consider frequency-dependent processes, possibly with different update rules, in which fitness is no longer an intrinsic property of an individual but is also influenced by the other members of the population. We show that the Isothermal Theorem does not extend to arbitrary frequency-dependent processes such as evolutionary games. Furthermore, we show that this theorem does not apply to fixation \textit{times}; that is, even for the Moran process on a regular graph, the average number of updates until a monomorphic absorbing state is reached can depend on the initial placement of the mutant.

Given that the Isothermal Theorem does not extend to other processes defined on regular graphs, the next natural question is the following: what is the meaning of a spatially-homogeneous population in evolutionary game theory? In fact, we argue using asymmetric games \citep{mcavoy:PLOSCB:2015} that the term ``homogeneous" should apply to an evolutionary process \textit{as a whole} rather than to just the population structure. Even for populations that appear to be spatially homogeneous, such as populations on complete graphs, non-uniform distribution of resources within the population can result in heterogeneity of the overall process. Similarly, for symmetric games, heterogeneity can be introduced into the dynamics of an evolutionary process through strategy mutations. Therefore, a notion of homogeneity of an evolutionary game should take into account \textit{at least} (i) population structure, (ii) payoffs, and (iii) strategy mutations.

If the strategy-mutation rates are miniscule, then the population spends most of its time in monomorphic states. With small mutation rates, one can define an embedded Markov chain on the monomorphic states and use this chain to study the success of each strategy \citep{fudenberg:JET:2006,wu:JMB:2011}. That is, when a mutation occurs, the population is assumed to return to a monomorphic state before another mutant arises. Thus, the states of interest are the monomorphic states and the states consisting of a single mutant in an otherwise-monomorphic population. We say that an evolutionary game is \textit{homogeneous} if any two states consisting of a single mutant ($A$-player) in wild-type population ($B$-players) are mathematically equivalent. We make precise what we mean by ``mathematically equivalent" in \S\ref{sec:markov}, but, informally, this equivalence means that any two such states are the same up to relabeling. In particular, \textit{all} metrics such as fixation probability, absorption time, etc. are the same for any two states consisting of a single $A$-mutant in a $B$-population. We show that an evolutionary game in a graph-structured population is homogeneous if the graph is vertex-transitive (``looks the same" from each vertex), the payoffs are symmetric, and the strategy mutations are homogeneous. This result holds for any update rule and selection intensity.

Finally, we explore the effects of population structure on asymmetric evolutionary games. In the weak selection limit, we show that asymmetric matrix games with homogeneous strategy mutations can be reduced to symmetric games if the population structure is arc-transitive (``looks the same" from each edge in the graph). This result is a finite-population analogue of the main result of \citet{mcavoy:PLOSCB:2015}, which states that a similar reduction to symmetric games is possible in sufficiently large populations. Thus, we establish that this reduction applies to any population size if the graph possesses a sufficiently high degree of symmetry. Our explorations, both for symmetric and asymmetric games, clearly demonstrate the effects of population structure, payoffs, and strategy mutations on symmetries in evolutionary games.

\section{Markov chains and evolutionary equivalence}\label{sec:markov}

\subsection{General Markov chains}\label{subsec:generalMarkov}

The evolutionary processes we consider here define discrete-time Markov chains on finite state spaces. The notions of symmetry and evolutionary equivalence that we aim to introduce for evolutionary processes can actually be stated quite succinctly at the level of the Markov chain. We first work with general Markov chains, and later we apply these ideas to evolutionary games.

\begin{definition}[Symmetry of states]\label{def:symmetry}
Suppose that $X=\left\{X_{n}\right\}_{n\geqslant 0}$ is a Markov chain on a (finite) state space, $\mathcal{S}$, with transition matrix $\mathbf{T}$. An automorphism of $X$ is a bijection $\phi :\mathcal{S}\rightarrow\mathcal{S}$ such that $\mathbf{T}_{\mathfrak{s},\mathfrak{s}'}=\mathbf{T}_{\phi\left(\mathfrak{s}\right) ,\phi\left(\mathfrak{s}'\right)}$ for each $\mathfrak{s},\mathfrak{s}'\in\mathcal{S}$. Two states $\mathfrak{s},\mathfrak{s}'\in\mathcal{S}$ are said to be \textit{symmetric} if there exists $\phi\in\textrm{Aut}\left(X\right)$ such that $\phi\left(\mathfrak{s}\right) =\mathfrak{s}'$.
\end{definition}

Definition \ref{def:symmetry} says that the states of the chain can be relabeled in such a way that the transition probabilities are preserved. This relabeling may affect the long-run distribution of the chain since it need not fix absorbing states, so we make one further refinement in order to ensure that if two states are symmetric, then they behave in the same way:
\begin{definition}[Evolutionary equivalence]\label{def:evolutionaryEquivalence}
States $\mathfrak{s}$ and $\mathfrak{s}'$ are \textit{evolutionarily equivalent} if there exists an automorphism of the Markov chain, $\phi\in\textrm{Aut}\left(X\right)$, such that
\begin{enumerate}

\item[(i)] $\phi\left(\mathfrak{s}\right) =\mathfrak{s}'$;

\item[(ii)] if $\mu$ is a stationary distribution of $X$, then $\phi\left(\mu\right) =\mu$.

\end{enumerate}
\end{definition} 
For a Markov chain with absorbing states, the notions of symmetry and evolutionary equivalence of states need not coincide (see Example \ref{ex:swapStrategies} of Appendix B). However, if the Markov chain has a unique stationary distribution (as would be the case if it were irreducible), then symmetry implies evolutionary equivalence:
\begin{proposition}\label{prop:ergodicSymmetry}
If $X$ has a unique stationary distribution, then two states are symmetric if and only if they are evolutionarily equivalent.
\end{proposition}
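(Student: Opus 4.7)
The plan is to observe that one direction is immediate from the definitions, while the other reduces to showing that any automorphism of the Markov chain pushes stationary distributions to stationary distributions; uniqueness then does the rest.

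First I would handle the easy direction: if $\mathfrak{s}$ and $\mathfrak{s}'$ are evolutionarily equivalent, then by Definition \ref{def:evolutionaryEquivalence} there is $\phi\in\textrm{Aut}(X)$ sending $\mathfrak{s}$ to $\mathfrak{s}'$, and this very $\phi$ witnesses symmetry in the sense of Definition \ref{def:symmetry}. Note that this direction did not use uniqueness of the stationary distribution.

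For the nontrivial direction, suppose $\mathfrak{s}$ and $\mathfrak{s}'$ are symmetric via some $\phi\in\textrm{Aut}(X)$. Let $\mu$ denote the unique stationary distribution of $X$, and define the pushforward $\phi(\mu)$ by $\phi(\mu)(\mathfrak{t}) = \mu(\phi^{-1}(\mathfrak{t}))$. The key step is to verify that $\phi(\mu)$ is itself a stationary distribution. Using the substitution $\mathfrak{u} = \phi^{-1}(\mathfrak{t})$ and the automorphism identity $\mathbf{T}_{\phi(\mathfrak{u}),\mathfrak{t}'} = \mathbf{T}_{\mathfrak{u},\phi^{-1}(\mathfrak{t}')}$ (obtained from $\mathbf{T}_{\mathfrak{u},\mathfrak{v}} = \mathbf{T}_{\phi(\mathfrak{u}),\phi(\mathfrak{v})}$ applied with $\mathfrak{v}=\phi^{-1}(\mathfrak{t}')$), one computes
\begin{align*}
\sum_{\mathfrak{t}} \phi(\mu)(\mathfrak{t})\,\mathbf{T}_{\mathfrak{t},\mathfrak{t}'}
&= \sum_{\mathfrak{u}} \mu(\mathfrak{u})\,\mathbf{T}_{\phi(\mathfrak{u}),\mathfrak{t}'}
= \sum_{\mathfrak{u}} \mu(\mathfrak{u})\,\mathbf{T}_{\mathfrak{u},\phi^{-1}(\mathfrak{t}')}
= \mu(\phi^{-1}(\mathfrak{t}'))
= \phi(\mu)(\mathfrak{t}').
\end{align*}
Hence $\phi(\mu)$ is stationary, and by uniqueness $\phi(\mu) = \mu$, giving condition (ii) of Definition \ref{def:evolutionaryEquivalence}. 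Combined with $\phi(\mathfrak{s}) = \mathfrak{s}'$ from symmetry, this shows that $\mathfrak{s}$ and $\mathfrak{s}'$ are evolutionarily equivalent.

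There is no real obstacle in this proof; the only substantive content is the short pushforward computation, and the rest is bookkeeping. The uniqueness hypothesis is used precisely and only at the last line to collapse $\phi(\mu)$ to $\mu$, which explains why the proposition would fail for chains with multiple stationary distributions (as foreshadowed by the remark about absorbing chains and Example \ref{ex:swapStrategies}).
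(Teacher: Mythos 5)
Your proposal is correct and follows essentially the same route as the paper: the paper also reduces the nontrivial direction to showing that an automorphism of the chain carries stationary distributions to stationary distributions (its Lemma \ref{lem:stationaryDistribution}, proved by the same change-of-variables computation you give) and then invokes uniqueness. The only cosmetic difference is that the paper writes the transported distribution as $\phi(\mu)_{\mathfrak{s}}=\mu_{\phi(\mathfrak{s})}$ rather than your pushforward $\mu(\phi^{-1}(\cdot))$, which is immaterial since $\phi$ is a bijection and $\phi^{-1}$ is also an automorphism.
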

We show in Appendix B that a Markov chain symmetry preserves the set of stationary distributions (Lemma \ref{lem:stationaryDistribution}), so if there is a unique stationary distribution, then condition (ii) of Definition \ref{def:evolutionaryEquivalence} is satisfied automatically by any symmetry. Proposition \ref{prop:ergodicSymmetry} is then an immediate consequence of this result.

If $\mathfrak{s}$ and $\mathfrak{s}'$ are evolutionarily equivalent, then it is clear for absorbing processes that the probability that $\mathfrak{s}$ fixates in absorbing state $\overline{\mathfrak{s}}$ is the same as the probability that $\mathfrak{s}'$ fixates in state $\overline{\mathfrak{s}}$ (and similarly for fixation times). If the process has a unique stationary distribution, then the symmetry of $\mathfrak{s}$ and $\mathfrak{s}'$ implies that this distribution puts the same mass on $\mathfrak{s}$ and $\mathfrak{s}'$. These properties follow at once from the fact that the states $\mathfrak{s}$ and $\mathfrak{s}'$ are equivalent up to relabeling.

\subsection{Markov chains defined by evolutionary games}

Our focus is on evolutionary games on fixed population structures. If $S$ is a finite set of strategies (or ``actions") available to each player, and if the population size is $N$, then the state space of the Markov chain defined by an evolutionary game in such a population is $\mathcal{S}=S^{N}$. For evolutionary games without random strategy mutations, the absorbing states of the chain are the monomorphic states, i.e. the strategy profiles consisting of just a single unique strategy. Thus, states $\mathfrak{s}$ and $\mathfrak{s}'$ are evolutionarily equivalent if they are symmetric relative to the monomorphic states. On the other hand, evolutionary processes with strategy mutations are typically irreducible (and have unique stationary distributions); in these processes, the notions of symmetry and evolutionary equivalence coincide by Proposition \ref{prop:ergodicSymmetry}.

In order to state the definition of a \textit{homogeneous} evolutionary process, we first need some notation. For $s,s'\in S$, denote by $\mathfrak{s}_{\left(s',i\right) ,s}$ the state in $S^{N}$ whose $i$th coordinate is $s'$ and whose $j$th coordinate for $j\neq i$ is $s$; that is, all players are using strategy $s$ except for player $i$, who is using $s'$.
\begin{definition}[Homogeneous evolutionary process]\label{def:homogeneousEvolutionaryProcess}
An evolutionary process on $S^{N}$ is \textit{homogeneous} if for each $s,s'\in S$, the states $\mathfrak{s}_{\left(s',i\right) ,s}$ and $\mathfrak{s}_{\left(s',j\right) ,s}$ are evolutionarily equivalent for each $i,j=1,\dots ,N$. An evolutionary process is \textit{heterogeneous} if it is not homogeneous.
\end{definition}

In other words, an evolutionary process is homogeneous if, at the level of the Markov chain it defines, any two states consisting of a single mutant in an otherwise-monomorphic population appear to be relabelings of one another. As noted in \S\ref{subsec:generalMarkov}, \textit{all} quantities with which one could measure evolutionary success are the same for these single-mutant states if the process is homogeneous.

\section{Evolutionary games on graphs}

We consider evolutionary games in graph-structured populations. Unless indicated otherwise, a ``graph" means a directed, weighted graph on $N$ vertices. A directed graph is one in which the edges have orientations, meaning there may be an edge from $i$ to $j$ but not from $j$ to $i$. Moreover, the edges carry weights, which we assume are nonnegative real numbers. A directed, weighted graph is equivalent to a nonnegative $N\times N$ matrix, $\mathscr{D}$, where there is an edge from $i$ to $j$ if and only if $\mathscr{D}_{ij}\neq 0$. If there is such an edge, then the weight of this edge is simply $\mathscr{D}_{ij}$. Since there is a one-to-one correspondence between directed, weighted graphs on $N$ vertices and $N\times N$ real matrices, we refer to graphs and matrices using the same notation, describing $\mathscr{D}$ as a graph but using the matrix notation $\mathscr{D}_{ij}$ to indicate the weight of the edge from vertex $i$ to vertex $j$.

Every graph considered here is assumed to be connected (strongly), which means that for any two vertices, $i$ and $j$, there is a (directed) path from $i$ to $j$. This assumption is not that restrictive in evolutionary game theory since one can always partition a graph into its strongly-connected components and study the behavior of an evolutionary process on each of these components. Moreover, for evolutionary processes on graphs that are not strongly connected, it is possible to have both (i) recurrent non-monomorphic states in processes without mutations and (ii) multiple stationary distributions in processes with mutations. Some processes (such as the death-birth process) may not even be defined on graphs that are not strongly connected. Therefore, we focus on strongly-connected graphs and make no further mention of the term ``connected."

Since our goal is to discuss symmetry in the context of evolutionary processes, we first describe several notions of symmetry for graphs. The three types of graphs we treat here are \textit{regular}, \textit{vertex-transitive}, and \textit{symmetric}. Informally speaking, a regular graph is one in which each vertex has the same number of neighboring vertices (and this number is known as the \textit{degree} of the graph). A vertex-transitive graph is one that looks the same from any two vertices; based on the graph structure alone, a player cannot tell if he or she has been moved from one location to another. A symmetric (or \textit{arc-transitive}) graph is one that looks the same from any two edges. That is, if two players are neighbors and are both moved to another pair of neighboring vertices, then they cannot tell that they have been moved based on the structure of the graph alone. We recall in detail the formal definitions of these terms in Appendix B. The relationships between these three notions of symmetry, as well as some examples, are illustrated in Fig. \ref{fig:graphDiagram}.

\begin{figure}
\begin{center}
\includegraphics[scale=0.15]{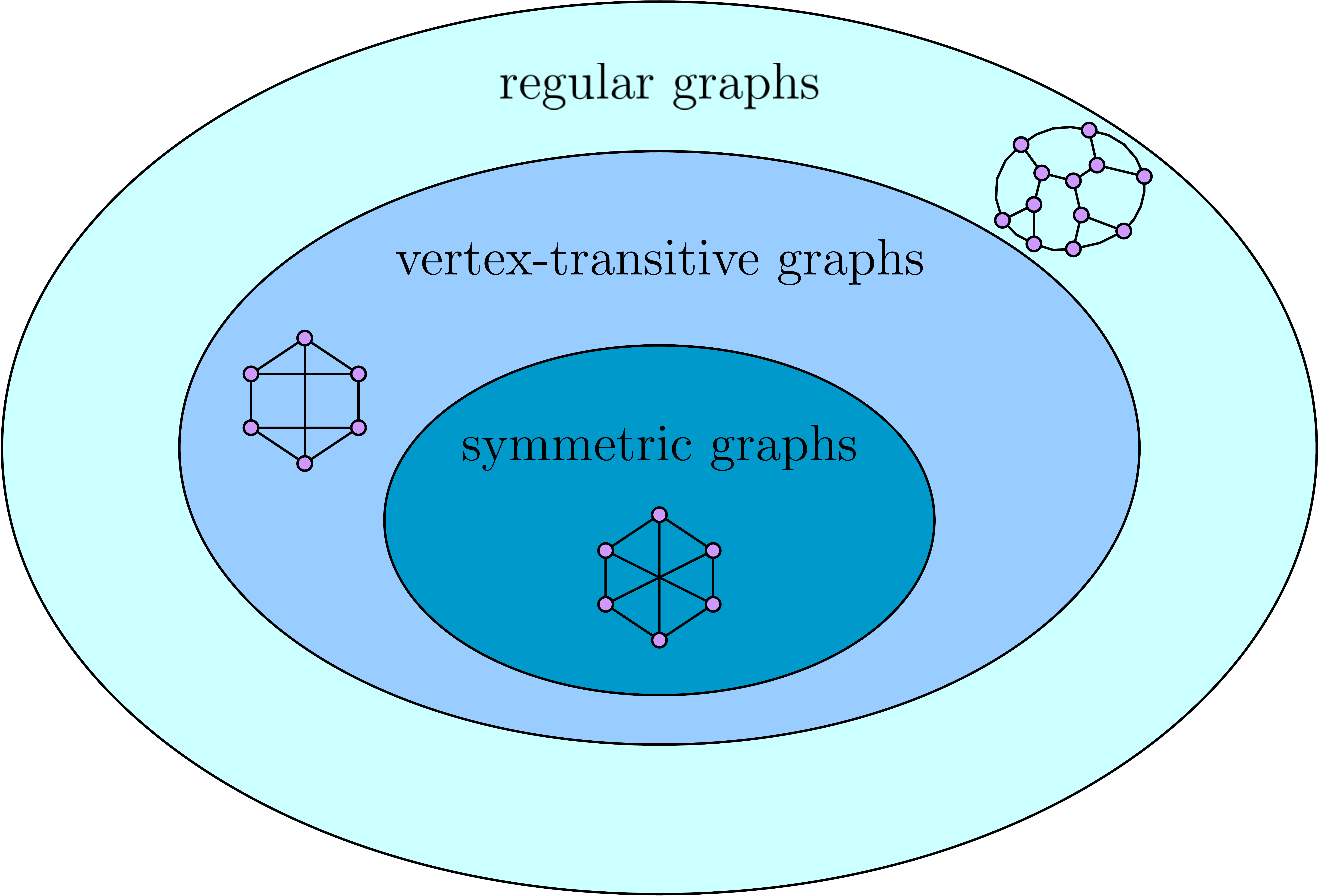}
\caption{Three different levels of symmetry for connected graphs. Regular graphs have the property that the degrees of the vertices are all the same. Vertex-transitive graphs look the same from each vertex and are necessarily regular. Symmetric (arc-transitive) graphs look the same from any two (directed) edges. Each of these containments is strict; there exist graphs that are regular but not vertex-transitive (Fig. \ref{fig:fruchtGraph}) and vertex-transitive but not symmetric (Fig. \ref{fig:transitiveSymmetric}(A)).\label{fig:graphDiagram}}
\end{center}
\end{figure}

We turn now to evolutionary processes in graph-structured populations:
\subsection{The Moran process}

Consider the Moran process on a graph, $\mathscr{D}$. \citet{lieberman:Nature:2005} show that if $\mathscr{D}$ is regular, then the fixation probability of a randomly placed mutant is given by Eq. (\ref{fpMoran}), the fixation probability of a single mutant in the classical Moran process. This result (known as the Isothermal Theorem) proves that, in particular, this fixation probability does not depend on the initial location of the mutant. (We refer to this latter statement as the ``weak" version of the Isothermal Theorem.) Our definition of homogeneity in the context of evolutionary processes (Definition \ref{def:homogeneousEvolutionaryProcess}) is related to this independence of initial location and has nothing to do with fixation probabilities in the classical Moran process. Naturally, the Isothermal Theorem raises the question of whether or not this location independence extends to absorption times (average number of steps until an absorbing state is reached) when $\mathscr{D}$ is regular.

Suppose that $\mathscr{D}$ is the \textit{Frucht graph} of Fig. \ref{fig:fruchtGraph}. The Frucht graph is an undirected, unweighted, regular (but not vertex-transitive) graph of size $12$ and degree $3$ \citep{frucht:CM:1939}. The fixation probabilities and absorption times of a single mutant in a wild-type population are given in Fig. \ref{fig:moranBarGraphs} as a function of the initial location of the mutant. The fixation probabilities do not depend on the initial location of the mutant, as predicted by the Isothermal Theorem, but the absorption times \textit{do} depend on where the mutant arises. In fact, the absorption time is distinct for each different initial location of the mutant. The details of these calculations are in Appendix C. Therefore, even the weak form of the Isothermal Theorem fails to hold for absorption times. In particular, the Moran process on a regular graph need not define a homogeneous evolutionary process.

\begin{figure}
\begin{center}
\includegraphics[scale=0.5]{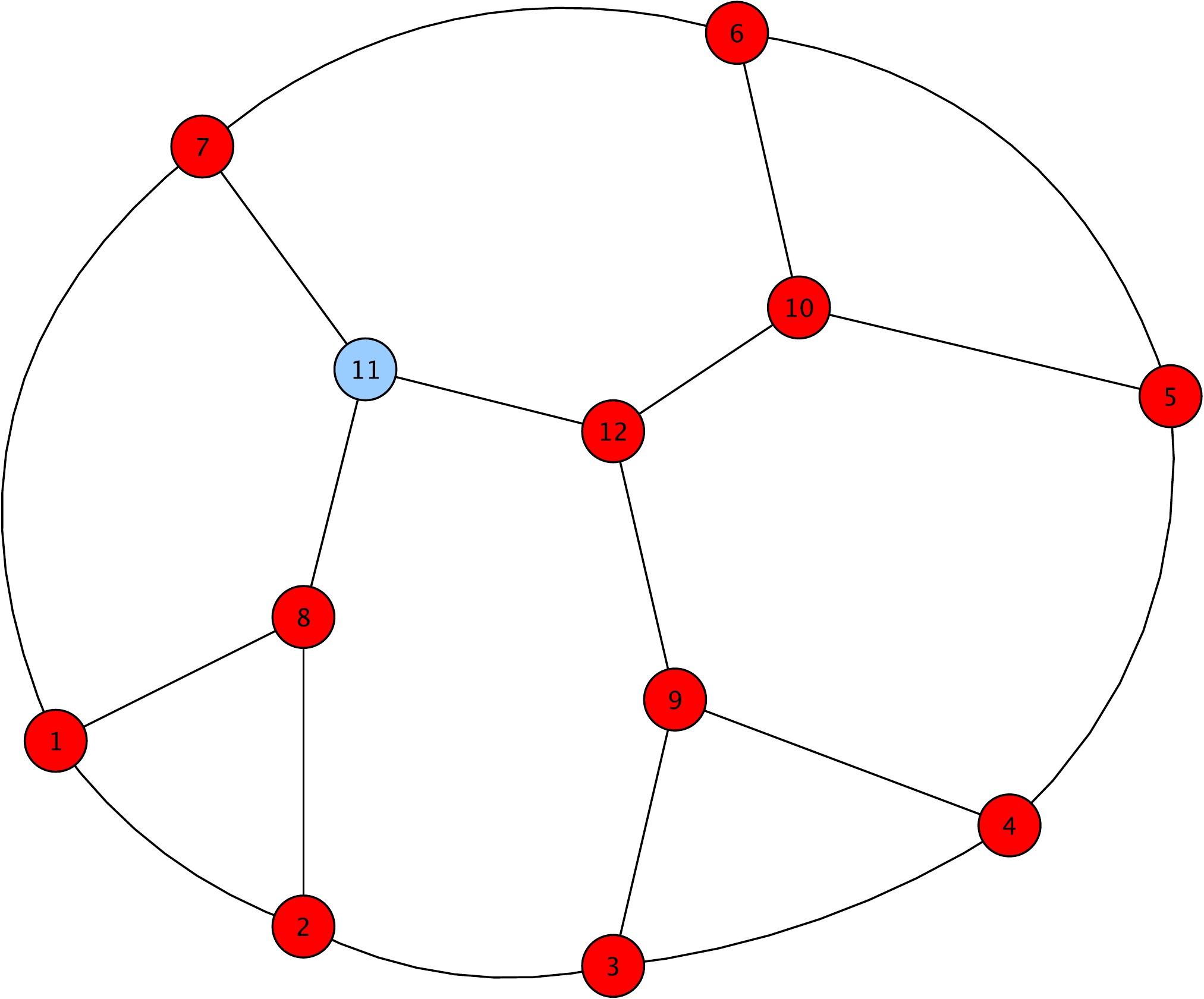}
\caption{A single mutant (cooperator) at vertex $11$ of the Frucht graph. In the Snowdrift Game, the probability that cooperators fixate depends on the initial location of this mutant on the Frucht graph (even if the intensity of selection is weak).\label{fig:fruchtGraph}}
\end{center}
\end{figure}

\begin{figure}
\begin{center}
\subfloat[]{\includegraphics[scale=0.45]{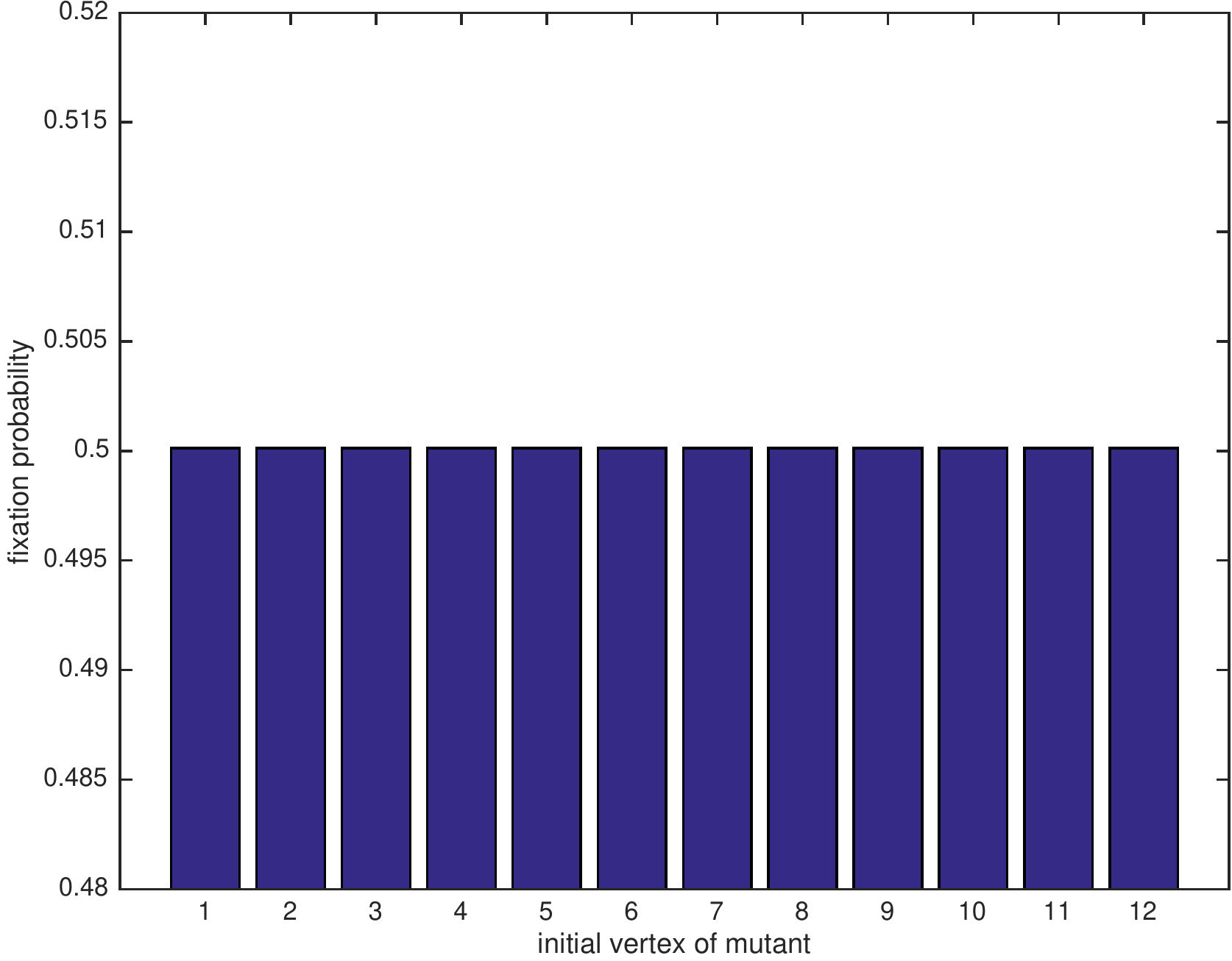}}
\quad\quad
\subfloat[]{\includegraphics[scale=0.45]{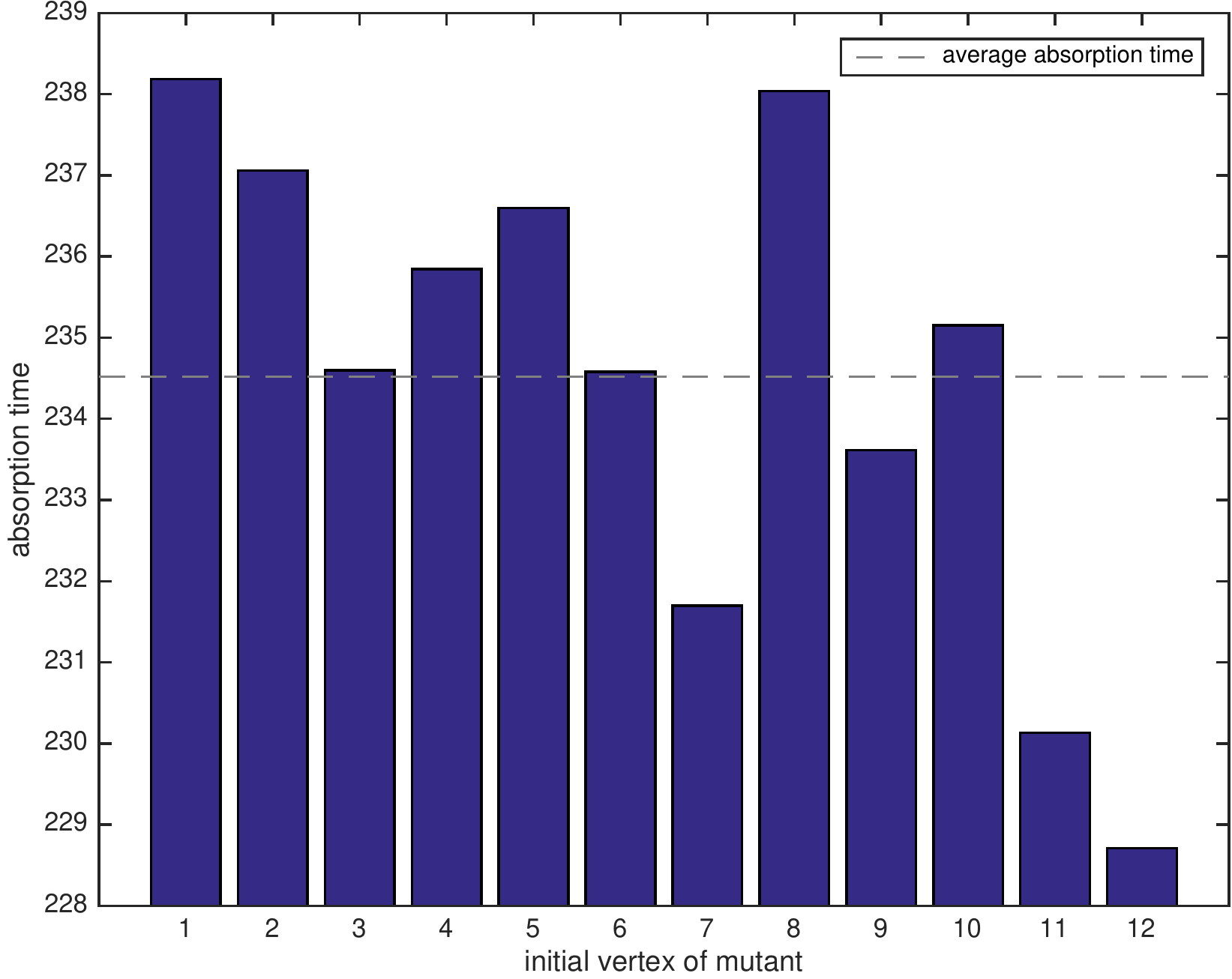}}
\end{center}
\caption{Fixation probability (A) and absorption time (B) versus initial vertex of mutant for the Moran process on the Frucht graph. In both figures, the mutant has fitness $r=2$ relative to the wild type. As predicted by the Isothermal Theorem, the fixation probability does not depend on the initial location of the mutant. The absorption time (measured in number of updates) is different for each initial placement of the mutant on the Frucht graph. The precise values of these fixation probabilities and absorption times are in Appendix C. \label{fig:moranBarGraphs}}
\end{figure}

This setup involving two types of players, frequency-independent interactions, and a population structure defined by a single graph, can be generalized considerably:

\subsection{Symmetric games}\label{subsec:frequencyDependentGames}

A powerful version of evolutionary graph theory uses \textit{two} graphs to define relationships between the players: an \textit{interaction graph}, $\mathscr{E}$, and a \textit{dispersal graph}, $\mathscr{D}$ \citep{ohtsuki:PRL:2007,taylor:Nature:2007,ohtsuki:JTB:2007,pacheco:PLoSCB:2009,debarre:NC:2014}. These graphs both have nonnegative weights. As an example of how these two graphs are used to define an evolutionary process, we consider a birth-death process based on two-player, symmetric interactions:
\begin{example}\label{interactionDispersalExample}
Consider a symmetric matrix game with $n$ strategies, $A_{1},\dots ,A_{n}$, and payoff matrix
\begin{linenomath}
\begin{align}\label{generic2by2}
\bordermatrix{%
 & A_{1} & A_{2} & \cdots & A_{n} \cr
A_{1} &\ a_{11} &\ a_{12} &\ \cdots &\ a_{1n} \cr
A_{2} &\ a_{21} &\ a_{22} &\ \cdots &\ a_{2n} \cr
\ \vdots &\ \vdots &\ \vdots &\ \ddots &\ \vdots \cr
A_{n} &\ a_{n1} &\ a_{n2} &\ \cdots &\ a_{nn} \cr
} .
\end{align}
\end{linenomath}
If $\left(s_{1},\dots ,s_{N}\right)\in\left\{1,\dots ,n\right\}^{N}$, then the total payoff to player $i$ is
\begin{linenomath}
\begin{align}\label{eq:symmetricMatrixPayoff}
u_{i}\left(s_{1},\dots ,s_{N}\right) &:= \sum_{j=1}^{N}\mathscr{E}_{ij}a_{s_{i}s_{j}} .
\end{align}
\end{linenomath}
If $\beta\geqslant 0$ is the intensity of selection, then the fitness of player $i$ is
\begin{linenomath}
\begin{align}
f_{\beta}\Big(u_{i}\left(s_{1},\dots ,s_{N}\right)\Big) &:= \exp\Big\{\beta u_{i}\left(s_{1},\dots ,s_{N}\right)\Big\} .
\end{align}
\end{linenomath}
In each time step, a player (say, player $i$) is chosen for reproduction with probability proportional to fitness. With probability $\varepsilon\geqslant 0$, the offspring of this player adopts a novel strategy uniformly at random from $\left\{A_{1},\dots ,A_{n}\right\}$; with probability $1-\varepsilon$, the offspring inherits the strategy of the parent. Next, another member of the population is chosen for death, with the probability of player $j$ dying proportional to $\mathscr{D}_{ij}$. The offspring then fills the vacancy created by the deceased neighbor, and the process repeats. $\mathscr{E}$ is called the ``interaction" graph since it governs payoffs based on encounters, and $\mathscr{D}$ is called the ``dispersal" graph since it is involved in strategy propagation.
\end{example}

\subsubsection{Heterogeneous evolutionary games}

We now explore the ways in which population structure and strategy mutations can introduce heterogeneity into an evolutionary process. Consider the Snowdrift Game with strategies $C$ (cooperate) and $D$ (defect) and payoff matrix
\begin{linenomath}
\begin{align}\label{snowdriftGame}
\bordermatrix{%
& C & D \cr
C &\ 5 & \ 3 \cr
D &\ 7 & \ 0 \cr
} .
\end{align}
\end{linenomath}
Suppose that $\mathscr{E}$ and $\mathscr{D}$ are both the (undirected, unweighted) Frucht graph (see Fig. \ref{fig:fruchtGraph}). If the selection intensity is $\beta =1$, then the fixation probability of a single cooperator in a population of defectors in a death-birth process depends on the initial location of the cooperator (Fig. \ref{fig:snowdriftBarGraphs}). Since the Frucht graph is regular (but not vertex-transitive), this example demonstrates that the Isothermal Theorem does not extend to frequency-dependent games. In particular, symmetric games on regular graphs can be heterogeneous, and regularity of the graph does not imply that the ``fixation probability of a randomly placed mutant" is well-defined. This dependence of the fixation probability on the initial location of the mutant is not specific to the Snowdrift Game or the death-birth update rule; one can show that it also holds for the Donation Game in place of the Snowdrift Game or the birth-death rule in place of the death-birth rule, for instance.

\begin{figure}
\begin{center}
\subfloat[]{\includegraphics[scale=0.45]{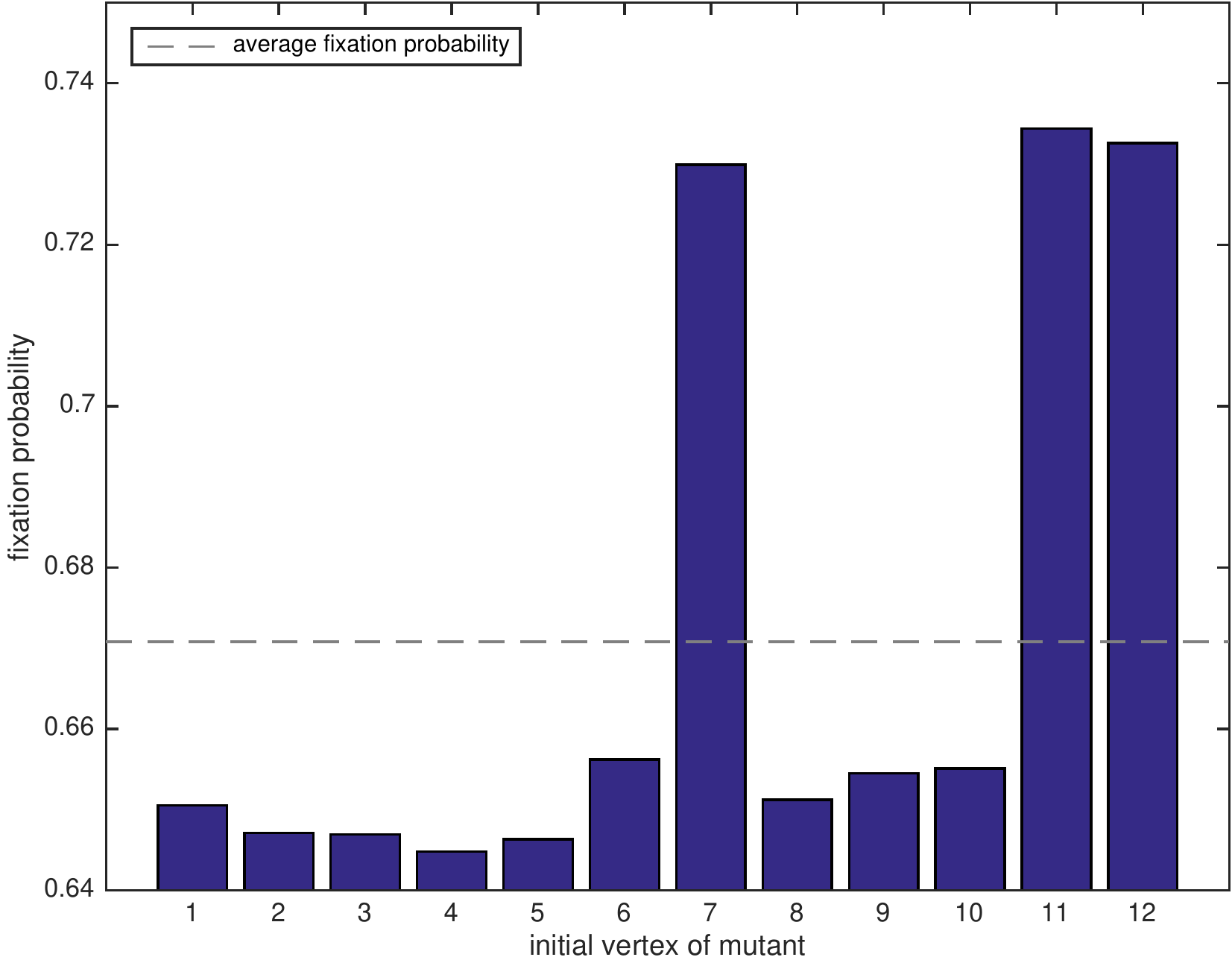}}
\quad\quad
\subfloat[]{\includegraphics[scale=0.45]{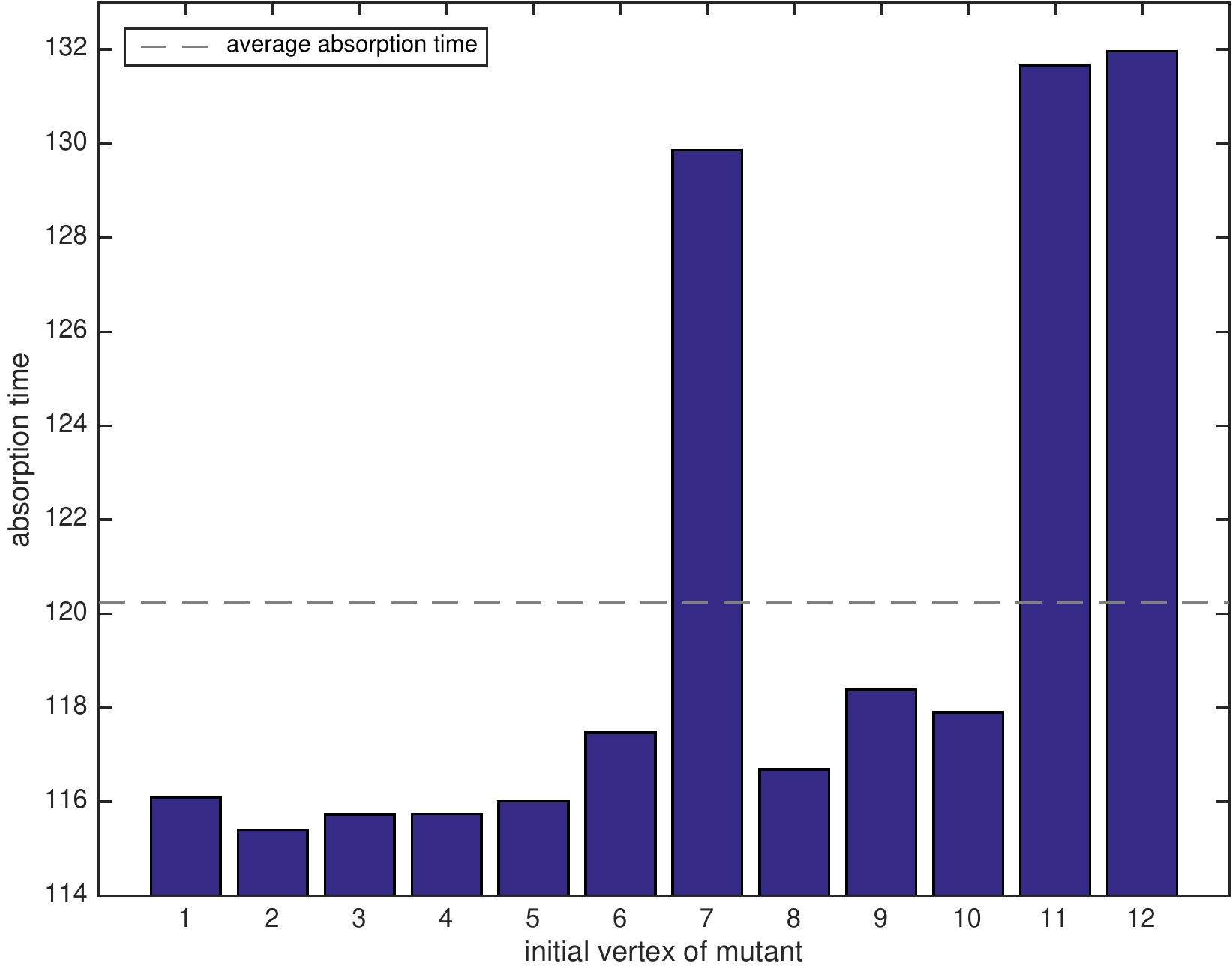}}
\end{center}
\caption{Fixation probability (A) and absorption time (B) versus initial vertex of mutant (cooperator) for a death-birth process on the Frucht graph. In both figures, the game is a Snowdrift Game whose payoffs are described by payoff matrix (\ref{snowdriftGame}), and the selection intensity is $\beta =1$. Unlike the Moran process, this process is frequency-dependent, and it is evident that \textit{both} fixation probabilities and absorption times (measured in number of updates) depend on the initial placement of the mutant. See Appendix C for details. Notably, the three vertices ($7$, $11$, and $12$) for which both fixation probability and absorption time are highest are the only vertices in the Frucht graph not appearing in a three-cycle.\label{fig:snowdriftBarGraphs}}
\end{figure}

With $\beta =1$, the selection intensity is fairly strong, which raises the question of whether or not these fixation probabilities still differ if selection is weak. In fact, our observation for this value of $\beta$ is not an anomaly: Suppose that $\mathfrak{s}$ and $\mathfrak{s}'$ are states (indicating some non-monomorphic initial configuration of strategies), and that $\mathfrak{s}_{i}$ and $\mathfrak{s}_{j}$ are monomorphic absorbing states (indicating states in which each player uses the same strategy). Let $\rho_{\mathfrak{s},i}$ denote the probability that state $i$ is reached after starting in state $\mathfrak{s}$, and let $\mathbf{t}_{\mathfrak{s}}$ denote the average number of updates required for the process to reach an absorbing state after starting in state $\mathfrak{s}$. Each of $\rho_{\mathfrak{s},i}$ and $\mathbf{t}_{\mathfrak{s}}$ may be viewed as functions of $\beta$, and we have the following result:
\begin{proposition}\label{prop:functionOfBeta}
Each of the equalities
\begin{linenomath}
\begin{subequations}
\begin{align}
\rho_{\mathfrak{s},i} &= \rho_{\mathfrak{s}',j} ; \\
\mathbf{t}_{\mathfrak{s}} &= \mathbf{t}_{\mathfrak{s}'}
\end{align}
\end{subequations}
\end{linenomath}
holds for either (i) \textit{every} $\beta\geqslant 0$ or (ii) at most finitely many $\beta\geqslant 0$. Thus, if one of these equalities fails to hold for even a single value of $\beta$, then it fails to hold for all sufficiently small $\beta >0$.
\end{proposition}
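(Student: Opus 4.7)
The plan is to express $\rho_{\mathfrak{s},i}$ and $\mathbf{t}_{\mathfrak{s}}$, viewed as functions of $\beta$, as ratios of \emph{real exponential polynomials}, by which I mean finite $\mathbb{R}$-linear combinations of maps $\beta\mapsto e^{\lambda\beta}$ with $\lambda\in\mathbb{R}$. Such a ratio, unless identically zero, can be shown to have at most finitely many real zeros, so applying this fact to the differences $\rho_{\mathfrak{s},i}(\beta)-\rho_{\mathfrak{s}',j}(\beta)$ and $\mathbf{t}_{\mathfrak{s}}(\beta)-\mathbf{t}_{\mathfrak{s}'}(\beta)$ yields the proposition immediately.

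First I would verify that every entry $\mathbf{T}_{\mathfrak{s},\mathfrak{s}'}(\beta)$ of the Markov-chain transition matrix lies in this class. Each fitness value $f_\beta(u_i(\mathfrak{s}))=e^{\beta u_i(\mathfrak{s})}$ is a single exponential in $\beta$, since $u_i(\mathfrak{s})$ is a fixed real number determined by the state and by the weighted graph $\mathscr{E}$ via \eqref{eq:symmetricMatrixPayoff}. Every standard update rule (birth-death, death-birth, pairwise comparison, imitation, Wright-Fisher) produces transition probabilities that are finite sums and products of ratios of such fitnesses, multiplied by fixed weights from $\mathscr{D}$; the class of ratios of exponential polynomials is closed under these operations. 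I would then invoke the standard identities $\rho_{\mathfrak{s},i}=[(I-Q)^{-1}R]_{\mathfrak{s},i}$ and $\mathbf{t}_{\mathfrak{s}}=[(I-Q)^{-1}\mathbf{1}]_{\mathfrak{s}}$, where $Q$ and $R$ are the transient-to-transient and transient-to-absorbing blocks of $\mathbf{T}(\beta)$. After clearing denominators entrywise, Cramer's rule writes $(I-Q)^{-1}$ as the adjugate divided by $\det(I-Q)$, both of which are exponential polynomials in $\beta$; since the chain remains a genuine absorbing chain for every $\beta\geqslant 0$, the denominator $\det(I-Q(\beta))$ never vanishes on $[0,\infty)$.

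The only substantive step, and the main obstacle, is the following classical lemma, which I would include with a short proof: a nonzero exponential polynomial $f(\beta)=\sum_{k=1}^{n}c_k e^{\lambda_k\beta}$ with distinct real exponents $\lambda_1<\cdots<\lambda_n$ has only finitely many real zeros. This follows by induction on $n$: after discarding zero coefficients so that $c_n\neq 0$, the function $e^{-\lambda_n\beta}f(\beta)$ tends to $c_n$ as $\beta\to+\infty$, so the zeros of $f$ are bounded above, and real-analyticity confines them to a finite set on any compact interval. Applying this lemma to the numerator of the rational difference $\rho_{\mathfrak{s},i}(\beta)-\rho_{\mathfrak{s}',j}(\beta)$, and likewise to that of $\mathbf{t}_{\mathfrak{s}}(\beta)-\mathbf{t}_{\mathfrak{s}'}(\beta)$, forces the zero set of each difference to be either all of $[0,\infty)$ or finite. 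The final sentence of the proposition is then a direct consequence: if the equality fails at any $\beta_0\geqslant 0$, the finitely many failures of the converse equality cannot accumulate at $0$, so the equality continues to fail on some deleted neighborhood of $0$.
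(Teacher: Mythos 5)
Your proof is correct and follows the same overall route as the paper's: both write $\rho_{\mathfrak{s},i}$ and $\mathbf{t}_{\mathfrak{s}}$ via Cramer's rule as ratios with a common nonvanishing denominator ($\det(I-Q(\beta))$ in your notation, $\det\mathcal{M}$ in the paper's), clear that denominator, and reduce the proposition to the statement that a certain function of $\beta$ is either identically zero or has finitely many zeros on $[0,\infty)$. Where you genuinely diverge is in the key zero-counting lemma, and your version is the stronger and, strictly speaking, the needed one. The paper assumes the update probabilities are rational in the fitnesses and that the payoff-to-fitness map is one of four standard choices, and then asserts that the relevant equalities reduce to ``polynomial equations in either $\beta$ or $\exp\left\{\beta\right\}$''; for the exponential map $f_{\beta}\left(\pi\right)=e^{\beta\pi}$ this is literally a polynomial in $e^{\beta}$ only when the payoff values are commensurable, since otherwise the various $e^{\beta\pi_{i}}$ are not integer powers of a single $e^{\beta}$. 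Your lemma --- that a nonzero combination $\sum_{k}c_{k}e^{\lambda_{k}\beta}$ with distinct real $\lambda_{k}$ has finitely many real zeros, proved by isolating the dominant exponential to bound the zeros above and invoking real-analyticity to rule out accumulation --- handles arbitrary real payoffs and thus quietly patches this small imprecision. Two minor remarks: the induction on $n$ you advertise is never actually used (the dominant-term argument plus analyticity already suffices), and your write-up treats only the exponential fitness map, whereas the paper's other admissible maps ($\beta\pi$, $1+\beta\pi$, $1-\beta+\beta\pi$) make every quantity an honest rational function of $\beta$, for which the ordinary polynomial version of the lemma applies, so nothing essential is lost.
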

For a proof of Proposition \ref{prop:functionOfBeta}, see Appendix A. This result allows one to conclude that if there are differences in fixation probabilities or times for large values of $\beta$ (where these differences are more apparent), then there are corresponding differences in the limit of weak selection.

Even if a symmetric game is played in a well-mixed population, heterogeneous strategy mutations may result in heterogeneity of the evolutionary process. Consider, for example, the pairwise comparison process \citep{szabo:PRE:1998,traulsen:JTB:2007} based on the symmetric Snowdrift Game, (\ref{snowdriftGame}), in a well-mixed population with $N=3$ players. We model this well-mixed population using a complete, undirected, unweighted graph of size $3$ for each of $\mathscr{E}$ and $\mathscr{D}$ (see Fig. \ref{fig:wellMixedThree}). For $i\in\left\{1,2,3\right\}$, let $\varepsilon_{i}\in\left[0,1\right]$ be the strategy-mutation (``exploration") rate for player $i$. These strategy mutations are incorporated into the process as follows: At each time step, a focal player (player $i$) is chosen uniformly at random to update his or her strategy. A neighbor (one of the two remaining players) is then chosen randomly as a model player. If $\beta$ is the selection intensity, $\pi_{\textrm{f}}$ is the payoff of the focal player, and $\pi_{\textrm{m}}$ is the payoff of the model player, then the focal player imitates the strategy of the model player with probability
\begin{linenomath}
\begin{align}
\frac{1-\varepsilon_{i}}{1+e^{-\beta\left(\pi_{\textrm{m}}-\pi_{\textrm{f}}\right)}} .
\end{align}
\end{linenomath}
and chooses to retain his or her strategy with probability
\begin{linenomath}
\begin{align}
\frac{1-\varepsilon_{i}}{1+e^{-\beta\left(\pi_{\textrm{f}}-\pi_{\textrm{m}}\right)}} .
\end{align}
\end{linenomath}
With probability $\varepsilon_{i}$, the focal player adopts a new strategy uniformly at random from the set $\left\{C,D\right\}$, irrespective of the current strategy. Provided at least one of $\varepsilon_{1}$, $\varepsilon_{2}$, and $\varepsilon_{3}$ is positive, the Markov chain on $\left\{C,D\right\}^{3}$ defined by this process is irreducible and has a unique stationary distribution, $\mu$. Let $\varepsilon_{1}=0.01$ and $\varepsilon_{2}=\varepsilon_{3}=0$. Since the mutation rate depends on the location, $i$, the strategy mutations are \textit{heterogeneous}. If the selection intensity is $\beta =1$, then a direct calculation (to four significant figures) gives
\begin{linenomath}
\begin{align}
\mu\left(C,D,D\right) &= 0.005812 \neq 0.0004897 = \mu\left(D,C,D\right) ,
\end{align}
\end{linenomath}
where $\mu\left(C,D,D\right)$ (resp. $\mu\left(D,C,D\right)$) is the mass $\mu$ places on the state $\left(C,D,D\right)$ (resp. $\left(D,C,D\right)$). Therefore, by Proposition \ref{prop:ergodicSymmetry} and Definition \ref{def:homogeneousEvolutionaryProcess}, this evolutionary process is \textit{not} homogeneous, despite the fact that the population is well-mixed and the game is symmetric. This result is not particularly surprising, but it clearly illustrates the effects of hetergeneous strategy-mutation rates on symmetries of the overall process.

\begin{figure}
\begin{center}
\subfloat[]{\includegraphics[scale=0.45]{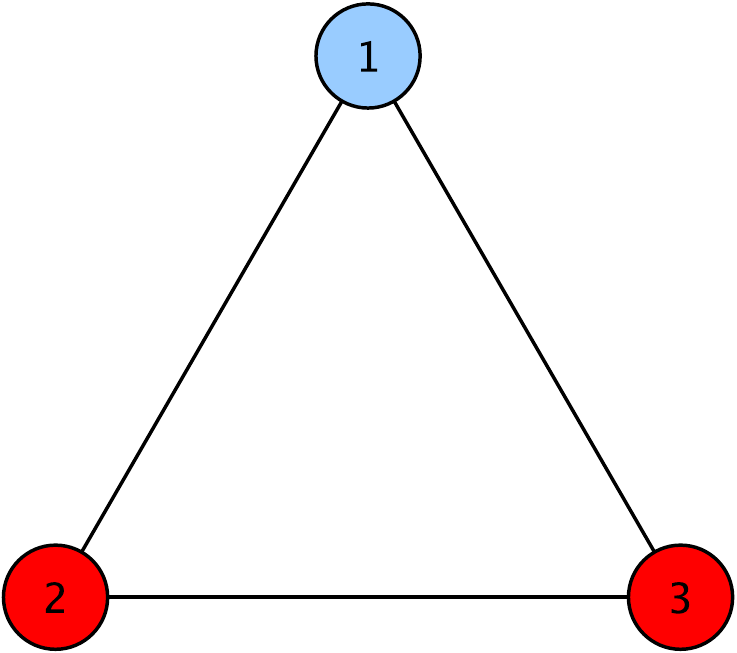}}
\quad\quad
\subfloat[]{\includegraphics[scale=0.45]{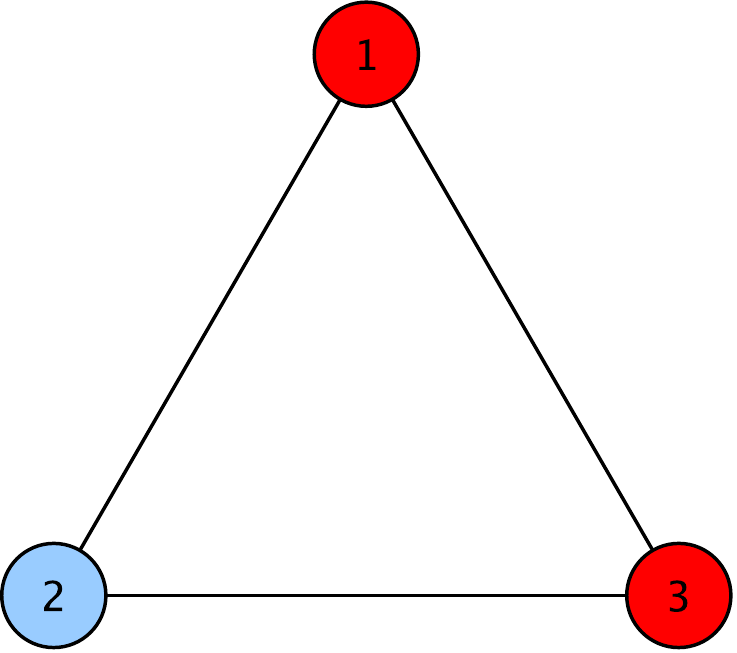}}
\end{center}
\caption{Two states consisting of a single cooperator (mutant) among defectors (the wild type) in a well-mixed population of size $N=3$. Despite the spatial symmetry of this population, an evolutionary game on this graph can be heterogeneous as a result of heterogeneous strategy mutations or asymmetric payoffs.\label{fig:wellMixedThree}}
\end{figure}

\subsubsection{Homogeneous evolutionary games}\label{subsubsec:homogeneousEvolutionaryGames}

The behavior of an evolutionary process sometimes depends heavily on the choice of update rule. As a result, a particular problem in evolutionary game theory is often stated (such as the evolution of cooperation) and subsequently explored separately for a number of different update rules. For example, consider the Donation Game (an instance of the Prisoner's Dilemma) in which cooperators pay a cost, $c$, in order to provide the opponent with a benefit, $b$. Defectors pay no costs and provide no benefits. On a large regular graph of degree $k$, \citet{ohtsuki:Nature:2006} show that selection favors cooperation in the death-birth process if $b/c>k$, but selection \textit{never} favors cooperation in the birth-death process. Therefore, the approach of exploring a problem in evolutionary game theory separately for several update rules has its merits. On the other hand, one might expect that high degrees of symmetry in the population structure, payoffs, and strategy mutations induce symmetries in an evolutionary game for a variety of update rules.

Before stating our main theorem for symmetric matrix games, we must first understand the basic components that make up an evolutionary game. Evolutionary games generally have two timescales: interactions and updates. In each (discrete) time step, every player in the population has a strategy, and this strategy profile determines the \textit{state} of the population. Neighbors then interact (quickly) and receive payoffs based on these strategies and the game(s) being played. The total payoff to a player determines his or her fitness. In the update step of the process, the strategies of the players are updated stochastically based on the fitness profile of the population, the population structure, and the strategy mutations. Popular examples of evolutionary update rules are birth-death, death-birth, imitation, pairwise comparison, and Wright-Fisher. Since interactions happen much more quickly than updates, there is a separation of timescales.

The most difficult part of an evolutionary game to describe in generality is the update step. If $S$ is the strategy set of the game and $N$ is the population size, then a state of the population is simply an element of $S^{N}$, i.e. a specification of a strategy for each member of the population. Implicit in the state space of the population being $S^{N}$ is an enumeration of the players. That is, if $\mathfrak{s}\in S^{N}$ is an $N$-tuple of strategies, then this profile indicates that player $i$ uses strategy $\mathfrak{s}_{i}$. For our purposes, we need only one property to be satisfied by the update rule of the process, which we state here as an axiom of an evolutionary game:
\begin{axiom}
The update rule of an evolutionary game is independent of the enumeration of the players.
\end{axiom}
\begin{remark}\label{axiomRemark}
As an example of what this axiom means, consider a death-birth process in which a player is selected uniformly at random for death and is replaced by the offspring of a neighbor. A neighbor is chosen for reproduction with probability proportional to fitness, and the offspring of this neighbor inherits the strategy of the parent with probability $1-\varepsilon$ and takes on a novel strategy with probability $\varepsilon$ for some $\varepsilon >0$. If all else is held constant (fitness, mutations, etc.), the fact that a player is referred to as the player at location $i$ is irrelevant: Let $\mathfrak{S}_{N}$ be the symmetric group on $N$ letters. If $\pi\in\mathfrak{S}_{N}$ is a permutation that relabels the locations of the players by sending $i$ to $\pi^{-1}\left(i\right)$, then the strategy of the player at location $\pi^{-1}\left(i\right)$ after the relabeling is the same as the strategy of the player at location $i$ before the relabeling. In particular, if $\mathfrak{s}\in S^{N}$ is the state of the population before the relabeling, then $\pi\left(\mathfrak{s}\right)\in S^{N}$ is the state of the population after the relabeling, where $\pi\left(\mathfrak{s}\right)_{i}=\mathfrak{s}_{\pi\left(i\right)}$. The probability that player $\pi^{-1}\left(i\right)$ is selected for death and replaced by the offspring of player $\pi^{-1}\left(j\right)$ after the relabeling is the same as the probability that player $i$ is selected for death and replaced by the offspring of player $j$ before the relabeling. Thus, for this death-birth process, the probability of transitioning between states $\mathfrak{s}$ and $\mathfrak{s}'$ before the relabeling is the same as the probability of transitioning between states $\pi\left(\mathfrak{s}\right)$ and $\pi\left(\mathfrak{s}'\right)$ after the relabeling. In this sense, a relabeling of the players induces an automorphism of the Markov chain defined by the process (in the sense of Definition \ref{def:symmetry}), and the axiom states that this phenomenon should hold for \textit{any} evolutionary update rule.
\end{remark}

In order to state our main result for symmetric games, we note that an evolutionary graph, $\Gamma$, in this setting consists of \textit{two} graphs: $\mathscr{E}$ and $\mathscr{D}$. We say that $\Gamma$ is regular if both $\mathscr{E}$ and $\mathscr{D}$ are regular. For vertex-transitivity of $\Gamma$ (resp. symmetry of $\Gamma$), we require slightly more than each $\mathscr{E}$ and $\mathscr{D}$ being vertex-transitive (resp. symmetric); we require them to be \textit{simultaneously} vertex-transitive (resp. symmetric). First of all, we need to define what an \textit{automorphism} of $\Gamma$ is. For $\pi\in\mathfrak{S}_{N}$, let $\pi\mathscr{E}$ be the graph defined by $\left(\pi\mathscr{E}\right)_{ij}:=\mathscr{E}_{\pi\left(i\right)\pi\left(j\right)}$ for each $i$ and $j$. Using this action, we define an automorphism of an evolutionary graph as follows:
\begin{definition}[Automorphism of an evolutionary graph]
An \textit{automorphism} of $\Gamma =\left(\mathscr{E},\mathscr{D}\right)$ is an action, $\pi\in\mathfrak{S}_{N}$, such that $\pi\mathscr{E}=\mathscr{E}$ and $\pi\mathscr{D}=\mathscr{D}$. We denote by $\textrm{Aut}\left(\Gamma\right)$ the set of automorphisms of $\Gamma$.
\end{definition}

We now have the definitions of vertex-transitive and symmetric evolutionary graphs:
\begin{definition}
$\Gamma =\left(\mathscr{E},\mathscr{D}\right)$ is \textit{vertex-transitive} if for each $i$ and $j$, there exists $\pi\in\textrm{Aut}\left(\Gamma\right)$ such that $\pi\left(i\right) =j$.
\end{definition}
\begin{definition}
$\Gamma =\left(\mathscr{E},\mathscr{D}\right)$ is \textit{symmetric} if $\mathscr{E}=\mathscr{D}$ and $\mathscr{E}$ is a symmetric graph.
\end{definition}

Finally, using the notion of an automorphism of $\Gamma$, we have our main result:
\begin{theorem}\label{thm:mainSymmetricTheorem}
Consider an evolutionary matrix game on a graph, $\Gamma =\left(\mathscr{E},\mathscr{D}\right)$, with symmetric payoffs and homogeneous strategy mutations. If $\pi\in\textrm{Aut}\left(\Gamma\right)$, then the states with a single mutant at vertex $i$ and $\pi\left(i\right)$, respectively, in an otherwise-monomorphic population, are evolutionarily equivalent. That is, in the notation of Definition \ref{def:homogeneousEvolutionaryProcess}, the states $\mathfrak{s}_{\left(s',i\right) ,s}$ and $\mathfrak{s}_{\left(s',\pi\left(i\right)\right) ,s}$ are evolutionarily equivalent for each $s,s'\in S$.
\end{theorem}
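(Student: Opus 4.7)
The plan is to produce, for each $\pi\in\textrm{Aut}(\Gamma)$, an explicit automorphism $\phi$ of the Markov chain $X$ that sends $\mathfrak{s}_{(s',i),s}$ to $\mathfrak{s}_{(s',\pi(i)),s}$ and that preserves every stationary distribution of $X$; these are precisely clauses (i)--(ii) of Definition \ref{def:evolutionaryEquivalence}. Define $\phi:S^{N}\to S^{N}$ by $\phi(\mathfrak{s})_{j}:=\mathfrak{s}_{\pi^{-1}(j)}$, i.e.\ the action of $\pi^{-1}$ on strategy profiles. A one-line check gives $\phi(\mathfrak{s}_{(s',i),s})=\mathfrak{s}_{(s',\pi(i)),s}$, settling clause (i) as soon as we know $\phi$ belongs to $\textrm{Aut}(X)$.

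The substantive step is showing $\phi$ preserves the transition matrix. The Axiom asserts that transition probabilities depend on the player labels only through the fitness profile, the graphs $\mathscr{E}$ and $\mathscr{D}$, and the mutation rates, in a manifestly relabeling-equivariant way (cf.\ the Remark after the Axiom); it therefore suffices to verify that each of these three inputs transforms correctly under $\phi$. Symmetric payoffs together with $\pi\mathscr{E}=\mathscr{E}$ give the equivariance of fitness: expanding $u_{i}(\phi(\mathfrak{s}))=\sum_{j}\mathscr{E}_{ij}a_{\mathfrak{s}_{\pi^{-1}(i)}\mathfrak{s}_{\pi^{-1}(j)}}$, reindexing via $k=\pi^{-1}(i)$ and $\ell=\pi^{-1}(j)$, and invoking $\mathscr{E}_{\pi(k)\pi(\ell)}=\mathscr{E}_{k\ell}$ (which is exactly $\pi\mathscr{E}=\mathscr{E}$) yields $u_{i}(\phi(\mathfrak{s}))=u_{\pi^{-1}(i)}(\mathfrak{s})$, from which the monotone $f_{\beta}$ transports the relation to fitness. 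The dispersal equivariance $\pi^{-1}\mathscr{D}=\mathscr{D}$ is immediate from $\pi\in\textrm{Aut}(\Gamma)$, and homogeneous mutation rates are by definition invariant under any permutation of labels. Feeding these three facts into the Axiom yields $\mathbf{T}_{\mathfrak{s},\mathfrak{s}'}=\mathbf{T}_{\phi(\mathfrak{s}),\phi(\mathfrak{s}')}$ for all $\mathfrak{s},\mathfrak{s}'\in S^{N}$, i.e.\ $\phi\in\textrm{Aut}(X)$.

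For clause (ii), if the process is irreducible (as under positive homogeneous mutation rates), Proposition \ref{prop:ergodicSymmetry} gives the stationary-distribution invariance automatically. Otherwise, the absorbing states are the monomorphic profiles $(s,\ldots,s)$, each of which is manifestly fixed by $\phi$ since all of its coordinates coincide; since every stationary distribution of an absorbing chain is supported on its absorbing states, $\phi(\mu)=\mu$ for every stationary $\mu$.

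The hard part is not any of the individual computations above---each is elementary---but rather the invocation of the Axiom itself: one has to argue that for every update rule meant to be in scope (birth-death, death-birth, imitation, pairwise comparison, Wright-Fisher, and so on), the transition law really does factor through the triple (fitness, dispersal, mutations) in a manifestly permutation-equivariant manner, so that the three equivariances established above truly suffice. Once that factorization is granted, the remainder of the proof is bookkeeping.
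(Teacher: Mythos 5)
Your proposal is correct and follows essentially the same route as the paper's own proof: define the relabeling action $\phi(\mathfrak{s})_{j}=\mathfrak{s}_{\pi^{-1}(j)}$, use symmetry of payoffs, $\pi\mathscr{E}=\mathscr{E}$, $\pi\mathscr{D}=\mathscr{D}$, and homogeneity of mutations to see that $\phi$ is just a relabeling of players, invoke the Axiom to get $\phi\in\textrm{Aut}(X)$, and note that monomorphic states are fixed. You merely spell out the fitness-equivariance computation and the stationary-distribution step (clause (ii)) more explicitly than the paper does, which is a welcome but not substantively different elaboration.
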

The proof of Theorem \ref{thm:mainSymmetricTheorem} may be found in Appendix B. The proof relies on the observation that the hypotheses of the theorem imply that any two states consisting of a single $A$-player in a population of $B$-players can be obtained from one another by relabeling the players. Thus, in light of the argument in Remark \ref{axiomRemark}, relabeling the players induces an automorphism on the Markov chain defined by the evolutionary game. Since any relabeling of the players leaves the monomorphic states fixed, there is an evolutionary equivalence between any two such states in the sense of Definition \ref{def:evolutionaryEquivalence}. Note that this theorem makes no restrictions on the selection strength or the update rule.
\begin{corollary}\label{cor:mainSymmetricCorollary}
An evolutionary game on a vertex-transitive graph with symmetric payoffs and homogeneous strategy mutations is itself homogeneous.
\end{corollary}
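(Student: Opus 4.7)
The plan is to deduce the corollary directly from Theorem \ref{thm:mainSymmetricTheorem} by unpacking the definitions. To verify Definition \ref{def:homogeneousEvolutionaryProcess}, I need to show that for every pair of strategies $s,s'\in S$ and every pair of indices $i,j\in\{1,\dots,N\}$, the single-mutant states $\mathfrak{s}_{\left(s',i\right),s}$ and $\mathfrak{s}_{\left(s',j\right),s}$ are evolutionarily equivalent. So I would fix arbitrary such $s,s',i,j$ and aim to produce the required equivalence.

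The bridge from vertex-transitivity of $\Gamma=(\mathscr{E},\mathscr{D})$ to evolutionary equivalence is exactly Theorem \ref{thm:mainSymmetricTheorem}. By the definition of vertex-transitivity for an evolutionary graph, there exists $\pi\in\textrm{Aut}\left(\Gamma\right)$ with $\pi(i)=j$. Since the corollary assumes symmetric payoffs and homogeneous strategy mutations, the hypotheses of Theorem \ref{thm:mainSymmetricTheorem} are in force, and applying that theorem to this particular $\pi$ yields that $\mathfrak{s}_{\left(s',i\right),s}$ and $\mathfrak{s}_{\left(s',\pi(i)\right),s}=\mathfrak{s}_{\left(s',j\right),s}$ are evolutionarily equivalent. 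As $s,s',i,j$ were arbitrary, Definition \ref{def:homogeneousEvolutionaryProcess} is satisfied and the evolutionary process is homogeneous.

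There is no real obstacle here; the corollary is essentially a reformulation of Theorem \ref{thm:mainSymmetricTheorem} via the defining property of vertex-transitivity, and the proof reduces to composing an existence statement (the automorphism $\pi$ with $\pi(i)=j$ guaranteed by vertex-transitivity) with the conclusion of the theorem. The only mildly subtle point worth flagging explicitly in the write-up is that all the hypotheses of the theorem (symmetric payoffs, homogeneous mutations, and $\pi\in\textrm{Aut}\left(\Gamma\right)$) are either assumed in the corollary or supplied by vertex-transitivity, so nothing further needs to be checked.
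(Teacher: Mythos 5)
Your proof is correct and matches the paper's (implicit) argument exactly: the corollary is stated there as an immediate consequence of Theorem \ref{thm:mainSymmetricTheorem}, obtained precisely by invoking vertex-transitivity to supply an automorphism $\pi$ with $\pi(i)=j$ and then applying the theorem. Nothing is missing.
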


\begin{remark}
By Theorem \ref{thm:mainSymmetricTheorem}, two mutants appearing on a graph might define evolutionarily equivalent states even if the graph is not vertex-transitive. For example, the Tietze graph \citep[see][]{bondy:S:2008}, like the Frucht graph, has $12$ vertices and is regular but not vertex-transitive. However, unlike the Frucht graph, the Tietze graph has some nontrivial automorphisms. By Theorem \ref{thm:mainSymmetricTheorem}, any two vertices in the Tietze graph that are related by an automorphism have the property that the two corresponding single-mutant states are indistinguishable. An example of two evolutionarily equivalent states on this graph is given in Fig. \ref{fig:tietze}. In Appendix C, for the Snowdrift Game with death-birth updating, we give the fixation probabilities and absorption times for all configurations of a single cooperator among defectors, which further illustrates the effects of graph symmetries on an evolutionary process.
\end{remark}

\begin{figure}
\begin{center}
\subfloat[]{\includegraphics[scale=0.4]{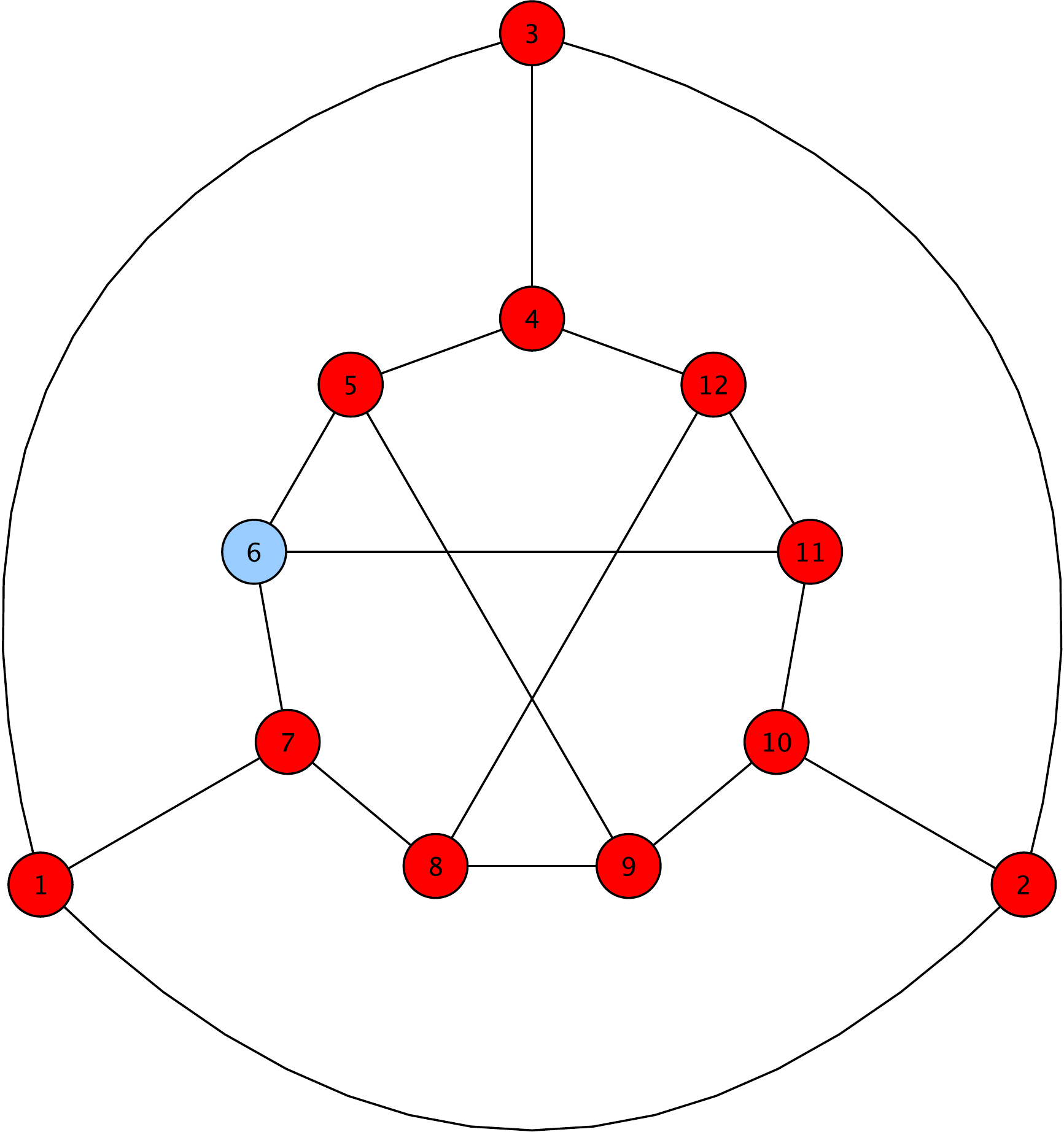}}
\quad\quad
\subfloat[]{\includegraphics[scale=0.4]{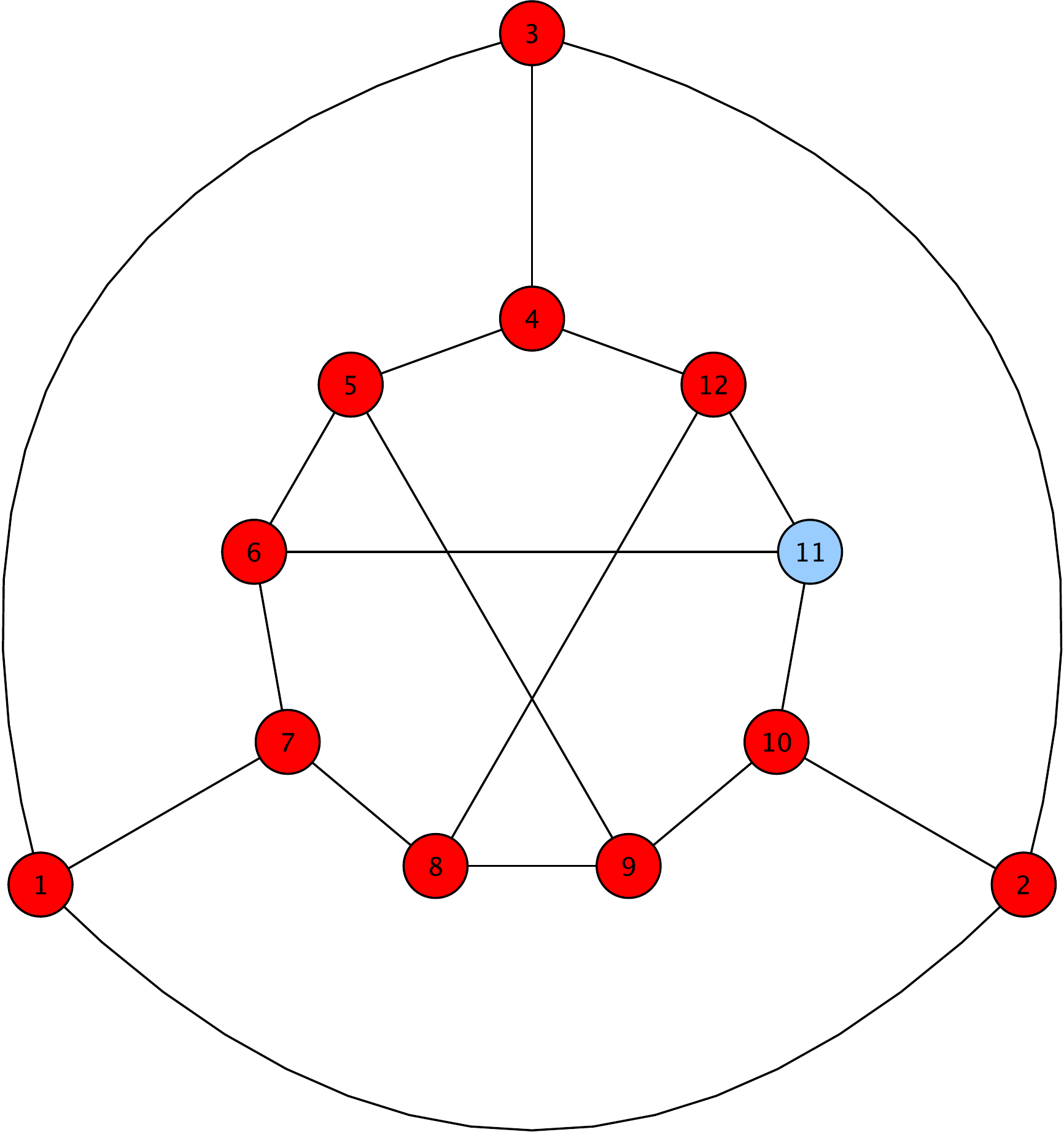}}
\end{center}
\caption{The Tietze graph with two different initial configurations. Like the Frucht graph, the Tietze graph is regular of degree $k=3$ (but not vertex-transitive) with $N=12$ vertices. Unlike the Frucht graph, the Tietze graph possesses nontrivial automorphisms. In (A), a cooperator is at vertex $6$ and all other players are defectors. In (B), a cooperator is at vertex $11$ and, again, the other players are defectors. Despite the fact that the Tietze graph is not vertex-transitive, the single-mutant states defined by (A) and (B) are evolutionarily equivalent. Graphically, this result is clear since one obtains (A) from (B) by flipping the graph (i.e. applying an automorphism), and such a difference between the two states does not affect fixation probabilities, times, etc. However, it is not true that any two single-mutant states are evolutionarily equivalent. For example, in the Snowdrift Game with $\beta =0.1$ and death-birth updating, the single-mutant state with a cooperator at vertex $1$ (resp. vertex $6$) has a fixation probability of $0.3777$ (resp. $0.4186$). Therefore, the two single-mutant states with cooperators at vertices $1$ and $6$, respectively, are not evolutionarily equivalent, so this process is not homogeneous.\label{fig:tietze}}
\end{figure}

\begin{remark}\label{remark:transitiveSymmetric}
For a given population size, $N$, and network degree, $k$, there may be many vertex-transitive graphs of size $N$ with degree $k$. For each such graph, the fixation probability of a randomly occuring mutant is independent of where on the graph it occurs by Theorem \ref{thm:mainSymmetricTheorem}. However, this fixation probability depends on more than just $N$ and $k$; it also depends on the configuration of the network. For example, Fig. \ref{fig:transitiveSymmetric} gives two vertex-transitive graphs of size $N=6$ and degree $k=3$. As an illustration, consider the Snowdrift Game (with payoff matrix (\ref{snowdriftGame})) on these graphs with birth-death updating. If the selection intensity is $\beta =0.1$, then the fixation probability of a single cooperator in a population of defectors is $0.1632$ in (A) and $0.1722$ in (B) (both rounded to four significant figures). These two fixation probabilities differ for all but finitely many $\beta\geqslant 0$ by Proposition \ref{prop:functionOfBeta}.
\end{remark}

\begin{figure}
\begin{center}
\subfloat[]{\includegraphics[scale=0.5]{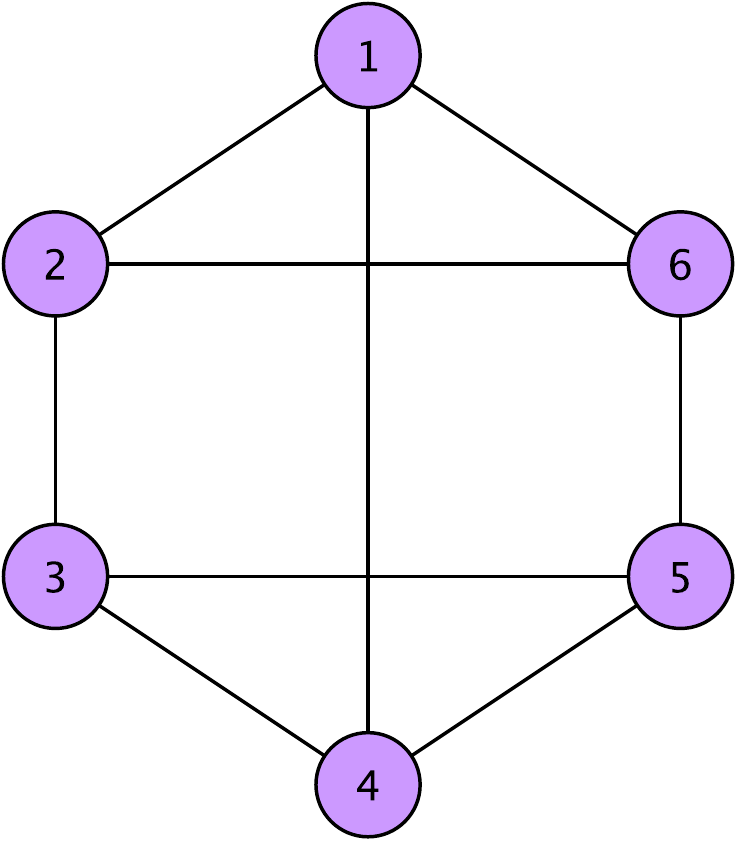}}
\quad\quad
\subfloat[]{\includegraphics[scale=0.5]{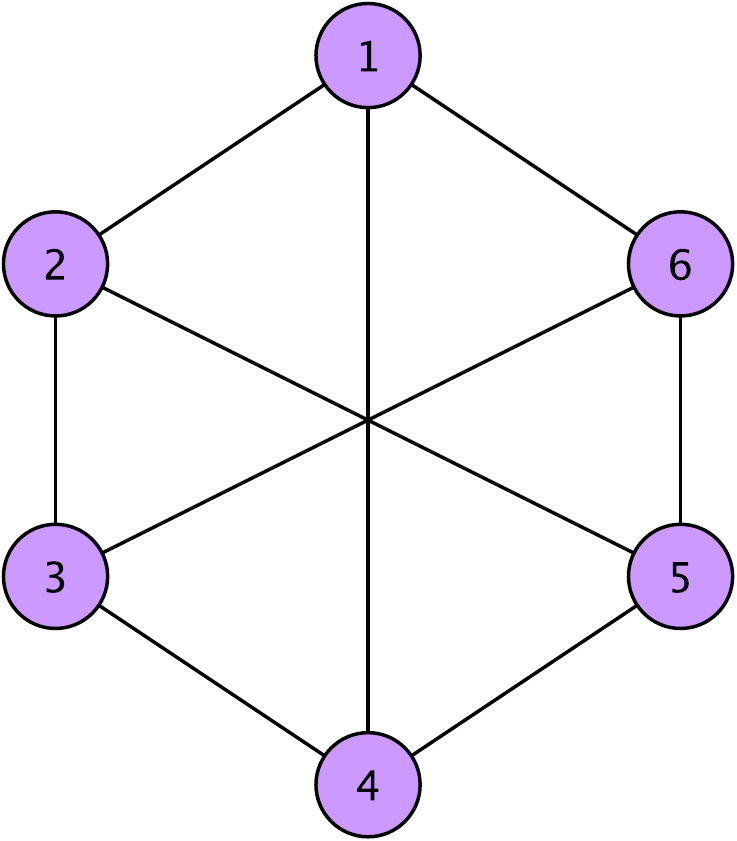}}
\end{center}
\caption{Undirected, unweighted, vertex-transitive graphs of degree $k=3$ with $N=6$ vertices. (B) is a symmetric (arc-transitive) graph and (A) is not.\label{fig:transitiveSymmetric}}
\end{figure}

Until this point, our focus has been on states consisting of just a single mutant in an otherwise-monomorphic population. One could also inquire as to when any two states consisting of two (or three, four, etc.) mutants are evolutionarily equivalent. It turns out that that the answer to this question is simple: in general, the population must be well-mixed in order for any two $m$-mutant states to be evolutionarily equivalent if $m>1$. The proof that this equivalence holds in well-mixed populations follows from the argument given to establish Theorem \ref{thm:mainSymmetricTheorem} (see Appendix B). On the other hand, if the population is not well-mixed, then one can find a pair of states with the first state consisting of two mutants on neighboring vertices and the second state consisting of two mutants on non-neighboring vertices. In general, the mutant type will have different fixation probabilities in these two states. For example, in the Snowdrift Game on the graph of Fig. \ref{fig:transitiveSymmetric}(B), consider the two states, $\mathfrak{s}$ and $\mathfrak{s}'$, where $\mathfrak{s}$ consists of cooperators on vertices $1$ and $2$ only and $\mathfrak{s}'$ consists of cooperators on vertices $1$ and $3$ only. If $\beta =0.1$, then the fixation probability of cooperators under death-birth updating when starting at $\mathfrak{s}$ (resp. $\mathfrak{s}'$) is $0.3126$ (resp. $0.2607$). Therefore, despite the arc-transitivity of this graph, it is not true that any two states consisting of exactly two mutants are evolutionarily equivalent. Only in well-mixed populations are we guaranteed that any two such states are equivalent.

\subsection{Asymmetric games}\label{subsec:asymmetricGames}

One particular form of payoff asymmetry appearing in evolutionary game theory is \textit{ecological asymmetry} \citep{mcavoy:PLOSCB:2015}. Ecological asymmetry can arise as a result of an uneven distribution of resources. For example, in the Donation Game, a cooperator at location $i$ might provide a benefit to his or her opponent based on some resource derived from the environment. Both this resource and the cost of donating it could depend on $i$, which means that different players have different payoff matrices. These payoff matrices depend on both the location of the focal player and the locations of the opponents. Thus, payoffs for a player at location $i$ against an opponent at location $j$ in an $n$-strategy ``bimatrix" game \citep{hofbauer:JMB:1996,ohtsuki:JTB:2010,mcavoy:PLOSCB:2015} are given by the asymmetric payoff matrix
\begin{linenomath}
\begin{align}
\mathbf{M}^{ij} = \bordermatrix{%
 & A_{1} & A_{2} & \cdots & A_{n} \cr
A_{1} &\ a_{11}^{ij}, a_{11}^{ji} & \ a_{12}^{ij}, a_{21}^{ji} & \ \cdots & \ a_{1n}^{ij}, a_{n1}^{ji} \cr
A_{2} &\ a_{21}^{ij}, a_{12}^{ji} & \ a_{22}^{ij}, a_{22}^{ji} & \ \cdots & \ a_{2n}^{ij}, a_{n2}^{ji} \cr
\ \vdots &\ \vdots & \ \vdots & \ \ddots & \ \vdots \cr
A_{n} &\ a_{n1}^{ij}, a_{1n}^{ji} & \ a_{n2}^{ij}, a_{2n}^{ji} & \ \cdots & \ a_{nn}^{ij}, a_{nn}^{ji} \cr
} .
\end{align}
\end{linenomath}
Similar to Eq. (\ref{eq:symmetricMatrixPayoff}), the total payoff to player $i$ for strategy profile $\left(s_{1},\dots ,s_{N}\right)\in\left\{1,\dots ,n\right\}^{N}$ is
\begin{linenomath}
\begin{align}\label{eq:asymmetricMatrixPayoff}
u_{i}\left(s_{1},\dots ,s_{N}\right) &:= \sum_{j=1}^{N}\mathscr{E}_{ij}a_{s_{i}s_{j}}^{ij} .
\end{align}
\end{linenomath}

We saw in \S\ref{subsec:frequencyDependentGames} an example of a heterogeneous evolutionary game in a well-mixed population with symmetric payoffs. Rather than looking at a symmetric game with heterogeneous strategy mutations, we now look at an asymmetric game with homogeneous strategy mutations. Consider the ecologically asymmetric Donation Game on the graph of Fig. \ref{fig:wellMixedThree} (both $\mathscr{E}$ and $\mathscr{D}$) with a death-birth update rule. In this asymmetric Donation Game, a cooperator at location $i$ donates $b_{i}$ at a cost of $c_{i}$; defectors donate nothing and incur no costs. If $\beta =0.1$, $b_{1}=b_{2}=b_{3}=4$, $c_{1}=1$, and $c_{2}=c_{3}=3$, then the fixation probability of a single cooperator at location $1$ (see Fig. \ref{fig:wellMixedThree}(A)) is $0.2232$, while the fixation probability of a single cooperator at location $2$ (see Fig. \ref{fig:wellMixedThree}(B)) is $0.1842$ (both rounded to four significant figures). Therefore, even in a well-mixed population with homogeneous strategy mutations (none, in this case), asymmetric payoffs can prevent an evolutionary game from being homogeneous.

Asymmetric matrix games in large populations reduce to symmetric games if selection is weak \citep{mcavoy:PLOSCB:2015}. In the limit of weak selection, \citet{mcavoy:JMB:2015} establish a selection condition for asymmetric matrix games in finite graph-structured populations that extends the condition (for symmetric games) of \citet{tarnita:PNAS:2011}:
\begin{theorem}[\citet{mcavoy:JMB:2015}]\label{asymmetricSelectionCondition}
There exists a set of \textit{structure coefficients}, $\left\{\sigma_{1}^{ij},\sigma_{2}^{ij},\sigma_{3}^{ij}\right\}_{i,j}$, independent of payoffs, such that weak selection favors strategy $r\in\left\{1,\dots ,n\right\}$ if and only if
\begin{linenomath}
\begin{align}\label{asymmetricNetwork}
\sum_{i,j=1}^{N}\mathscr{E}_{ij}\left( \sigma_{1}^{ij}\left( a_{rr}^{ij}-\overline{a}_{\ast\ast}^{ij} \right) + \sigma_{2}^{ij}\left(\overline{a}_{r\ast}^{ij}-\overline{a}_{\ast r}^{ij}\right) + \sigma_{3}^{ij}\left(\overline{a}_{r\ast}^{ij}-\overline{a}^{ij}\right) \right) &> 0 ,
\end{align}
\end{linenomath}
where $\overline{a}_{\ast\ast}^{ij}=\frac{1}{n}\sum_{s=1}^{n}a_{ss}^{ij}$, $\overline{a}_{r\ast}^{ij}=\frac{1}{n}\sum_{s=1}^{n}a_{rs}^{ij}$, $\overline{a}_{\ast r}^{ij}=\frac{1}{n}\sum_{s=1}^{n}a_{sr}^{ij}$, and $\overline{a}^{ij}=\frac{1}{n^{2}}\sum_{s,t=1}^{n}a_{st}^{ij}$.
\end{theorem}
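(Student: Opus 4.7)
My plan is to prove Theorem \ref{asymmetricSelectionCondition} by a weak-selection expansion followed by a symmetry-and-orbit argument on the first-order coefficients. First I would fix a scalar success measure $\Phi_{r}(\beta)$ that operationalizes ``weak selection favors $r$''--for instance the expected frequency of $r$ under a mutation-selection process with a uniform mutation kernel on $\{A_{1},\ldots,A_{n}\}$, as in \citet{mcavoy:JMB:2015}. Because payoffs enter the transition probabilities only through exponential fitnesses $\exp\{\beta u_{i}\}$, the transition matrix, and hence $\Phi_{r}$, is analytic in $\beta$. At $\beta=0$ the process is strategy-blind, so by the $\mathfrak{S}_{n}$-symmetry of the neutral dynamics one has $\Phi_{r}(0)=1/n$; weak selection therefore favors $r$ iff $\Phi_{r}'(0)>0$.

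Differentiating at $\beta=0$ and using Eq. (\ref{eq:asymmetricMatrixPayoff}), the derivative takes the form
\begin{linenomath}
\begin{align*}
\Phi_{r}'(0) = \sum_{i,j=1}^{N}\mathscr{E}_{ij}\sum_{s,t=1}^{n}\gamma_{st}^{ij}(r)\,a_{st}^{ij},
\end{align*}
\end{linenomath}
where each coefficient $\gamma_{st}^{ij}(r)$ depends only on the neutral process ($\mathscr{D}$, the update rule, and the homogeneous mutation rates) and not on the payoffs. This is the payoff-independence the theorem demands; what remains is to show that for each edge $(i,j)$ the $n^{2}$-array $(\gamma_{st}^{ij}(r))_{s,t}$ is encoded in three numbers $\sigma_{1}^{ij},\sigma_{2}^{ij},\sigma_{3}^{ij}$ that do not depend on $r$.

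For the reduction I would combine two inputs. The first is strategy-permutation equivariance: for any $\pi\in\mathfrak{S}_{n}$, relabeling every payoff matrix by $a_{st}^{ij}\mapsto a_{\pi(s)\pi(t)}^{ij}$ carries the weak-selection question for $r$ into the same question for $\pi^{-1}(r)$, which forces $\gamma_{\pi(s)\pi(t)}^{ij}(\pi(r))=\gamma_{st}^{ij}(r)$. Thus $\gamma_{st}^{ij}(r)$ depends only on the orbit of the triple $(s,t,r)$ under the diagonal action of $\mathfrak{S}_{n}$, and there are exactly five such orbits ($s{=}t{=}r$; $s{=}r{\ne}t$; $t{=}r{\ne}s$; $s{=}t{\ne}r$; $s,t,r$ all distinct), giving five parameters per edge. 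The second input is the conservation law $\sum_{r}\Phi_{r}(\beta)\equiv 1$, which implies $\sum_{r}\gamma_{st}^{ij}(r)=0$ for every $(s,t,i,j)$. Splitting this identity into the cases $s=t$ and $s\ne t$ yields exactly two independent linear relations among the five orbit parameters, leaving three free parameters. Solving the two relations and regrouping $\sum_{s,t}$ into the combinations $a_{rr}^{ij}-\overline{a}_{\ast\ast}^{ij}$, $\overline{a}_{r\ast}^{ij}-\overline{a}_{\ast r}^{ij}$, and $\overline{a}_{r\ast}^{ij}-\overline{a}^{ij}$ identifies the three free parameters with $\sigma_{1}^{ij},\sigma_{2}^{ij},\sigma_{3}^{ij}$, which are $r$-independent by construction.

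The step I expect to be the main obstacle is actually the very first one: picking a scalar success measure $\Phi_{r}$ for which the neutral $\mathfrak{S}_{n}$-symmetry really is a relabeling-of-strategies symmetry of the full Markov chain (not merely of the payoff matrices), and for which analyticity in $\beta$ holds. Both features require that the mutation kernel be uniform on $\{A_{1},\ldots,A_{n}\}$ and location-free--which is where the ``homogeneous strategy mutations'' hypothesis enters--and that the update rule treat players interchangeably, which is the axiom of \S\ref{subsubsec:homogeneousEvolutionaryGames}. Once $\Phi_{r}$ is fixed, the rest of the proof is the orbit count and the two-relation linear algebra sketched above, and the theorem's inequality (\ref{asymmetricNetwork}) is simply $\Phi_{r}'(0)>0$ rewritten in the structure-coefficient basis.
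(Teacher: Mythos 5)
The first thing to note is that the paper does not prove this statement: Theorem \ref{asymmetricSelectionCondition} is imported verbatim from \citet{mcavoy:JMB:2015} (the only in-paper remark about its proof is that it ``extends immediately'' to directed, weighted, distinct $\mathscr{E}$ and $\mathscr{D}$). So there is no internal proof to compare against, and a correct evaluation of your proposal has to be made against the cited source and against what Appendix B reveals about that derivation. On that basis your sketch is essentially the right argument, and it matches the lineage of the actual proof: your $\Phi_{r}$ is exactly the average abundance $F_{r}=\mu\cdot\psi_{r}$ used in Appendix B, your coefficients $\gamma_{st}^{ij}(r)$ are (up to the factor $\mathscr{E}_{ij}$) the partial derivatives $\partial F_{r}/\partial a_{st}^{ij}\vert_{\mathbf{a}=0}$ that the paper says determine the $\sigma_{k}^{ij}$, and the orbit-plus-conservation reduction from $n^{2}$ coefficients per edge to three is precisely the \citet{tarnita:PNAS:2011} argument that \citet{mcavoy:JMB:2015} localizes edge by edge. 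Your identification of where the hypotheses enter (uniform, location-free mutation kernel for the $\mathfrak{S}_{n}$-equivariance of the neutral chain; smooth dependence of $\mathbf{T}$ on $\beta\mathbf{a}$ for the first-order expansion) is also correct.

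Two details are asserted rather than checked and deserve a line each in a full write-up. First, after imposing the two conservation relations $\gamma_{1}+(n-1)\gamma_{4}=0$ and $\gamma_{2}+\gamma_{3}+(n-2)\gamma_{5}=0$, you must verify that the three displayed functionals $a_{rr}^{ij}-\overline{a}_{\ast\ast}^{ij}$, $\overline{a}_{r\ast}^{ij}-\overline{a}_{\ast r}^{ij}$, $\overline{a}_{r\ast}^{ij}-\overline{a}^{ij}$ actually satisfy these relations and span the resulting three-dimensional space; they do (the third is the only one with a nonzero coefficient on the ``all distinct'' orbit, which is what makes it independent of the first two for $n\geqslant 3$). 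Second, your orbit count of five presupposes $n\geqslant 3$; for $n=2$ there are only four orbits and two free parameters, the three functionals become dependent, and the theorem survives only because it asserts existence rather than uniqueness of the $\sigma_{k}^{ij}$ --- this is consistent with the paper's separate two-coefficient form (\ref{asymmetricNetworkTwo}). Neither point is a conceptual gap.
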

Strictly speaking, Theorem \ref{asymmetricSelectionCondition} is established for $\mathscr{E}$ and $\mathscr{D}$ undirected, unweighted, and satisfying $\mathscr{E}=\mathscr{D}$. However, the proof of Theorem \ref{asymmetricSelectionCondition} extends immediately to the case with $\mathscr{E}$ and $\mathscr{D}$ directed, weighted, and possibly distinct, so we make no restrictive assumptions on $\mathscr{E}$ and $\mathscr{D}$ in the statement of this theorem here. In the simpler case $n=2$, condition (\ref{asymmetricNetwork}) takes the form
\begin{linenomath}
\begin{align}\label{asymmetricNetworkTwo}
\sum_{i,j=1}^{N}\mathscr{E}_{ij}\left( \tau_{1}^{ij}\left(a_{11}^{ij}-a_{22}^{ij}\right) + \tau_{2}^{ij}\left(a_{12}^{ij}-a_{21}^{ij}\right) \right) &> 0
\end{align}
\end{linenomath}
for some collection $\left\{\tau_{1}^{ij},\tau_{2}^{ij}\right\}_{i,j}$. For the death-birth process with $\mathscr{E}$ and $\mathscr{D}$ the graph of Fig. \ref{fig:transitiveSymmetric}(A), we calculate \textit{exact} values for all of these structure coefficients (see Appendix C). In particular, we find that $\tau_{1}^{12}=707905/9315552$ and $\tau_{1}^{14}=16291/194074$, so vertex-transitivity does not guarantee that the structure coefficients are independent of $i$ and $j$. For the same process on the graph in Fig. \ref{fig:transitiveSymmetric}(B), we find that $\tau_{1}^{ij}=\tau_{2}^{ij}=2189/27728$ for each $i$ and $j$, so these coefficients do not depend on $i$ and $j$. (In general, even for well-mixed populations, $\tau_{1}$ and $\tau_{2}$ need not be the same; for the same process studied here but on the graph of Fig. \ref{fig:wellMixedThree}, $\tau_{1}^{ij}=33/1616$ and $\tau_{2}^{ij}=99/1616$ for each $i$ and $j$.) This lack of dependence on $i$ and $j$ is due to the fact that the graph of Fig. \ref{fig:transitiveSymmetric}(B) is \textit{symmetric}, and it turns out to be a special case of a more general result:
\begin{theorem}\label{thm:mainAsymmetricTheorem}
Suppose that an asymmetric matrix game with homogeneous strategy mutations is played on an evolutionary graph, $\Gamma =\left(\mathscr{E},\mathscr{D}\right)$. For each $\pi\in\textrm{Aut}\left(\Gamma\right)$, $k\in\left\{1,2,3\right\}$, and $i,j\in\left\{1,\dots ,N\right\}$,
\begin{linenomath}
\begin{align}
\sigma_{k}^{ij} &= \sigma_{k}^{\pi\left(i\right)\pi\left(j\right)} .
\end{align}
\end{linenomath}
\end{theorem}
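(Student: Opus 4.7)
The plan is to show that any automorphism $\pi\in\textrm{Aut}(\Gamma)$ induces an isomorphism of the Markov chain defined by the evolutionary game, and that the structure coefficients $\sigma_k^{ij}$, being intrinsic invariants of that chain, must therefore be preserved when $(i,j)$ is permuted to $(\pi(i),\pi(j))$.

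First, I would establish a payoff-relabeling lemma. Fix $\pi\in\textrm{Aut}(\Gamma)$ together with payoff matrices $\{\mathbf{M}^{ij}\}$, and let $X$ and $\widetilde{X}$ denote the chains obtained from the same graph $\Gamma$, update rule, and homogeneous mutations, but with payoffs $\{\mathbf{M}^{ij}\}$ and $\{\widetilde{\mathbf{M}}^{ij}:=\mathbf{M}^{\pi(i)\pi(j)}\}$, respectively. Define $\Phi:S^N\to S^N$ by $(\Phi\mathfrak{s})_i:=\mathfrak{s}_{\pi(i)}$. The core bookkeeping step is to verify that the fitness of player $i$ in $\widetilde{X}$ at state $\Phi\mathfrak{s}$ equals the fitness of player $\pi(i)$ in $X$ at state $\mathfrak{s}$: using Eq.~(\ref{eq:asymmetricMatrixPayoff}) with payoffs $\widetilde{a}$ and re-indexing the sum by $j\mapsto\pi(j)$, together with $\mathscr{E}_{\pi(i)\pi(j)}=\mathscr{E}_{ij}$, yields exactly this identification. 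With the fitness profiles matching, with $\mathscr{D}$ invariant under $\pi$, and with homogeneous (and hence position-independent) mutations, the Axiom on update rules then implies that $\Phi$ is a Markov-chain isomorphism from $X$ to $\widetilde{X}$.

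Second, I would transfer Theorem~\ref{asymmetricSelectionCondition} across this isomorphism. Because $X$ and $\widetilde{X}$ are isomorphic, weak selection favors strategy $r$ in one precisely when it does in the other. Applying Theorem~\ref{asymmetricSelectionCondition} to $\widetilde{X}$, substituting $\widetilde{a}^{ij}=a^{\pi(i)\pi(j)}$, and performing the change of summation variables $(i,j)\mapsto(\pi^{-1}(i),\pi^{-1}(j))$ --- valid because $\mathscr{E}_{\pi^{-1}(i)\pi^{-1}(j)}=\mathscr{E}_{ij}$ --- shows that for every strategy $r$ and every choice of payoff entries, the selection condition with coefficients $\{\sigma_k^{ij}\}$ is equivalent to the selection condition with coefficients $\{\sigma_k^{\pi^{-1}(i)\pi^{-1}(j)}\}$.

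Third, I would extract the coefficient-wise equality. The cleanest route is to invoke the explicit construction of the $\sigma_k^{ij}$ in the cited works of McAvoy--Hauert and Tarnita--Nowak as weak-selection derivatives at $\beta=0$ of fixation probabilities (or closely related quantities) in the \emph{neutral} chain; these are uniquely determined real numbers. Under the isomorphism $\Phi$, the quantity at position $(i,j)$ in one chain matches the quantity at position $(\pi(i),\pi(j))$ in the other, and so the identity $\sigma_k^{ij}=\sigma_k^{\pi(i)\pi(j)}$ follows term by term (after replacing $\pi$ with $\pi^{-1}$ if needed). I expect the main obstacle to be the first step's bookkeeping --- carefully tracking state-level permutation, payoff-index relabeling, and the automorphism conditions on $\mathscr{E}$ and $\mathscr{D}$ in tandem. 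A secondary subtlety is that attempting to deduce coefficient equality from the selection condition alone would require an unstated uniqueness statement for the $\sigma_k^{ij}$, since sign-agreement of two linear functionals on payoff space only determines one up to positive scaling; appealing directly to the derivative construction bypasses this concern.
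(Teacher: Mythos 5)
Your proposal is correct and follows essentially the same route as the paper: the paper packages your ``payoff-relabeling lemma'' as an automorphism of an enlarged chain on $S^{N}\times\textrm{orb}_{\mathfrak{S}_{N}}\left(\mathbf{a}\right)$, deduces $\mu\left(\pi\mathbf{b}\right)_{\pi\mathfrak{s}}=\mu\left(\mathbf{b}\right)_{\mathfrak{s}}$ from uniqueness of the stationary distribution, and concludes $\partial F_{r}/\partial a_{st}^{ij}\vert_{\mathbf{a}=0}=\partial F_{r}/\partial a_{st}^{\pi\left(i\right)\pi\left(j\right)}\vert_{\mathbf{a}=0}$, then cites the fact that each $\sigma_{k}^{ij}$ is a function of these neutral derivatives of the average abundance --- exactly the ``derivative construction'' you correctly identify as necessary to avoid the scaling/uniqueness ambiguity in the sign condition. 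Your intermediate step of transferring the selection condition across the isomorphism is a harmless detour that the paper skips.
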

The proof of Theorem \ref{thm:mainAsymmetricTheorem} may be found in Appendix B. The following corollary is an immediate consequence of Theorem \ref{thm:mainAsymmetricTheorem}:
\begin{corollary}
If $\mathscr{E}=\mathscr{D}$ and $\mathscr{E}$ is a symmetric graph (i.e. $\Gamma$ is a symmetric evolutionary graph), then the structure coefficients are independent of $i$ and $j$.
\end{corollary}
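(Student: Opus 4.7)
The plan is to combine Theorem \ref{asymmetricSelectionCondition} with the axiomatic invariance of evolutionary update rules under relabelings of the players. Fix $\pi \in \textrm{Aut}(\Gamma)$. Consider the ``relabeled'' asymmetric matrix game, played on the same graph $\Gamma$, whose payoff matrices are $\tilde{\mathbf{M}}^{ij} := \mathbf{M}^{\pi^{-1}(i)\pi^{-1}(j)}$. Because $\pi\mathscr{E} = \mathscr{E}$ and $\pi\mathscr{D} = \mathscr{D}$, and because the update rule is independent of the enumeration of the players (the axiom, see Remark \ref{axiomRemark}), the map $\mathfrak{s} \mapsto \pi(\mathfrak{s})$ is an automorphism of the Markov chain defined by the original game carrying it to the Markov chain of the relabeled game. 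Homogeneous strategy mutations are preserved under any relabeling by definition, so weak selection favors strategy $r$ in the original process if and only if it does so in the relabeled process.

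Next, I would apply Theorem \ref{asymmetricSelectionCondition} to both games. For the relabeled game the condition reads
\begin{linenomath}
\begin{align*}
\sum_{i,j=1}^{N}\mathscr{E}_{ij}\left( \sigma_{1}^{ij}\bigl(\tilde{a}_{rr}^{ij}-\overline{\tilde{a}}_{\ast\ast}^{ij}\bigr) + \sigma_{2}^{ij}\bigl(\overline{\tilde{a}}_{r\ast}^{ij}-\overline{\tilde{a}}_{\ast r}^{ij}\bigr) + \sigma_{3}^{ij}\bigl(\overline{\tilde{a}}_{r\ast}^{ij}-\overline{\tilde{a}}^{ij}\bigr) \right) > 0 .
\end{align*}
\end{linenomath}
Substituting $\tilde{a}_{st}^{ij} = a_{st}^{\pi^{-1}(i)\pi^{-1}(j)}$, changing the summation indices via $i \mapsto \pi(i)$, $j \mapsto \pi(j)$, and using $\mathscr{E}_{\pi(i)\pi(j)} = \mathscr{E}_{ij}$ recasts this as
\begin{linenomath}
\begin{align*}
\sum_{i,j=1}^{N}\mathscr{E}_{ij}\left( \sigma_{1}^{\pi(i)\pi(j)}\bigl(a_{rr}^{ij}-\overline{a}_{\ast\ast}^{ij}\bigr) + \sigma_{2}^{\pi(i)\pi(j)}\bigl(\overline{a}_{r\ast}^{ij}-\overline{a}_{\ast r}^{ij}\bigr) + \sigma_{3}^{\pi(i)\pi(j)}\bigl(\overline{a}_{r\ast}^{ij}-\overline{a}^{ij}\bigr) \right) > 0 ,
\end{align*}
\end{linenomath}
which, by the first step, is equivalent to the selection condition for the original game with coefficients $\{\sigma_k^{ij}\}$. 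Both inequalities therefore pick out the same open cone in the space of payoff profiles $\{a_{st}^{ij}\}$.

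Finally, since the structure coefficients in Theorem \ref{asymmetricSelectionCondition} are specific expectations with respect to the neutral process (independent of payoffs and fixed by a chosen normalization), equality of the two selection conditions for \emph{all} payoff profiles forces equality of the coefficients term-by-term, so $\sigma_k^{ij} = \sigma_k^{\pi(i)\pi(j)}$ for $k \in \{1,2,3\}$. The main obstacle is justifying this last coefficient-by-coefficient comparison, because Theorem \ref{asymmetricSelectionCondition} is phrased as an inequality (unique only up to a positive scalar). I would resolve this in either of two ways: (a) invoke the explicit formulas from \citet{mcavoy:JMB:2015} expressing each $\sigma_k^{ij}$ as a neutral expectation, which is manifestly $\pi$-invariant by the same axiom; or (b) argue purely from the selection condition by choosing payoff matrices that vanish outside a single pair $(i_0,j_0)$ and that activate only one of the three linearly independent payoff combinations $(a_{rr}^{i_0 j_0}-\overline{a}_{\ast\ast}^{i_0 j_0})$, $(\overline{a}_{r\ast}^{i_0 j_0}-\overline{a}_{\ast r}^{i_0 j_0})$, or $(\overline{a}_{r\ast}^{i_0 j_0}-\overline{a}^{i_0 j_0})$, so that each coefficient may be isolated in turn.
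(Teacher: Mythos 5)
Your argument is really a proof sketch of Theorem \ref{thm:mainAsymmetricTheorem}, from which the paper derives this corollary as an immediate consequence; so the overall route is the paper's. But two gaps need attention. The first is that you stop at $\sigma_{k}^{ij}=\sigma_{k}^{\pi\left(i\right)\pi\left(j\right)}$ and never use the hypothesis that $\Gamma$ is \emph{symmetric}: to obtain the corollary you must still observe that arc-transitivity supplies, for any two pairs $\left(i,j\right)$ and $\left(i',j'\right)$ with $\mathscr{E}_{ij},\mathscr{E}_{i'j'}\neq 0$, an automorphism carrying one to the other, whence the coefficients agree on all edges (non-edges are irrelevant, since $\sigma_{k}^{ij}$ enters condition (\ref{asymmetricNetwork}) only multiplied by $\mathscr{E}_{ij}$).

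The second gap is the one you flagged yourself, and your fallback (b) does not repair it. Restricting the payoffs to a single pair $\left(i_{0},j_{0}\right)$ and a single payoff combination reduces the two selection conditions to $\sigma_{k}^{i_{0}j_{0}}x>0$ and $\sigma_{k}^{\pi\left(i_{0}\right)\pi\left(j_{0}\right)}x>0$; their equivalence for all $x$ forces only equality of \emph{signs}, not of values. More generally, since Theorem \ref{asymmetricSelectionCondition} asserts only the existence of coefficients making an inequality hold, equivalence of the two conditions over all payoff profiles yields at best $\sigma_{k}^{\pi\left(i\right)\pi\left(j\right)}=c\,\sigma_{k}^{ij}$ for a single constant $c>0$, and you would still need an extra step (e.g., iterating $\pi$, which has finite order, to force $c=1$) that you do not supply. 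Your option (a) is the correct route and the paper's actual one, but ``manifestly $\pi$-invariant'' understates the work: each $\sigma_{k}^{ij}$ is a function of the neutral derivatives $\partial F_{r}/\partial a_{st}^{ij}\vert_{\mathbf{a}=0}$, and differentiating with respect to a payoff entry attached to the specific pair $\left(i,j\right)$ breaks the relabeling symmetry of any one chain, so the axiom alone does not apply. The paper therefore compares the stationary distributions at payoffs $\mathbf{b}$ and $\pi\mathbf{b}$ (via an enlarged state space and uniqueness of the stationary distribution) to obtain $\mu\left(\pi\mathbf{b}\right)_{\pi\mathfrak{s}}=\mu\left(\mathbf{b}\right)_{\mathfrak{s}}$ and hence $F_{r}^{s,t,i,j}=F_{r}^{s,t,\pi\left(i\right),\pi\left(j\right)}$; your opening observation that relabeling is an isomorphism onto the chain of the relabeled game is exactly the right germ, but it must be pushed through at the level of these derivatives rather than at the level of the selection inequality.
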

Since symmetric graphs are also regular, we have:
\begin{corollary}\label{cor:spatialAverage}
If $\mathscr{E}=\mathscr{D}$ and $\mathscr{E}$ is a symmetric graph (i.e. $\Gamma$ is a symmetric evolutionary graph), then strategy $r$ is favored in the limit of weak selection if and only if
\begin{align}\label{tarnitaSelectionCondition}
\sigma_{1}\left( a_{rr}-\overline{a}_{\ast\ast} \right) + \sigma_{2}\left(\overline{a}_{r\ast}-\overline{a}_{\ast r}\right) + \sigma_{3}\left(\overline{a}_{r\ast}-\overline{a}\right) &> 0 ,
\end{align}
where $\overline{\mathbf{M}}=\left(a_{st}\right)_{1\leqslant s,t\leqslant n}$ is the spatial average of the matrices $\mathbf{M}^{ij}$, i.e.
\begin{linenomath}
\begin{align}
\overline{\mathbf{M}} &:= \frac{1}{kN}\sum_{i,j=1}^{N}\mathscr{E}_{ij}\mathbf{M}^{ij} ,
\end{align}
\end{linenomath}
where $k$ is the degree of the graph, $\Gamma$.
\end{corollary}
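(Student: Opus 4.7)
The plan is to deduce this corollary by substituting the conclusion of the preceding corollary into the selection condition of Theorem~\ref{asymmetricSelectionCondition} and then recognizing the resulting $\mathscr{E}$-weighted sum of payoff entries as the corresponding entry of the spatial average $\overline{\mathbf{M}}$. Because a symmetric graph is arc-transitive (hence regular), the normalization required to turn the unweighted $\mathscr{E}$-sum into a genuine average is a single constant that can be factored out, which is what makes the reduction possible.

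First I would invoke the preceding corollary to write $\sigma_{k}^{ij}=\sigma_{k}$ for each $k\in\{1,2,3\}$ and every $i,j$. Substituting into~(\ref{asymmetricNetwork}) and pulling the (now constant) structure coefficients through the sum, the selection condition becomes
\begin{linenomath}
\begin{align*}
\sigma_{1}\sum_{i,j=1}^{N}\mathscr{E}_{ij}\bigl(a_{rr}^{ij}-\overline{a}_{\ast\ast}^{ij}\bigr) + \sigma_{2}\sum_{i,j=1}^{N}\mathscr{E}_{ij}\bigl(\overline{a}_{r\ast}^{ij}-\overline{a}_{\ast r}^{ij}\bigr) + \sigma_{3}\sum_{i,j=1}^{N}\mathscr{E}_{ij}\bigl(\overline{a}_{r\ast}^{ij}-\overline{a}^{ij}\bigr) > 0.
\end{align*}
\end{linenomath}
Since $\mathscr{E}$ is regular of degree $k$, one has $\sum_{i,j}\mathscr{E}_{ij}=kN$, so dividing by this strictly positive constant preserves the inequality and yields an expression in which each term is a spatial average of a payoff functional.

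Next I would verify that each such spatial average equals the corresponding functional applied to the matrix $\overline{\mathbf{M}}=(kN)^{-1}\sum_{i,j}\mathscr{E}_{ij}\mathbf{M}^{ij}$. Because each of the operations $\mathbf{M}\mapsto a_{rr}$, $\mathbf{M}\mapsto\overline{a}_{\ast\ast}$, $\mathbf{M}\mapsto\overline{a}_{r\ast}$, $\mathbf{M}\mapsto\overline{a}_{\ast r}$ and $\mathbf{M}\mapsto\overline{a}$ is linear in the matrix entries, it commutes with the convex combination defining $\overline{\mathbf{M}}$. For instance,
\begin{linenomath}
\begin{align*}
\overline{a}_{\ast\ast} \;=\; \frac{1}{n}\sum_{s=1}^{n}a_{ss} \;=\; \frac{1}{n}\sum_{s=1}^{n}\frac{1}{kN}\sum_{i,j=1}^{N}\mathscr{E}_{ij}\,a_{ss}^{ij} \;=\; \frac{1}{kN}\sum_{i,j=1}^{N}\mathscr{E}_{ij}\,\overline{a}_{\ast\ast}^{ij},
\end{align*}
\end{linenomath}
and the analogous identities hold for $a_{rr}$, $\overline{a}_{r\ast}$, $\overline{a}_{\ast r}$ and $\overline{a}$. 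Substituting these identifications turns the displayed inequality into~(\ref{tarnitaSelectionCondition}).

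There is no genuinely hard step: the corollary is essentially bookkeeping once the structure coefficients have been pulled out of the $(i,j)$-sum. The only point that requires any care is recognizing that the uniform regularity of $\mathscr{E}$ supplies a single normalizing constant $kN$, which simultaneously converts the $\mathscr{E}$-weighted sum into the spatial average and, by linearity of the payoff functionals, lets one exchange ``average then extract entry'' with ``extract entry then average.'' Both directions of the ``if and only if'' follow at once since dividing by $kN>0$ is an equivalence.
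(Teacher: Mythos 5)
Your proposal is correct and is exactly the argument the paper intends: the paper gives no separate proof of this corollary, presenting it as an immediate consequence of the preceding corollary (constant structure coefficients on a symmetric graph) together with regularity, which is precisely the substitution, factoring, and $kN$-normalization-plus-linearity bookkeeping you carry out.
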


\begin{remark}
Eq. (\ref{tarnitaSelectionCondition}) is just the selection condition of \citet{tarnita:PNAS:2011} for symmetric matrix games.
\end{remark}

It follows from Corollary \ref{cor:spatialAverage} that asymmetric matrix games on arc-transitive (symmetric) graphs can be reduced to symmetric games in the limit of weak selection.

\section{Discussion}

Evolutionary games in finite populations may be split into two classes: those with absorbing states (``absorbing processes") and those without absorbing states. In absorbing processes, the notion of fixation probability has played a crucial role in quantifying evolutionary outcomes, but fixation probabilities are far from the only measure of evolutionary success. Much of the literature on evolutionary games with absorbing states has neglected other metrics such as the time to absorption or the time to fixation conditioned on fixation occurring (``conditional fixation time"). This bias toward fixation probabilities has resulted in certain evolutionary processes appearing more symmetric than they actually are. We have illustrated this phenomenon using the frequency-independent Moran process on graphs: The Isothermal Theorem guarantees that, on regular graphs, a single mutant cannot distinguish between initial locations in the graph if the only metric under consideration is the probability of fixation. However, certain initial placements of the mutant may result in faster absorption times than others if the graph is regular but not vertex-transitive, and the Frucht graph exemplifies this claim. The same phenomenon also holds for conditional fixation times.

The Frucht graph, which is a regular structure with no nontrivial symmetries (there are no two vertices from which the graph ``looks" the same), also allowed us to show that the Isothermal Theorem of \citet{lieberman:Nature:2005} does not extend to frequency-dependent evolutionary games. That is, on regular graphs, the probability of fixation of a single mutant may depend on the initial location of the mutant if fitness is frequency-dependent. This claim was illustrated via a death-birth process on the Frucht graph, in which the underlying evolutionary game was a Snowdrift Game. For $\beta =1$ (strong selection), the fixation probability of a cooperator at vertex $11$ was nearly $14\%$ larger than the fixation probability of a cooperator at vertex $4$. Moreover, we showed that if the fixation probabilities of two initial configurations differ for a single value of $\beta$, then they are the same for at most finitely many values of $\beta$. In particular, these fixation probabilities differ for almost every selection strength, so our observation for $\beta =1$ was not an anomaly. Similar phenomena are observed for frequency-dependent birth-death processes on the Frucht graph, for example, and even for frequency-dependent games with the ``equal gains from switching" property, such as the Donation Game.

Theorem \ref{thm:mainSymmetricTheorem} is an analogue of the Isothermal Theorem that applies to a broader class of games and update rules. The Isothermal Theorem is remarkable since regularity of the population structure implies that the fixation probabilities are not only independent of the initial location of the mutant, they are the \textit{same} as those of the classical Moran process. Our treatment of homogeneous evolutionary processes is focused on when different single-mutant states are equivalent, not when they are equivalent to the corresponding states in the classical Moran process. Even if the fixation probability of a single mutant does not depend on the mutant's location, other factors (such as birth and death rates) may affect whether or not this fixation probability is the same as the one in a well-mixed population \citep{komarova:BMB:2006,kaveh:RSOS:2015}. Remark \ref{remark:transitiveSymmetric}, which compares the fixation probabilities for the Snowdrift Game on two different vertex-transitive graphs of the same size and degree, shows that the fixation probability of a single mutant--even if independent of the mutant's location--can depend on the configuration of the graph. In light of these results, the symmetry phenomena for the Moran process guaranteed by the Isothermal Theorem do not generalize and should be thought of as properties of the frequency-independent Moran process and not of evolutionary processes in general.

Theorem \ref{thm:mainSymmetricTheorem}, and indeed most of our discussion of homogeneity, focused on symmetries of states consisting of just a \textit{single} mutant. In many cases, mutation rates are sufficiently small that a mutant type, when it appears, will either fixate or go extinct before another mutation occurs \citep{fudenberg:JET:2006,wu:JMB:2011}. Thus, with small mutation rates, one need not consider symmetries of states consisting of more than one mutant. However, if mutation rates are larger, then these multi-mutant states become relevant. Our definition of evolutionary equivalence (Definition \ref{def:evolutionaryEquivalence}) applies to these states, but, as expected, the symmetry conditions on the population structure guaranteeing any two multi-mutant states are equivalent are much stronger. In fact, as we argued in \S\ref{subsubsec:homogeneousEvolutionaryGames}, the population must in general be well-mixed even for any pair of states consisting of \textit{two} mutants to be evolutionarily equivalent. Consequently, our focus on single-mutant states allowed us to simultaneously treat biologically relevant configurations (assuming mutation rates are small) and obtain non-trivial conditions guaranteeing homogeneity of an evolutionary process.

The counterexamples presented here could be defined on sufficiently small population structures, and thus all calculations (fixation probabilities, structure coefficients, etc.) are exact. However, these quantities need not always be explicitly calculated in order to prove useful: In our study of asymmetric games, we concluded that an asymmetric game on an arc-transitive (symmetric) graph can be reduced to a symmetric game in the limit of weak selection. (The graph of Fig. \ref{fig:transitiveSymmetric}(A) demonstrates that vertex-transitivity alone does not guarantee that an asymmetric game can be reduced to a symmetric game in this way.) This result was obtained by examining the qualitative nature of the structure coefficients in the selection condition (\ref{asymmetricNetwork}), but it did not require explicit calculations of these coefficients. Therefore, despite the difficulty in actually calculating these coefficients, they can still be used to glean qualitative insight into the dynamics of evolutionary games.

On large random regular graphs, the dynamics of an asymmetric matrix game are equivalent to those of a certain symmetric game obtained as a ``spatial average" of the individual asymmetric games \citep{mcavoy:PLOSCB:2015}. Corollary \ref{cor:spatialAverage} is highly reminiscent of this type of reduction to a symmetric game. For large populations, this result is obtained by observing that large random regular graphs approximate a Bethe lattice \citep{bollobas:CUP:2001} and then using the pair approximation method \citep{matsuda:PTP:1992} to describe the dynamics. The pair approximation method is \textit{exact} for a Bethe lattice \citep{ohtsuki:Nature:2006}, so, from this perspective, Corollary \ref{cor:spatialAverage} is not that surprising since Bethe lattices are arc-transitive. Of course, a Bethe lattice has infinitely many vertices, and Corollary \ref{cor:spatialAverage} is a finite-population analogue of this result.

The term ``homogeneous" is used in the literature to refer to several different kinds of population structures. This term has been used to describe well-mixed populations \citep{assaf:PRL:2012,szolnoki:PRE:2014b}. For graph-structured populations, ``homogeneous graph" sometimes refers to vertex-transitive graphs \citep{taylor:Nature:2007,tarnita:TAN:2014}. In algebraic graph theory, however, the term ``homogeneous graph" implies a much higher degree of symmetry than does than vertex-transitivity \citep[see][]{beineke:CUP:2004}. ``Homogeneous" has also been used to describe graphs in which each vertex has the same number of neighbors, i.e. regular graphs \citep{roca:PLR:2009,hindersin:JRSI:2014,cheng:IEEE:2015}. In between regular and vertex-transitive graphs, ``homogeneous graph" has also referred to large, random regular graphs \citep{traulsen:AN:2009}. As we noted, large, random regular graphs approximate Bethe lattices (which are infinite, arc-transitive graphs), but these approximations need not themselves be even vertex-transitive.

In many of the various uses of the term ``homogeneous," a common aim is to study the fixation probability of a randomly placed mutant. Our definition of \textit{homogeneous evolutionary game} formally captures what it means for two single-mutant states to be equivalent, and our explorations of the Frucht graph (in conjunction with Theorem \ref{thm:mainSymmetricTheorem}) show that vertex-transitivity, and \textit{not} regularity, is what the term ``homogeneous" in graph-structured populations should indicate. We also demonstrated the effects of payoffs and strategy mutations on the behavior of these single-mutant states and concluded that the term homogeneous should apply to the entire process rather than to just the population structure. The homogeneity (Theorem \ref{thm:mainSymmetricTheorem}) and symmetry (Theorem \ref{thm:mainAsymmetricTheorem}) results given here do not depend on the update rule, in contrast with results such as the symmetry of conditional fixation times in the Moran process of \citet{taylor:JTB:2006} or the Isothermal Theorem of \citet{lieberman:Nature:2005}. We now know that games on regular graphs are \textit{not} homogeneous, and we know precisely under which conditions the ``fixation probability of a randomly placed mutant" is well-defined. These results provide a firmer foundation for evolutionary game theory in finite populations and a basis for defining the evolutionary success of the strategies of a game.

\section*{Acknowledgements}

The authors thank Wes Maciejewski for helpful conversations and the anonymous referees for their comments.

\section*{Funding statement}

A. M. and C. H. acknowledge financial support from the Natural Sciences and Engineering Research Council of Canada (NSERC), grant RGPIN-2015-05795, and C.H. from the Foundational Questions in Evolutionary Biology Fund (FQEB), grant RFP-12-10.

\bibliographystyle{plainnat}

\begin{thebibliography}{48}
\providecommand{\natexlab}[1]{#1}
\providecommand{\url}[1]{\texttt{#1}}
\expandafter\ifx\csname urlstyle\endcsname\relax
  \providecommand{\doi}[1]{doi: #1}\else
  \providecommand{\doi}{doi: \begingroup \urlstyle{rm}\Url}\fi

\bibitem[Assaf and Mobilia(2012)]{assaf:PRL:2012}
M.~Assaf and M.~Mobilia.
\newblock Metastability and anomalous fixation in evolutionary games on
  scale-free networks.
\newblock \emph{Physical Review Letters}, 109\penalty0 (18), Oct 2012.
\newblock \doi{10.1103/physrevlett.109.188701}.

\bibitem[Beineke et~al.(2004)Beineke, Wilson, and Cameron]{beineke:CUP:2004}
L.~W. Beineke, R.~J. Wilson, and P.~J. Cameron, editors.
\newblock \emph{Topics in Algebraic Graph Theory}.
\newblock Cambridge University Press, 2004.
\newblock \doi{10.1017/cbo9780511529993}.

\bibitem[Bollob\'{a}s(2001)]{bollobas:CUP:2001}
B.~Bollob\'{a}s.
\newblock \emph{Random Graphs}.
\newblock Cambridge University Press, 2001.
\newblock \doi{10.1017/cbo9780511814068}.

\bibitem[Bondy and Murty(2008)]{bondy:S:2008}
J.~A. Bondy and U.~S.~R. Murty.
\newblock \emph{Graph Theory}.
\newblock Springer London, 2008.
\newblock \doi{10.1007/978-1-84628-970-5}.

\bibitem[Cheng et~al.(2015)Cheng, He, Qi, and Xu]{cheng:IEEE:2015}
D.~Cheng, F.~He, H.~Qi, and T.~Xu.
\newblock Modeling, analysis and control of networked evolutionary games.
\newblock \emph{{IEEE} Transactions on Automatic Control}, 2015.
\newblock \doi{10.1109/tac.2015.2404471}.

\bibitem[Chung(1996)]{chung:AMS:1996}
F.~R.~K. Chung.
\newblock \emph{Spectral Graph Theory}.
\newblock American Mathematical Society, 1996.

\bibitem[D{\'{e}}barre et~al.(2014)D{\'{e}}barre, Hauert, and
  Doebeli]{debarre:NC:2014}
F.~D{\'{e}}barre, C.~Hauert, and M.~Doebeli.
\newblock Social evolution in structured populations.
\newblock \emph{Nature Communications}, 5, Mar 2014.
\newblock \doi{10.1038/ncomms4409}.

\bibitem[Ewens(2004)]{ewens:S:2004}
W.~J. Ewens.
\newblock \emph{Mathematical Population Genetics}.
\newblock Springer New York, 2004.
\newblock \doi{10.1007/978-0-387-21822-9}.

\bibitem[Frucht(1939)]{frucht:CM:1939}
R.~Frucht.
\newblock Herstellung von graphen mit vorgegebener abstrakter gruppe.
\newblock \emph{Compositio Mathematica}, 6:\penalty0 239--250, 1939.

\bibitem[Fudenberg and Imhof(2006)]{fudenberg:JET:2006}
D.~Fudenberg and L.~A. Imhof.
\newblock Imitation processes with small mutations.
\newblock \emph{Journal of Economic Theory}, 131\penalty0 (1):\penalty0
  251--262, Nov 2006.
\newblock \doi{10.1016/j.jet.2005.04.006}.

\bibitem[Hauert and Imhof(2012)]{hauert:JTB:2012}
C.~Hauert and L.~Imhof.
\newblock Evolutionary games in deme structured, finite populations.
\newblock \emph{Journal of Theoretical Biology}, 299:\penalty0 106--112, Apr
  2012.
\newblock \doi{10.1016/j.jtbi.2011.06.010}.

\bibitem[Hindersin and Traulsen(2014)]{hindersin:JRSI:2014}
L.~Hindersin and A.~Traulsen.
\newblock Counterintuitive properties of the fixation time in
  network-structured populations.
\newblock \emph{Journal of The Royal Society Interface}, 11\penalty0 (99), Jul
  2014.
\newblock \doi{10.1098/rsif.2014.0606}.

\bibitem[Hofbauer(1996)]{hofbauer:JMB:1996}
J.~Hofbauer.
\newblock Evolutionary dynamics for bimatrix games: A hamiltonian system?
\newblock \emph{Journal of Mathematical Biology}, 34\penalty0 (5-6):\penalty0
  675--688, May 1996.
\newblock \doi{10.1007/bf02409754}.

\bibitem[Hofbauer and Sigmund(1998)]{hofbauer:CUP:1998}
J.~Hofbauer and K.~Sigmund.
\newblock \emph{Evolutionary Games and Population Dynamics}.
\newblock Cambridge University Press, 1998.
\newblock \doi{10.1017/cbo9781139173179}.

\bibitem[Imhof and Nowak(2006)]{imhof:JMB:2006}
L.~A. Imhof and M.~A. Nowak.
\newblock Evolutionary game dynamics in a wright-fisher process.
\newblock \emph{Journal of Mathematical Biology}, 52\penalty0 (5):\penalty0
  667--681, Feb 2006.
\newblock \doi{10.1007/s00285-005-0369-8}.

\bibitem[Kaveh et~al.(2015)Kaveh, Komarova, and Kohandel]{kaveh:RSOS:2015}
K.~Kaveh, N.~L. Komarova, and M.~Kohandel.
\newblock The duality of spatial death-birth and birth-death processes and
  limitations of the isothermal theorem.
\newblock \emph{Royal Society Open Science}, 2\penalty0 (4):\penalty0
  140465--140465, Apr 2015.
\newblock \doi{10.1098/rsos.140465}.

\bibitem[Komarova(2006)]{komarova:BMB:2006}
N.~L. Komarova.
\newblock Spatial stochastic models for cancer initiation and progression.
\newblock \emph{Bulletin of Mathematical Biology}, 68\penalty0 (7):\penalty0
  1573--1599, Jul 2006.
\newblock \doi{10.1007/s11538-005-9046-8}.

\bibitem[Lieberman et~al.(2005)Lieberman, Hauert, and
  Nowak]{lieberman:Nature:2005}
E.~Lieberman, C.~Hauert, and M.~A. Nowak.
\newblock Evolutionary dynamics on graphs.
\newblock \emph{Nature}, 433\penalty0 (7023):\penalty0 312--316, Jan 2005.
\newblock \doi{10.1038/nature03204}.

\bibitem[Maruyama(1974)]{maruyama:TPB:1974}
T.~Maruyama.
\newblock A simple proof that certain quantities are independent of the
  geographical structure of population.
\newblock \emph{Theoretical Population Biology}, 5\penalty0 (2):\penalty0
  148--154, Apr 1974.
\newblock \doi{10.1016/0040-5809(74)90037-9}.

\bibitem[Matsuda et~al.(1992)Matsuda, Ogita, Sasaki, and
  Sato]{matsuda:PTP:1992}
H.~Matsuda, N.~Ogita, A.~Sasaki, and K.~Sato.
\newblock Statistical mechanics of population: The lattice lotka-volterra
  model.
\newblock \emph{Progress of Theoretical Physics}, 88\penalty0 (6):\penalty0
  1035--1049, Dec 1992.
\newblock \doi{10.1143/ptp/88.6.1035}.

\bibitem[McAvoy and Hauert(2015{\natexlab{a}})]{mcavoy:JMB:2015}
A.~McAvoy and C.~Hauert.
\newblock Structure coefficients and strategy selection in multiplayer games.
\newblock \emph{Journal of Mathematical Biology}, Apr 2015{\natexlab{a}}.
\newblock \doi{10.1007/s00285-015-0882-3}.

\bibitem[McAvoy and Hauert(2015{\natexlab{b}})]{mcavoy:PLOSCB:2015}
A.~McAvoy and C.~Hauert.
\newblock Asymmetric evolutionary games.
\newblock \emph{{PLOS} Computational Biology}, 11\penalty0 (8):\penalty0
  e1004349, Aug 2015{\natexlab{b}}.
\newblock \doi{10.1371/journal.pcbi.1004349}.

\bibitem[Moran(1958)]{moran:MPCPS:1958}
P.~A.~P. Moran.
\newblock Random processes in genetics.
\newblock \emph{Mathematical Proceedings of the Cambridge Philosophical
  Society}, 54\penalty0 (01):\penalty0 60, Jan 1958.
\newblock \doi{10.1017/s0305004100033193}.

\bibitem[Nowak(2006)]{nowak:BP:2006}
M.~A. Nowak.
\newblock \emph{Evolutionary Dynamics: Exploring the Equations of Life}.
\newblock Belknap Press, 2006.

\bibitem[Nowak et~al.(2004)Nowak, Sasaki, Taylor, and
  Fudenberg]{nowak:Nature:2004}
M.~A. Nowak, A.~Sasaki, C.~Taylor, and D.~Fudenberg.
\newblock Emergence of cooperation and evolutionary stability in finite
  populations.
\newblock \emph{Nature}, 428\penalty0 (6983):\penalty0 646--650, Apr 2004.
\newblock \doi{10.1038/nature02414}.

\bibitem[Ohtsuki(2010)]{ohtsuki:JTB:2010}
H.~Ohtsuki.
\newblock Stochastic evolutionary dynamics of bimatrix games.
\newblock \emph{Journal of Theoretical Biology}, 264\penalty0 (1):\penalty0
  136--142, May 2010.
\newblock \doi{10.1016/j.jtbi.2010.01.016}.

\bibitem[Ohtsuki and Nowak(2006)]{ohtsuki:JTB:2006}
H.~Ohtsuki and M.~A. Nowak.
\newblock The replicator equation on graphs.
\newblock \emph{Journal of Theoretical Biology}, 243\penalty0 (1):\penalty0
  86--97, Nov 2006.
\newblock \doi{10.1016/j.jtbi.2006.06.004}.

\bibitem[Ohtsuki et~al.(2006)Ohtsuki, Hauert, Lieberman, and
  Nowak]{ohtsuki:Nature:2006}
H.~Ohtsuki, C.~Hauert, E.~Lieberman, and M.~A. Nowak.
\newblock A simple rule for the evolution of cooperation on graphs and social
  networks.
\newblock \emph{Nature}, 441\penalty0 (7092):\penalty0 502--505, May 2006.
\newblock \doi{10.1038/nature04605}.

\bibitem[Ohtsuki et~al.(2007{\natexlab{a}})Ohtsuki, Nowak, and
  Pacheco]{ohtsuki:PRL:2007}
H.~Ohtsuki, M.~A. Nowak, and J.~M. Pacheco.
\newblock Breaking the symmetry between interaction and replacement in
  evolutionary dynamics on graphs.
\newblock \emph{Physical Review Letters}, 98\penalty0 (10), Mar
  2007{\natexlab{a}}.
\newblock \doi{10.1103/physrevlett.98.108106}.

\bibitem[Ohtsuki et~al.(2007{\natexlab{b}})Ohtsuki, Pacheco, and
  Nowak]{ohtsuki:JTB:2007}
H.~Ohtsuki, J.~M. Pacheco, and M.~A. Nowak.
\newblock Evolutionary graph theory: Breaking the symmetry between interaction
  and replacement.
\newblock \emph{Journal of Theoretical Biology}, 246\penalty0 (4):\penalty0
  681--694, Jun 2007{\natexlab{b}}.
\newblock \doi{10.1016/j.jtbi.2007.01.024}.

\bibitem[Pacheco et~al.(2009)Pacheco, Pinheiro, and
  Santos]{pacheco:PLoSCB:2009}
J.~M. Pacheco, F.~L. Pinheiro, and F.~C. Santos.
\newblock Population structure induces a symmetry breaking favoring the
  emergence of cooperation.
\newblock \emph{{PLoS} Computational Biology}, 5\penalty0 (12):\penalty0
  e1000596, Dec 2009.
\newblock \doi{10.1371/journal.pcbi.1000596}.

\bibitem[Press and Dyson(2012)]{press:PNAS:2012}
W.~H. Press and F.~J. Dyson.
\newblock Iterated prisoner's dilemma contains strategies that dominate any
  evolutionary opponent.
\newblock \emph{Proceedings of the National Academy of Sciences}, 109\penalty0
  (26):\penalty0 10409--10413, May 2012.
\newblock \doi{10.1073/pnas.1206569109}.

\bibitem[Roca et~al.(2009)Roca, Cuesta, and S{\'{a}}nchez]{roca:PLR:2009}
C.~P. Roca, J.~A. Cuesta, and A.~S{\'{a}}nchez.
\newblock Evolutionary game theory: Temporal and spatial effects beyond
  replicator dynamics.
\newblock \emph{Physics of Life Reviews}, 6\penalty0 (4):\penalty0 208--249,
  Dec 2009.
\newblock \doi{10.1016/j.plrev.2009.08.001}.

\bibitem[Szab\'{o} and F\'{a}th(2007)]{szabo:PR:2007}
G.~Szab\'{o} and G.~F\'{a}th.
\newblock Evolutionary games on graphs.
\newblock \emph{Physics Reports}, 446\penalty0 (4-6):\penalty0 97--216, Jul
  2007.
\newblock \doi{10.1016/j.physrep.2007.04.004}.

\bibitem[Szab{\'{o}} and T{\H{o}}ke(1998)]{szabo:PRE:1998}
G.~Szab{\'{o}} and C.~T{\H{o}}ke.
\newblock Evolutionary prisoner's dilemma game on a square lattice.
\newblock \emph{Physical Review E}, 58\penalty0 (1):\penalty0 69--73, Jul 1998.
\newblock \doi{10.1103/physreve.58.69}.

\bibitem[Szolnoki et~al.(2014)Szolnoki, Perc, and Mobilia]{szolnoki:PRE:2014b}
A.~Szolnoki, M.~Perc, and M.~Mobilia.
\newblock Facilitators on networks reveal optimal interplay between information
  exchange and reciprocity.
\newblock \emph{Physical Review E}, 89\penalty0 (4), Apr 2014.
\newblock \doi{10.1103/physreve.89.042802}.

\bibitem[Tarnita and Taylor(2014)]{tarnita:TAN:2014}
C.~E. Tarnita and P.~D. Taylor.
\newblock Measures of relative fitness of social behaviors in finite structured
  population models.
\newblock \emph{The American Naturalist}, 184\penalty0 (4):\penalty0 477--488,
  Oct 2014.
\newblock \doi{10.1086/677924}.

\bibitem[Tarnita et~al.(2009)Tarnita, Antal, Ohtsuki, and
  Nowak]{tarnita:PNAS:2009}
C.~E. Tarnita, T.~Antal, H.~Ohtsuki, and M.~A. Nowak.
\newblock Evolutionary dynamics in set structured populations.
\newblock \emph{Proceedings of the National Academy of Sciences}, 106\penalty0
  (21):\penalty0 8601--8604, May 2009.
\newblock \doi{10.1073/pnas.0903019106}.

\bibitem[Tarnita et~al.(2011)Tarnita, Wage, and Nowak]{tarnita:PNAS:2011}
C.~E. Tarnita, N.~Wage, and M.~A. Nowak.
\newblock Multiple strategies in structured populations.
\newblock \emph{Proceedings of the National Academy of Sciences}, 108\penalty0
  (6):\penalty0 2334--2337, Jan 2011.
\newblock \doi{10.1073/pnas.1016008108}.

\bibitem[Taylor et~al.(2004)Taylor, Fudenberg, Sasaki, and
  Nowak]{taylor:BMB:2004}
C.~Taylor, D.~Fudenberg, A.~Sasaki, and M.~A. Nowak.
\newblock Evolutionary game dynamics in finite populations.
\newblock \emph{Bulletin of Mathematical Biology}, 66\penalty0 (6):\penalty0
  1621--1644, Nov 2004.
\newblock \doi{10.1016/j.bulm.2004.03.004}.

\bibitem[Taylor et~al.(2006)Taylor, Iwasa, and Nowak]{taylor:JTB:2006}
C.~Taylor, Y.~Iwasa, and M.~A. Nowak.
\newblock A symmetry of fixation times in evolutionary dynamics.
\newblock \emph{Journal of Theoretical Biology}, 243\penalty0 (2):\penalty0
  245--251, Nov 2006.
\newblock \doi{10.1016/j.jtbi.2006.06.016}.

\bibitem[Taylor and Jonker(1978)]{taylor:MB:1978}
P.~D. Taylor and L.~B. Jonker.
\newblock Evolutionary stable strategies and game dynamics.
\newblock \emph{Mathematical Biosciences}, 40\penalty0 (1-2):\penalty0
  145--156, jul 1978.
\newblock \doi{10.1016/0025-5564(78)90077-9}.

\bibitem[Taylor et~al.(2001)Taylor, Irwin, and Day]{taylor:Selection:2000}
P.~D. Taylor, A.~J. Irwin, and T.~Day.
\newblock Inclusive fitness in finite deme-structured and stepping-stone
  populations.
\newblock \emph{Selection}, 1\penalty0 (1):\penalty0 153--164, Jan 2001.
\newblock \doi{10.1556/select.1.2000.1-3.15}.

\bibitem[Taylor et~al.(2007)Taylor, Day, and Wild]{taylor:Nature:2007}
P.~D. Taylor, T.~Day, and G.~Wild.
\newblock Evolution of cooperation in a finite homogeneous graph.
\newblock \emph{Nature}, 447\penalty0 (7143):\penalty0 469--472, May 2007.
\newblock \doi{10.1038/nature05784}.

\bibitem[Traulsen et~al.(2007)Traulsen, Pacheco, and Nowak]{traulsen:JTB:2007}
A.~Traulsen, J.~M. Pacheco, and M.~A. Nowak.
\newblock Pairwise comparison and selection temperature in evolutionary game
  dynamics.
\newblock \emph{Journal of Theoretical Biology}, 246\penalty0 (3):\penalty0
  522--529, Jun 2007.
\newblock \doi{10.1016/j.jtbi.2007.01.002}.

\bibitem[Traulsen et~al.(2009)Traulsen, Santos, and Pacheco]{traulsen:AN:2009}
A.~Traulsen, F.~C. Santos, and J.~M. Pacheco.
\newblock Evolutionary games in self-organizing populations.
\newblock In \emph{Adaptive Networks}, pages 253--267. Springer Science +
  Business Media, 2009.
\newblock \doi{10.1007/978-3-642-01284-6$\_$12}.

\bibitem[Wu et~al.(2011)Wu, Gokhale, Wang, and Traulsen]{wu:JMB:2011}
B.~Wu, C.~S. Gokhale, L.~Wang, and A.~Traulsen.
\newblock How small are small mutation rates?
\newblock \emph{Journal of Mathematical Biology}, 64\penalty0 (5):\penalty0
  803--827, May 2011.
\newblock \doi{10.1007/s00285-011-0430-8}.

\bibitem[Zukewich et~al.(2013)Zukewich, Kurella, Doebeli, and
  Hauert]{zukewich:PLoSONE:2013}
J.~Zukewich, V.~Kurella, M.~Doebeli, and C.~Hauert.
\newblock Consolidating birth-death and death-birth processes in structured
  populations.
\newblock \emph{{PLoS} {ONE}}, 8\penalty0 (1):\penalty0 e54639, Jan 2013.
\newblock \doi{10.1371/journal.pone.0054639}.

\end{thebibliography}

\setcounter{section}{0}
\setcounter{subsection}{0}
\setcounter{equation}{0}
\renewcommand{\thesubsection}{A.\arabic{subsection}}
\renewcommand{\theequation}{A.\arabic{equation}}

\section*{Appendix A: fixation and absorption}

Using a method inspired by a technique of \citet{press:PNAS:2012}, we derive explicit expressions (in terms of the transition matrix) for fixation probabilities and absorption times. Subsequently, we prove a simple lemma that says that Markov chain symmetries preserve the set of a chain's stationary distributions.

\subsection{Fixation probabilities} Suppose that $\left\{X_{n}\right\}_{n\geqslant 0}$ is a discrete-time Markov chain on a finite state space, $\mathcal{S}$, that has exactly $K$ ($\geqslant 1$) absorbing states, $\mathfrak{s}_{1},\dots ,\mathfrak{s}_{K}$. Moreover, suppose that the non-absorbing states are transient \citep[see][]{fudenberg:JET:2006}. The transition matrix for this chain, $\mathbf{T}$, may be written as
\begin{linenomath}
\begin{align}
\mathbf{T} &= \begin{pmatrix}\mathbf{I}_{K} & \mathbf{0} \\ \mathbf{S}_{1} & \mathbf{S}_{2}\end{pmatrix} ,
\end{align}
\end{linenomath}
where $\mathbf{I}_{K}$ is the $K\times K$ identity matrix and $\mathbf{0}$ is the matrix of zeros (in this case, its dimension is $K\times\left(\left|\mathcal{S}\right| -K\right)$, where $\left|\mathcal{S}\right|$ is the number of states in $\mathcal{S}$). This chain will eventually end up in one of the $K$ absorbing states, and we denote by $\rho_{\mathfrak{s},i}$ the probability that state $\mathfrak{s}_{i}$ is reached when the chain starts off in states $\mathfrak{s}\in\mathcal{S}$. Let $\mathbf{P}$ be the $\left|\mathcal{S}\right|\times K$ matrix of fixation probabilities, i.e. $\mathbf{P}_{\mathfrak{s},i}=\rho_{\mathfrak{s},i}$ for each $\mathfrak{s}$ and $i$. This matrix satisfies
\begin{linenomath}
\begin{align}
\mathbf{T}\mathbf{P} &= \mathbf{P} ,
\end{align}
\end{linenomath}
which is just the matrix form of the recurrence relation satisfied by fixation probabilities (obtained from a ``first step" analysis of the Markov chain). Consider the matrix, $\mathcal{M}=\mathcal{M}\left(\mathbf{T}\right)$, defined by
\begin{linenomath}
\begin{align}
\mathcal{M} &:= -\begin{pmatrix}\mathbf{I}_{K} & \mathbf{0} \\ \mathbf{S}_{1} & \mathbf{S}_{2}-\mathbf{I}_{\left|\mathcal{S}\right| -K}\end{pmatrix} .
\end{align}
\end{linenomath}
Since $\rho_{j,i}=\delta_{i,j}$ for $i,j\in\left\{1,\dots ,K\right\}$, we see that
\begin{linenomath}
\begin{align}
-\mathcal{M}\mathbf{P} &= \mathbf{T}\mathbf{P} - \begin{pmatrix}\mathbf{0} & \mathbf{0} \\ \mathbf{0} & \mathbf{I}_{\left|\mathcal{S}\right| -K}\end{pmatrix}\mathbf{P} = \begin{pmatrix}\mathbf{I}_{K} \\ \mathbf{0}\end{pmatrix} .
\end{align}
\end{linenomath}
Moreover, the matrix $\mathcal{M}$ must have full rank since the non-absorbing states are transient; that is,
\begin{linenomath}
\begin{align}
\det\mathcal{M} &= \left(-1\right)^{\left|\mathcal{S}\right|}\det\left(\mathbf{S}_{2}-\mathbf{I}_{\left|\mathcal{S}\right| -K}\right) \neq 0 .
\end{align}
\end{linenomath}
Therefore, by Cramer's rule,
\begin{linenomath}
\begin{align}\label{fixationProbabilityFormula}
\rho_{\mathfrak{s},i} &= -\frac{\det\mathcal{M}\left(\mathfrak{s},\mathbf{e}_{i}\right)}{\det\mathcal{M}} ,
\end{align}
\end{linenomath}
where the notation $\mathcal{M}\left(\mathfrak{s},\mathbf{e}_{i}\right)$ means the matrix obtained by replacing the column corresponding to state $\mathfrak{s}$ with the $i$th standard basis vector, $\mathbf{e}_{i}$. Thus, Eq. (\ref{fixationProbabilityFormula}) gives explicit formulas for the fixation probabilities.

\subsection{Absorption times}

Let $\mathbf{t}$ be the $\left|\mathcal{S}\right|$-vector indexed by $\mathcal{S}$ whose entry $\mathbf{t}_{\mathfrak{s}}$ is the expected time until the process fixates in one of the absorbing states when started in state $\mathfrak{s}\in\mathcal{S}$. This vector satisfies $\mathbf{t}_{i}=0$ for $i=1,\dots ,K$ as well as the recurrence relation $\mathbf{T}\left(\mathbf{t}+\mathbf{1}\right) = \mathbf{t} + \sum_{i=1}^{K}\mathbf{e}_{i}$. Therefore,
\begin{linenomath}
\begin{align}
\mathcal{M}\mathbf{t} &= \sum_{j=K+1}^{\left|\mathcal{S}\right|}\mathbf{e}_{j} ,
\end{align}
\end{linenomath}
so, by Cramer's rule,
\begin{linenomath}
\begin{align}\label{absorptionTimeFormula}
\mathbf{t}_{\mathfrak{s}} &= \sum_{j=K+1}^{\left|\mathcal{S}\right|}\frac{\det\mathcal{M}\left(\mathfrak{s},\mathbf{e}_{j}\right)}{\det\mathcal{M}} .
\end{align}
\end{linenomath}

We now turn to Markov chains defined by evolutionary games. Before proving Proposition \ref{prop:functionOfBeta}, we make two assumptions:
\begin{enumerate}

\item[(i)] The payoff-to-fitness mapping is of the form $f_{\beta}\left(\pi\right) =\exp\left\{\beta\pi\right\}$, $f_{\beta}\left(\pi\right) =\beta\pi$, $f_{\beta}\left(\pi\right) =1+\beta\pi$, or $f_{\beta}\left(\pi\right) =1-\beta +\beta\pi$, where $\beta$ denotes the intensity of selection and $\pi$ denotes payoff. (Of course, fitness can be defined in one of the latter three ways only if the payoffs are such that $f_{\beta}\left(\pi\right)\geqslant 0$.)

\item[(ii)] The update probabilities are rational functions of the fitness profile of the population.

\end{enumerate}

\begin{remark}
Assumptions (i) and (ii) are not at all restrictive in evolutionary game theory. Any process in which selection occurs with probability proportional to fitness will satisfy this rationality condition, and indeed all of the standard evolutionary processes (birth-death, death-birth, imitation, pairwise comparison, Wright-Fisher, etc.) have this property. The four payoff-to-fitness mappings are standard as well.
\end{remark}

With assumptions (i) and (ii) in mind, we have:
\begin{repeatedProposition}
Each of the equalities
\begin{linenomath}
\begin{subequations}
\begin{align}
\rho_{\mathfrak{s},i} &= \rho_{\mathfrak{s}',j} ; \\
\mathbf{t}_{\mathfrak{s}} &= \mathbf{t}_{\mathfrak{s}'}
\end{align}
\end{subequations}
\end{linenomath}
holds for either (i) \textit{every} $\beta\geqslant 0$ or (ii) at most finitely many $\beta\geqslant 0$. Thus, if one of these equalities fails to hold for even a single value of $\beta$, then it fails to hold for all sufficiently small $\beta >0$.
\end{repeatedProposition}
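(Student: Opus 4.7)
The plan is to combine the Cramer's rule formulas~(\ref{fixationProbabilityFormula})--(\ref{absorptionTimeFormula}) with the structural assumptions (i) and (ii) on fitness and the update rule. Write $\mathcal{M}(\beta)$ to display the $\beta$-dependence. By (ii), each entry of $\mathcal{M}(\beta)$ is a rational function (with real coefficients) of the player fitnesses, and by (i) each fitness is either a polynomial in $\beta$ or an exponential $e^{\beta\pi}$. Only finitely many payoff values $\pi_{1},\dots,\pi_{m}$ appear in a finite game, so the entries of $\mathcal{M}(\beta)$ live in a single, well-behaved function class depending on $\beta$. Since the non-absorbing states are transient, $\det\mathcal{M}(\beta)\neq 0$ for every $\beta\geq 0$, so the Cramer expressions for $\rho_{\mathfrak{s},i}(\beta)$ and $\mathbf{t}_{\mathfrak{s}}(\beta)$ are defined on all of $[0,\infty)$, and each difference $\rho_{\mathfrak{s},i}(\beta)-\rho_{\mathfrak{s}',j}(\beta)$ or $\mathbf{t}_{\mathfrak{s}}(\beta)-\mathbf{t}_{\mathfrak{s}'}(\beta)$ can be written as $E(\beta)/F(\beta)$ with $F$ nonvanishing on $[0,\infty)$. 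The dichotomy in the proposition is then equivalent to the statement that $E(\beta)$ either vanishes identically or has only finitely many zeros in $[0,\infty)$.

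In the three polynomial cases $f_{\beta}(\pi)\in\{\beta\pi,\,1+\beta\pi,\,1-\beta+\beta\pi\}$, the entries of $\mathcal{M}(\beta)$, and hence $E$ and $F$, are ordinary rational functions of $\beta$, so $E$ is a polynomial in $\beta$; it either vanishes identically or has finitely many real roots, and the claim is immediate. In the exponential case $f_{\beta}(\pi)=e^{\beta\pi}$, each entry of $\mathcal{M}(\beta)$ is instead a rational function of the variables $e^{\beta\pi_{1}},\dots,e^{\beta\pi_{m}}$ with real coefficients; after clearing denominators, $E(\beta)$ is a real exponential polynomial $\sum_{k}c_{k}e^{\mu_{k}\beta}$ with $c_{k},\mu_{k}\in\mathbb{R}$. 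The classical fact that such a nonzero exponential polynomial has only finitely many real zeros---proved by a short induction on the number of distinct exponents in which one normalizes by the smallest exponent, differentiates, and applies Rolle's theorem---then finishes the argument: either $E\equiv 0$, giving equality for every $\beta\geq 0$, or $E$ has finitely many zeros on $[0,\infty)$, giving equality at only finitely many $\beta\geq 0$.

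The main obstacle is the exponential case, and specifically the verification that after clearing denominators one obtains an honest real exponential polynomial with real exponents (the exponents being nonnegative integer combinations of the $\pi_{k}$). This is exactly what the form of the fitness map in (i) guarantees, and it is the reason the statement of the proposition restricts to those four fitness maps rather than permitting arbitrary analytic dependence on $\beta$; for a general analytic $f_{\beta}$ one would lose the clean ``finitely many real zeros'' property and the dichotomy could in principle fail.
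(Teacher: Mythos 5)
Your proposal is correct and follows essentially the same route as the paper's own proof: both reduce the equalities, via the Cramer's rule formulas (\ref{fixationProbabilityFormula}) and (\ref{absorptionTimeFormula}), to the vanishing of a polynomial in $\beta$ or an exponential expression in $\beta$, and then invoke the fact that a nonzero such expression has only finitely many real roots. In fact you are slightly more careful than the paper in the exponential case, where the paper loosely calls the resulting equation a ``polynomial in $\exp\{\beta\}$'' while you correctly identify it as an exponential polynomial $\sum_{k}c_{k}e^{\mu_{k}\beta}$ with arbitrary real exponents and supply the Rolle's-theorem induction needed for the finiteness of its zero set.
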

\begin{proof}
Suppose that $\mathfrak{s},\mathfrak{s}'\in\mathcal{S}$ and that $\mathfrak{s}_{i}$ and $\mathfrak{s}_{j}$ are absorbing states. By Eq. (\ref{fixationProbabilityFormula}),
\begin{linenomath}
\begin{align}\label{equalityConditionFixation}
\rho_{\mathfrak{s},i} = \rho_{\mathfrak{s}',j} &\iff \frac{\det\mathcal{M}\left(\mathfrak{s},\mathbf{e}_{i}\right)}{\det\mathcal{M}} = \frac{\det\mathcal{M}\left(\mathfrak{s}',\mathbf{e}_{j}\right)}{\det\mathcal{M}} \nonumber \\
&\iff \det\mathcal{M}\left(\mathfrak{s},\mathbf{e}_{i}\right) = \det\mathcal{M}\left(\mathfrak{s}',\mathbf{e}_{j}\right) .
\end{align}
\end{linenomath}
Similarly, by Eq. (\ref{absorptionTimeFormula}),
\begin{linenomath}
\begin{align}\label{equalityConditionAbsorption}
\mathbf{t}_{\mathfrak{s}}=\mathbf{t}_{\mathfrak{s}'} &\iff \sum_{j=K+1}^{\left|\mathcal{S}\right|}\frac{\det\mathcal{M}\left(\mathfrak{s},\mathbf{e}_{j}\right)}{\det\mathcal{M}}=\sum_{j=K+1}^{\left|\mathcal{S}\right|}\frac{\det\mathcal{M}\left(\mathfrak{s}',\mathbf{e}_{j}\right)}{\det\mathcal{M}} \nonumber \\
&\iff \sum_{j=K+1}^{\left|\mathcal{S}\right|}\det\mathcal{M}\left(\mathfrak{s},\mathbf{e}_{j}\right)=\sum_{j=K+1}^{\left|\mathcal{S}\right|}\det\mathcal{M}\left(\mathfrak{s}',\mathbf{e}_{j}\right) .
\end{align}
\end{linenomath}
Assuming (i) and (ii), Eqs. (\ref{equalityConditionFixation}) and (\ref{equalityConditionAbsorption}) are equivalent to polynomial equations in either $\beta$ or $\exp\left\{\beta\right\}$. Either way, since nonzero polynomial equations have at most finitely many solutions, we see that the equalities $\rho_{\mathfrak{s},i}=\rho_{\mathfrak{s}',j}$ and $\mathbf{t}_{\mathfrak{s}}=\mathbf{t}_{\mathfrak{s}'}$ each hold for either (i) every $\beta$ or (ii) finitely many values of $\beta$. Thus, if $\rho_{\mathfrak{s},i}\neq\rho_{\mathfrak{s}',j}$ (resp. $\mathbf{t}_{\mathfrak{s}}\neq\mathbf{t}_{\mathfrak{s}'}$) for even a single selection intensity, then these fixation probabilities (resp. absorption times) differ for almost every selection intensity. In particular, they differ for all sufficiently small $\beta$.
\end{proof}

\setcounter{section}{0}
\setcounter{subsection}{0}
\setcounter{equation}{0}
\renewcommand{\thesubsection}{B.\arabic{subsection}}
\renewcommand{\theequation}{B.\arabic{equation}}

\section*{Appendix B: symmetry and evolutionary equivalence}

\subsection{Symmetries of graphs}\label{subsec:graphSymmetries}

Here we recall some standard notions of symmetry for graphs. Although we treat directed, weighted graphs in general, throughout the main text we give several examples of \textit{un}directed and \textit{un}weighted graphs, which are defined as follows:

\begin{definition}[Undirected graph]
A graph, $\mathscr{D}$, is \textit{undirected} if $\mathscr{D}_{ij}=\mathscr{D}_{ji}$ for each $i$ and $j$.
\end{definition}

\begin{definition}[Unweighted graph]\label{def:unweightedGraph}
A graph, $\mathscr{D}$, is \textit{unweighted} if $\mathscr{D}_{ij}\in\left\{0,1\right\}$ for each $i$ and $j$.
\end{definition}

Since our goal is to discuss symmetry in the context of evolutionary processes, we first describe several notions of symmetry for graphs. In a graph, $\mathscr{D}$, the \textit{indegree} and \textit{outdegree} of vertex $i$ are $\sum_{j=1}^{N}\mathscr{D}_{ji}$ and $\sum_{j=1}^{N}\mathscr{D}_{ij}$, respectively. With these definitions in mind, we recall the definition of a \textit{regular} graph:
\begin{definition}[Regular graph]\label{def:regularGraph}
$\mathscr{D}$ is regular if and only if there exists $k\in\mathbb{R}$ such that
\begin{linenomath}
\begin{align}
\sum_{j=1}^{N}\mathscr{D}_{ji} &= \sum_{j=1}^{N}\mathscr{D}_{ij} = k
\end{align}
\end{linenomath}
for each $i$. If $\mathscr{D}$ is regular, then $k$ is called the \textit{degree} of $\mathscr{D}$.
\end{definition}

Let $\mathfrak{S}_{N}$ denote the symmetric group on $N$ letters; that is, $\mathfrak{S}_{N}$ is the set of all bijections $\pi :\left\{1,\dots ,N\right\}\rightarrow\left\{1,\dots ,N\right\}$. Each $\pi\in\mathfrak{S}_{N}$ extends to a relabeling action on the set of directed, weighted graphs defined by $\left(\pi\mathscr{D}\right)_{ij}=\mathscr{D}_{\pi\left(i\right)\pi\left(j\right)}$. In other words, any relabeling of the set of vertices results in a corresponding relabeling of the graph. The \textit{automorphism group of }$\mathscr{D}$, written $\textrm{Aut}\left(\mathscr{D}\right)$, is the set of all $\pi\in\mathfrak{S}_{N}$ such that $\pi\mathscr{D}=\mathscr{D}$. We now recall a condition slightly stronger than regularity known as vertex-transitivity:
\begin{definition}[Vertex-transitive graph]\label{def:vertexTransitiveGraph}
$\mathscr{D}$ is \textit{vertex-transitive} if for each $i$ and $j$, there exists $\pi\in\textrm{Aut}\left(\mathscr{D}\right)$ such that $\pi\left(i\right) =j$.
\end{definition}
Informally, a graph is vertex-transitive if and only if it ``looks the same" from every vertex. If a graph is vertex-transitive, then it is necessarily regular. The strongest form of symmetry for graphs that we consider here is the following:
\begin{definition}[Symmetric graph]\label{def:symmetricGraph}
$\mathscr{D}$ is \textit{symmetric} (or \textit{arc-transitive}) if for each $i,j$ with $\mathscr{D}_{ij}\neq 0$ and $i',j'$ with $\mathscr{D}_{i'j'}\neq 0$, there exists $\pi\in\textrm{Aut}\left(\mathscr{D}\right)$ such that $\pi\left(i\right) =i'$ and $\pi\left(j\right) =j'$.
\end{definition}
A graph is symmetric if it ``looks the same" from any two directed edges. Arc-transitivity is typically defined for unweighted graphs, i.e. graphs satisfying $\mathscr{D}\in\left\{0,1\right\}^{N\times N}$. For the more general class of weighted graphs, we require that $\mathfrak{S}_{N}$ act transitively on the set of edges of $\mathscr{D}$, where ``edge" means a pair $\left(i,j\right)$ with $\mathscr{D}_{ij}\neq 0$. Thus, all of the edges in a symmetric, weighted graph have the same weight: otherwise, if $\left(i,j\right)$ and $\left(i',j'\right)$ are edges but $\mathscr{D}_{ij}\neq\mathscr{D}_{i'j'}$, then there would exist no permutation, $\pi$, sending $i$ to $i'$, $j$ to $j'$, and preserving the weights of the graph. Therefore, since the weights of a symmetric graph take one of two values ($0$ or else the only nonzero weight), such a graph is essentially unweighted.

\subsection{Symmetries of evolutionary processes}\label{subsec:processSymmetries}

In \S\ref{subsec:generalMarkov}, we defined two states, $\mathfrak{s}$ and $\mathfrak{s}'$, to be \textit{evolutionarily equivalent} if (i) there exists an automorphism of the Markov chain, $\phi\in\textrm{Aut}\left(X\right)$, such that $\phi\left(\mathfrak{s}\right) =\mathfrak{s}'$, and (ii) this automorphism satisfies $\phi\left(\mu\right) =\mu$ for each stationary distribution, $\mu$, of the chain. Condition (i), which means that $\mathfrak{s}$ and $\mathfrak{s}'$ are \textit{symmetric}, alone is not quite strong enough to guarantee that $\mathfrak{s}$ and $\mathfrak{s}'$ have the same long-run behavior. To give an example of a symmetry of states that is not an evolutionary equivalence, we consider the neutral Moran process in a well-mixed population of size $N=3$:
\begin{example}\label{ex:swapStrategies}
In a well-mixed population of size $N=3$, consider the (frequency-independent) Moran process with two types of players: a mutant type and a wild type. Suppose that the mutant type is neutral with respect to the mutant; that is, the fitness of the mutant relative to the wild type is $1$. Since the population is well-mixed, the state of the population is given by the number of mutants it contains, $i\in\left\{0,1,2,3\right\}=:\mathcal{S}$. Consider the map $\phi :\mathcal{S}\rightarrow\mathcal{S}$ defined by $\phi\left(i\right) =3-i$. States $0$ and $3$ are absorbing, and, for $i\in\left\{1,2\right\}$, the transition probabilities of this process are as follows:
\begin{linenomath}
\begin{subequations}
\begin{align}
\mathbf{T}_{i,i-1} &= \mathbf{T}_{i,i+1} = \left(\frac{i}{3}\right)\left(\frac{3-i}{3}\right) ; \\
\mathbf{T}_{i,i} &= \left(\frac{i}{3}\right)^{2}+\left(\frac{3-i}{3}\right)^{2} .
\end{align}
\end{subequations}
\end{linenomath}
It follows at once that $\phi$ preserves these transition probabilities, so $\phi$ is an automorphism of the Markov chain. Let $\rho_{i}$ be the probability that mutants fixate given an initial abundance of $i$ mutants. The states $1$ and $2$ are symmetric since $\phi\left(1\right) =2$, but it is \textit{not} true that $\rho_{1}=\rho_{2}$ since $\rho_{1}=1/3$ and $\rho_{2}=2/3$. The reason for this difference in fixation probabilities is that states $1$ and $2$, although symmetric, are not evolutionary equivalent since $\phi$ swaps the two absorbing states of the process.
\end{example}

In contrast to Example \ref{ex:swapStrategies}, processes with unique stationary distributions have the property that every symmetry of the Markov chain is an evolutionary equivalence (Proposition \ref{prop:ergodicSymmetry}). The following lemma establishes Proposition \ref{prop:ergodicSymmetry}:
\begin{lemma}\label{lem:stationaryDistribution}
If $\phi :\mathcal{S}\rightarrow\mathcal{S}$ is a symmetry of a Markov chain and $\mu$ is a stationary distribution of this chain, then $\phi\left(\mu\right)$ is also a stationary distribution. In particular, if $\mu$ is unique, then $\phi\left(\mu\right) =\mu$.
\end{lemma}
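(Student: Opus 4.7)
The plan is to verify the stationarity condition directly by a change-of-variables computation, then conclude the uniqueness statement as a trivial corollary. First I would fix notation: since $\phi$ is a bijection on $\mathcal{S}$, the pushforward $\phi(\mu)$ of a probability measure $\mu$ is defined by $\phi(\mu)(\mathfrak{s}) = \mu\bigl(\phi^{-1}(\mathfrak{s})\bigr)$. The stationarity of $\mu$ means $\sum_{\mathfrak{s}} \mu(\mathfrak{s})\mathbf{T}_{\mathfrak{s},\mathfrak{s}'} = \mu(\mathfrak{s}')$ for every $\mathfrak{s}' \in \mathcal{S}$, and the automorphism property of $\phi$ (Definition \ref{def:symmetry}) gives $\mathbf{T}_{\mathfrak{s},\mathfrak{s}'} = \mathbf{T}_{\phi(\mathfrak{s}),\phi(\mathfrak{s}')}$ for all $\mathfrak{s},\mathfrak{s}'$.

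Next, to check that $\phi(\mu)$ is stationary, I would fix an arbitrary $\mathfrak{s}' \in \mathcal{S}$ and compute
\begin{align*}
\sum_{\mathfrak{s} \in \mathcal{S}} \phi(\mu)(\mathfrak{s})\,\mathbf{T}_{\mathfrak{s},\mathfrak{s}'}
&= \sum_{\mathfrak{s} \in \mathcal{S}} \mu\bigl(\phi^{-1}(\mathfrak{s})\bigr)\,\mathbf{T}_{\mathfrak{s},\mathfrak{s}'} \\
&= \sum_{\mathfrak{t} \in \mathcal{S}} \mu(\mathfrak{t})\,\mathbf{T}_{\phi(\mathfrak{t}),\mathfrak{s}'},
\end{align*}
where the second equality uses the substitution $\mathfrak{t} = \phi^{-1}(\mathfrak{s})$, which is a bijective reindexing of the sum. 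Writing $\mathfrak{s}' = \phi\bigl(\phi^{-1}(\mathfrak{s}')\bigr)$ and applying the automorphism identity gives $\mathbf{T}_{\phi(\mathfrak{t}),\mathfrak{s}'} = \mathbf{T}_{\phi(\mathfrak{t}),\phi(\phi^{-1}(\mathfrak{s}'))} = \mathbf{T}_{\mathfrak{t},\phi^{-1}(\mathfrak{s}')}$. Plugging in and invoking stationarity of $\mu$ at the state $\phi^{-1}(\mathfrak{s}')$ yields $\mu\bigl(\phi^{-1}(\mathfrak{s}')\bigr) = \phi(\mu)(\mathfrak{s}')$, which is exactly the stationarity condition for $\phi(\mu)$.

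For the second part, if the chain has a \emph{unique} stationary distribution $\mu$, then since $\phi(\mu)$ is also a probability distribution (the pushforward of a probability measure under a bijection is clearly a probability measure) and stationary by what we just showed, we must have $\phi(\mu) = \mu$. No real obstacle is expected; the only mild care point is making sure one uses the bijectivity of $\phi$ both in defining the pushforward and in justifying the reindexing of the sum, and in correctly invoking the automorphism identity with $\mathfrak{s}'$ rewritten as $\phi(\phi^{-1}(\mathfrak{s}'))$ so that both arguments of $\mathbf{T}$ are in the image of $\phi$. Proposition \ref{prop:ergodicSymmetry} then follows immediately: any symmetry $\phi$ witnessing $\phi(\mathfrak{s}) = \mathfrak{s}'$ automatically satisfies condition (ii) of Definition \ref{def:evolutionaryEquivalence} when the stationary distribution is unique.
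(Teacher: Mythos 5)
Your proof is correct and follows essentially the same route as the paper's: reindex the sum via the bijection, apply the automorphism identity $\mathbf{T}_{\mathfrak{s},\mathfrak{s}'}=\mathbf{T}_{\phi(\mathfrak{s}),\phi(\mathfrak{s}')}$, and invoke stationarity of $\mu$. The only difference is a convention---you define $\phi(\mu)$ as the pushforward $\mu\circ\phi^{-1}$ while the paper uses $\phi(\mu)_{\mathfrak{s}}=\mu_{\phi(\mathfrak{s})}$---which is immaterial since $\phi^{-1}$ is an automorphism whenever $\phi$ is.
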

\begin{proof}
If $\mathbf{T}$ is the transition matrix of this Markov chain, then
\begin{linenomath}
\begin{align}
\left[\phi\left(\mu\right)^{T}\mathbf{T}\right]_{\mathfrak{s}} &= \sum_{\mathfrak{s}'\in\mathcal{S}}\phi\left(\mu\right)_{\mathfrak{s}'}\mathbf{T}_{\mathfrak{s}',\mathfrak{s}} \nonumber \\
&= \sum_{\mathfrak{s}'\in\mathcal{S}}\mu_{\phi\left(\mathfrak{s}'\right)}\mathbf{T}_{\phi\left(\mathfrak{s}'\right) ,\phi\left(\mathfrak{s}\right)} \nonumber \\
&= \sum_{\mathfrak{s}'\in\mathcal{S}}\mu_{\mathfrak{s}'}\mathbf{T}_{\mathfrak{s}',\phi\left(\mathfrak{s}\right)} \nonumber \\
&= \left[\mu^{T}\mathbf{T}\right]_{\phi\left(\mathfrak{s}\right)} \nonumber \\
&= \mu_{\phi\left(\mathfrak{s}\right)} \nonumber \\
&= \phi\left(\mu\right)_{\mathfrak{s}} ,
\end{align}
\end{linenomath}
so $\phi\left(\mu\right)^{T}\mathbf{T}=\phi\left(\mu\right)^{T}$, which completes the proof.
\end{proof}

We turn now to the proofs of our main results (Theorems \ref{thm:mainSymmetricTheorem} and \ref{thm:mainAsymmetricTheorem}):
\begin{repeatedSymmetricTheorem}
Consider an evolutionary matrix game on a graph, $\Gamma =\left(\mathscr{E},\mathscr{D}\right)$, with symmetric payoffs and homogeneous strategy mutations. If $\pi\in\textrm{Aut}\left(\Gamma\right)$, then the states with a single mutant at vertex $i$ and $\pi\left(i\right)$, respectively, in an otherwise-monomorphic population, are evolutionarily equivalent. That is, in the notation of Definition \ref{def:homogeneousEvolutionaryProcess}, the states $\mathfrak{s}_{\left(s',i\right) ,s}$ and $\mathfrak{s}_{\left(s',\pi\left(i\right)\right) ,s}$ are evolutionarily equivalent for each $s,s'\in S$.
\end{repeatedSymmetricTheorem}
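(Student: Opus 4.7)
The plan is to lift $\pi$ to a Markov chain automorphism $\phi : S^{N} \to S^{N}$ that ``moves the mutant'' from vertex $i$ to vertex $\pi(i)$, and then to verify both conditions of Definition~\ref{def:evolutionaryEquivalence}. Following the action introduced in Remark~\ref{axiomRemark} (applied to $\pi^{-1}$ so that the distinguished vertex is translated by $\pi$ in the correct direction), set $\phi(\mathfrak{s})_{k} := \mathfrak{s}_{\pi^{-1}(k)}$ for each $k \in \{1,\dots,N\}$. A direct check confirms that $\phi\bigl(\mathfrak{s}_{(s',i),s}\bigr) = \mathfrak{s}_{(s',\pi(i)),s}$, so condition (i) of Definition~\ref{def:evolutionaryEquivalence} is immediate once $\phi$ is shown to be a Markov chain automorphism.

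The key step is to verify that $\mathbf{T}_{\mathfrak{s},\mathfrak{s}'} = \mathbf{T}_{\phi(\mathfrak{s}),\phi(\mathfrak{s}')}$. Using Eq.~(\ref{eq:symmetricMatrixPayoff}), the fact that payoff symmetry means every player uses the same matrix $(a_{st})$, and the identity $\mathscr{E}_{\pi(i)\pi(j)} = \mathscr{E}_{ij}$ coming from $\pi \in \textrm{Aut}(\mathscr{E})$, a substitution $j = \pi(j')$ gives
\begin{align*}
u_{\pi(i)}\bigl(\phi(\mathfrak{s})\bigr) = \sum_{j=1}^{N} \mathscr{E}_{\pi(i)\,j}\, a_{\mathfrak{s}_{i},\,\mathfrak{s}_{\pi^{-1}(j)}} = \sum_{j'=1}^{N} \mathscr{E}_{ij'}\, a_{\mathfrak{s}_{i},\,\mathfrak{s}_{j'}} = u_{i}(\mathfrak{s}),
\end{align*}
so the fitness at vertex $\pi(i)$ in $\phi(\mathfrak{s})$ matches the fitness at vertex $i$ in $\mathfrak{s}$. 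In other words, the fitness profile of $\phi(\mathfrak{s})$ is obtained from that of $\mathfrak{s}$ simply by relabeling the vertices via $\pi$. Since $\pi$ also preserves $\mathscr{D}$ and homogeneous strategy mutations are by definition invariant under any relabeling of the players, the enumeration-independence axiom (Remark~\ref{axiomRemark}) then yields $\mathbf{T}_{\mathfrak{s},\mathfrak{s}'} = \mathbf{T}_{\phi(\mathfrak{s}),\phi(\mathfrak{s}')}$, so $\phi \in \textrm{Aut}(X)$.

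For condition (ii), observe that every monomorphic state is a fixed point of $\phi$, since constant sequences are invariant under any coordinate permutation. For processes without mutations, the stationary distributions of the chain are supported on the monomorphic absorbing states (as non-absorbing states are transient), so each is fixed pointwise by $\phi$. For irreducible processes with mutations, the stationary distribution is unique and Lemma~\ref{lem:stationaryDistribution} supplies $\phi(\mu) = \mu$. In either case, both conditions of Definition~\ref{def:evolutionaryEquivalence} hold, establishing the evolutionary equivalence of $\mathfrak{s}_{(s',i),s}$ and $\mathfrak{s}_{(s',\pi(i)),s}$.

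The main obstacle is the appeal to the enumeration-independence axiom: rigorously, one would unpack, for each microscopic event an update rule can trigger (who reproduces, who dies or is imitated, whether a mutation fires, and how offspring disperse along $\mathscr{D}$), that the probabilities depend only on the fitness profile, on $\mathscr{D}$, and on the homogeneous mutation rate, and that none of these distinguishes vertex $k$ from vertex $\pi(k)$ once $\pi \in \textrm{Aut}(\Gamma)$. The axiom precisely bundles this routine bookkeeping into a single principle, so the substantive content of the proof is the $\pi$-equivariance of the fitness profile computed above.
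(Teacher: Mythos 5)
Your proof is correct and follows essentially the same route as the paper's: define the induced permutation action $\phi(\mathfrak{s})_{k}=\mathfrak{s}_{\pi^{-1}(k)}$, verify it sends the single-mutant state at $i$ to the one at $\pi(i)$, use payoff symmetry, homogeneity of mutations, and $\pi\in\textrm{Aut}\left(\Gamma\right)$ together with the enumeration-independence axiom to conclude $\phi$ is a Markov-chain automorphism fixing the monomorphic states. Your explicit computation of $u_{\pi(i)}\bigl(\phi(\mathfrak{s})\bigr)=u_{i}(\mathfrak{s})$ and your separate treatment of condition (ii) for the absorbing and irreducible cases merely spell out details the paper states more tersely.
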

\begin{proof}
The state space of the Markov chain defined by this evolutionary game is $S^{N}$, where $S$ is the strategy set and $N$ is the population size. For $\pi\in\textrm{Aut}\left(\Gamma\right)$, let $\pi$ act on the state of the population by changing the strategy of player $i$ to that of player $\pi^{-1}\left(i\right)$ for each $i=1,\dots ,N$. In other words, $\pi$ sends $\mathfrak{s}\in S^{N}$ to $\pi\mathfrak{s}\in S^{N}$, which is defined by $\left(\pi\mathfrak{s}\right)_{i}=\mathfrak{s}_{\pi^{-1}\left(i\right)}$ for each $i$. Therefore, for $s,s'\in S$, we have
\begin{linenomath}
\begin{align}
\pi\mathfrak{s}_{\left(s',i\right) ,s} &= \mathfrak{s}_{\left(s',\pi\left(i\right)\right) ,s}
\end{align}
\end{linenomath}
for each $i=1,\dots ,N$. Since $\pi$ is an automorphism of the evolutionary graph, $\Gamma$, we have $\pi\mathscr{E}=\mathscr{E}$ and $\pi\mathscr{D}=\mathscr{D}$. Moreover, $\pi$ preserves the strategy mutations since they are homogeneous. Since the payoffs are symmetric, $\pi$ just rearranges the fitness profile of the population: the payoff of player $i$ becomes the payoff of player $\pi^{-1}\left(i\right)$ (see Eq. (\ref{eq:symmetricMatrixPayoff})), so the same is true of the fitness values. Therefore, applying the map $\pi$ to $S^{N}$ is equivalent to applying the map on $S^{N}$ obtained by simply relabeling the players. Since any such relabeling of the players results in an automorphism of the Markov chain on $S^{N}$ that preserves the monomorphic absorbing states, it follows that $\mathfrak{s}_{\left(s',i\right) ,s}$ and $\mathfrak{s}_{\left(s',\pi\left(i\right)\right) ,s}$ are evolutionarily equivalent.
\end{proof}

\begin{repeatedAsymmetricTheorem}
Suppose that an asymmetric matrix game with homogeneous strategy mutations is played on an evolutionary graph, $\Gamma =\left(\mathscr{E},\mathscr{D}\right)$. For each $\pi\in\textrm{Aut}\left(\Gamma\right)$, $k\in\left\{1,2,3\right\}$, and $i,j\in\left\{1,\dots ,N\right\}$,
\begin{linenomath}
\begin{align}
\sigma_{k}^{ij} &= \sigma_{k}^{\pi\left(i\right)\pi\left(j\right)} .
\end{align}
\end{linenomath}
\end{repeatedAsymmetricTheorem}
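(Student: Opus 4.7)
The plan is to adapt the payoff-relabeling argument used for Theorem \ref{thm:mainSymmetricTheorem} to the asymmetric setting, treating the structure coefficients as extracted from the selection condition (\ref{asymmetricNetwork}) rather than from the underlying Markov chain construction directly. Fix $\pi\in\textrm{Aut}\left(\Gamma\right)$. Given any family of payoff matrices $\left\{\mathbf{M}^{ij}\right\}$, define the \emph{twisted} game by $\tilde{\mathbf{M}}^{ij}:=\mathbf{M}^{\pi(i)\pi(j)}$. The first task is to show that the Markov chain defined by the twisted game on $S^{N}$ is isomorphic, via the state relabeling $\left(\pi\mathfrak{s}\right)_{i}:=\mathfrak{s}_{\pi^{-1}(i)}$, to the Markov chain defined by the original game. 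This reduces to three observations, each immediate from the hypotheses: (i) $\pi\mathscr{E}=\mathscr{E}$ and $\pi\mathscr{D}=\mathscr{D}$, so interaction and dispersal are preserved; (ii) homogeneous strategy mutations are permutation-invariant; and (iii) by Eq.~(\ref{eq:asymmetricMatrixPayoff}) together with $\mathscr{E}_{\pi(i)\pi(j)}=\mathscr{E}_{ij}$, the total payoff to player $\pi^{-1}(i)$ under the twisted game (starting from $\pi\mathfrak{s}$) equals the total payoff to player $i$ under the original game (starting from $\mathfrak{s}$). Combined with the enumeration-independence axiom (Remark \ref{axiomRemark}), this yields the desired chain isomorphism.

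Second, because the twisted and original chains are isomorphic, strategy $r$ is favored by weak selection in one if and only if it is favored in the other. Applying Theorem \ref{asymmetricSelectionCondition} to both games and reindexing the twisted-game sum by $i\mapsto\pi^{-1}(i)$, $j\mapsto\pi^{-1}(j)$ (using $\mathscr{E}_{\pi^{-1}(i)\pi^{-1}(j)}=\mathscr{E}_{ij}$), one obtains, after subtracting the two conditions, the identity
\begin{linenomath}
\begin{align*}
\sum_{i,j=1}^{N}\mathscr{E}_{ij}\sum_{k=1}^{3}\left(\sigma_{k}^{ij}-\sigma_{k}^{\pi^{-1}(i)\pi^{-1}(j)}\right)\Lambda_{k}^{ij}(r) &= 0 ,
\end{align*}
\end{linenomath}
where $\Lambda_{1}^{ij}(r)=a_{rr}^{ij}-\overline{a}_{\ast\ast}^{ij}$, $\Lambda_{2}^{ij}(r)=\overline{a}_{r\ast}^{ij}-\overline{a}_{\ast r}^{ij}$, and $\Lambda_{3}^{ij}(r)=\overline{a}_{r\ast}^{ij}-\overline{a}^{ij}$. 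This identity must hold for every family $\left\{\mathbf{M}^{ij}\right\}$ and every $r\in\left\{1,\dots,n\right\}$.

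Third, isolate the coefficients by varying payoffs. Restricting to families supported at a single pair $(i_{0},j_{0})$ with $\mathscr{E}_{i_{0}j_{0}}\neq 0$ localizes the identity at $(i_{0},j_{0})$; a short linear-algebra check shows that the three functionals $\Lambda_{1},\Lambda_{2},\Lambda_{3}$ can be varied independently as $\mathbf{M}^{i_{0}j_{0}}$ and $r$ range, forcing $\sigma_{k}^{i_{0}j_{0}}=\sigma_{k}^{\pi^{-1}(i_{0})\pi^{-1}(j_{0})}$ for each $k$. Pairs with $\mathscr{E}_{i_{0}j_{0}}=0$ do not enter (\ref{asymmetricNetwork}) and may be handled by convention. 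Replacing $\pi$ by $\pi^{-1}$ then yields the theorem as stated.

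I expect the third step to be the main obstacle: the identity only constrains the $\sigma_{k}^{ij}$ through the specific combinations $\Lambda_{k}$, so one must verify that these functionals are indeed linearly independent across $r$ and varying payoff data. An alternative and perhaps more transparent route bypasses this issue entirely by noting that the structure coefficients in Theorem \ref{asymmetricSelectionCondition} are constructed (in the proof of that theorem) from neutral correlation quantities of the underlying Markov chain; since that neutral chain inherits the automorphism $\pi$ from $\Gamma$ by exactly the argument of Theorem \ref{thm:mainSymmetricTheorem}, the resulting coefficients must satisfy $\sigma_{k}^{ij}=\sigma_{k}^{\pi(i)\pi(j)}$ directly, without any need for a uniqueness argument at the level of the selection condition.
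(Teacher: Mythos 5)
Your first step---the isomorphism, via the relabeling $\pi$, between the chain for the twisted game $\tilde{\mathbf{M}}^{ij}=\mathbf{M}^{\pi(i)\pi(j)}$ and the chain for the original game---is correct and is in fact the heart of the paper's own proof (there it is packaged as an automorphism of an enlarged chain on $S^{N}\times\textrm{orb}_{\mathfrak{S}_{N}}\left(\mathbf{a}\right)$, yielding $\mu\left(\pi\mathbf{b}\right)_{\pi\mathfrak{s}}=\mu\left(\mathbf{b}\right)_{\mathfrak{s}}$). The gap is in how you convert this into the statement about the $\sigma_{k}^{ij}$. First, condition (\ref{asymmetricNetwork}) is an inequality, so ``subtracting the two conditions'' is not a legitimate move: from the fact that two linear functionals of the payoffs have the same positivity set you can conclude only that they are positive scalar multiples of one another, not that their difference vanishes; you would still need to rule out a nontrivial proportionality constant (e.g.\ by iterating $\pi$ until it returns to the identity). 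Second, and more seriously, your ``short linear-algebra check'' fails: the functionals $\Lambda_{1},\Lambda_{2},\Lambda_{3}$ are not linearly independent in general. For $n=2$ one computes $\Lambda_{3}^{ij}(r)=\tfrac{1}{2}\left(\Lambda_{1}^{ij}(r)+\Lambda_{2}^{ij}(r)\right)$, which is precisely why (\ref{asymmetricNetwork}) collapses to the two-coefficient form (\ref{asymmetricNetworkTwo}). Consequently the individual $\sigma_{k}^{ij}$ are not identifiable from the selection condition alone, and no argument that treats them as ``extracted from (\ref{asymmetricNetwork})'' can establish the claimed equality for each $k\in\left\{1,2,3\right\}$ separately. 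The theorem concerns the specific coefficients constructed in the proof of Theorem \ref{asymmetricSelectionCondition}, so one must work with that construction.

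Your closing ``alternative route'' is the right one, and it is what the paper actually does: each $\sigma_{k}^{ij}$ is a function of the derivatives $\partial F_{r}/\partial a_{st}^{ij}\big\vert_{\mathbf{a}=0}$ of the average abundance of strategy $r$ with respect to the payoff entries, and the chain isomorphism from your first step shows that $F_{r}$ evaluated at a payoff tuple supported at the single entry $a_{st}^{ij}$ equals $F_{r}$ evaluated at the corresponding tuple supported at $a_{st}^{\pi(i)\pi(j)}$, whence these derivatives---and therefore the coefficients built from them---are $\pi$-invariant. As written, however, this is a one-sentence sketch rather than a proof: you would need to state precisely which quantities of the chain the $\sigma_{k}^{ij}$ are functions of and verify that the relabeling acts on those quantities in the asserted way, which is where the bulk of the paper's argument lives.
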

\begin{proof}
Let $\mathbf{T}$ be the transition matrix for the Markov chain defined by this process. Since there are nonzero strategy mutations, this chain has a unique stationary distribution, $\mu$. The matrix $\mathbf{T}$ defines a directed, weighted graph on $\left|\mathcal{S}\right| =\left|S\right|^{N}$ vertices that has an edge from vertex $\mathfrak{s}$ to vertex $\mathfrak{s}'$ if and only if $\mathbf{T}_{\mathfrak{s},\mathfrak{s}'}\neq 0$. If there is an edge from $\mathfrak{s}$ to $\mathfrak{s}'$, then the weight of this edge is simply $\mathbf{T}_{\mathfrak{s},\mathfrak{s}'}$. The (outdegree) Laplacian matrix of this graph, $\mathcal{L}=\mathcal{L}\left(\mathbf{T}\right)$, is defined by $\mathcal{L}=I_{\left|\mathcal{S}\right|}-\mathbf{T}$ \citep[see][]{chung:AMS:1996}. In terms of this Laplacian matrix, \citet{press:PNAS:2012} show that for any vector, $\nu$, the stationary distribution satisfies
\begin{linenomath}
\begin{align}
\mu\cdot\nu &= \frac{\det\mathcal{L}\left(\mathfrak{s},\nu\right)}{\det\mathcal{L}\left(\mathfrak{s},\mathbf{1}\right)} ,
\end{align}
\end{linenomath}
for each state, $\mathfrak{s}$, where $\mathcal{L}\left(\mathfrak{s},\nu\right)$ denotes the matrix obtained from $\mathcal{L}$ by replacing the column corresponding to state $\mathfrak{s}$ by $\nu$. Thus, if $\psi_{r}$ is the vector indexed by $\mathcal{S}=S^{N}$ with $\psi_{r}\left(\mathfrak{s}\right)$ being the frequency of strategy $r$ in state $\mathfrak{s}$, then the \textit{average abundance} of strategy $r$ is
\begin{linenomath}
\begin{align}\label{eq:averageAbundance}
F_{r} &:= \mu\cdot\psi_{r} = \frac{\det\mathcal{L}\left(\mathfrak{s},\psi_{r}\right)}{\det\mathcal{L}\left(\mathfrak{s},\mathbf{1}\right)} .
\end{align}
\end{linenomath}
Since $\mathbf{T}$ is a function of the payoffs, $\mathbf{a}=\left(a_{st}^{ij}\right)_{s,t,i,j}$, we may write $F_{r}=F_{r}\left(\mathbf{a}\right)$. ($\mathbf{a}$ is just an ordered tuple defined by $\mathbf{a}_{st}^{ij}:=a_{st}^{ij}$ for each $s$, $t$, $i$, and $j$.) Moreover, since the entries of $\mathbf{T}$ are assumed to be smooth functions of $\mathbf{a}$ \citep[see][]{tarnita:PNAS:2011}, $F_{r}$ is also a smooth function of $\mathbf{a}$ by Eq. (\ref{eq:averageAbundance}) and the definition of $\mathcal{L}$. We will show that for each $s$ and $t$,
\begin{linenomath}
\begin{align}\label{eq:partialsEquation}
\frac{\partial F_{r}}{\partial a_{st}^{ij}}\Bigg\vert_{\mathbf{a}=0} &= \frac{\partial F_{r}}{\partial a_{st}^{\pi\left(i\right)\pi\left(j\right)}}\Bigg\vert_{\mathbf{a}=0}
\end{align}
\end{linenomath}
for each $i$ and $j$. The theorem will then follow from the derivations of $\sigma_{1}^{ij}$, $\sigma_{2}^{ij}$, and $\sigma_{3}^{ij}$ in \citep{mcavoy:JMB:2015} since it is shown there that each $\sigma^{ij}$ is a function of the elements in the set
\begin{linenomath}
\begin{align}
\left\{ \frac{\partial F_{r}}{\partial a_{st}^{ij}}\Bigg\vert_{\mathbf{a}=0} \right\}_{s,t=1}^{n} .
\end{align}
\end{linenomath}
For $s$, $t$, $i$, and $j$ fixed and $a\in\mathbb{R}$, let $F_{r}^{s,t,i,j}\left(a\right)$ be the function of the vector with $a$ at entry $a_{st}^{ij}$ and $0$ in all other entries. Symbolically, if $\delta_{x,y}$ is defined as $1$ if $x=y$ and $0$ otherwise and
\begin{linenomath}
\begin{align}
\mathbf{a}_{0}^{\left(s,t,i,j\right)}\left(a\right) &:= \left(a\delta_{s,s'}\delta_{t,t'}\delta_{i,i'}\delta_{j,j'}\right)_{s',t',i',j'} ,
\end{align}
\end{linenomath}
then
\begin{linenomath}
\begin{align}
F_{r}^{s,t,i,j}\left(a\right) &:= F_{r}\left( \mathbf{a}_{0}^{\left(s,t,i,j\right)}\left(a\right) \right) .
\end{align}
\end{linenomath}
Let $\pi\in\mathfrak{S}_{N}$ and suppose that $\pi\in\textrm{Aut}\left(\Gamma\right)$; that is, $\pi\mathscr{E}=\mathscr{E}$ and $\pi\mathscr{D}=\mathscr{D}$. $\pi$ induces a map on the payoffs, $\mathbf{a}$, defined by $\pi\left(a_{st}^{ij}\right)_{s,t,i,j}=\left(a_{st}^{\pi\left(i\right)\pi\left(j\right)}\right)_{s,t,i,j}$. Let $\textrm{orb}_{\mathfrak{S}_{N}}\left(\mathbf{a}\right)$ denote the orbit of $\mathbf{a}$ under this action, and consider the enlarged state space $\mathcal{S}':=S^{N}\times\textrm{orb}_{\mathfrak{S}_{N}}\left(\mathbf{a}\right)$. Using the Markov chain on $S^{N}$ coming from the evolutionary process, we obtain a Markov chain on $S^{N}\times\textrm{orb}_{\mathfrak{S}_{N}}\left(\mathbf{a}\right)$ via the transition matrix, $\mathbf{T}'$, defined by
\begin{linenomath}
\begin{align}
\mathbf{T}_{\left(\mathfrak{s},\mathbf{b}\right) , \left(\mathfrak{s}',\mathbf{b}'\right)}' &:= \delta_{\mathbf{b},\mathbf{b}'}\mathbf{T}_{\mathfrak{s},\mathfrak{s}'}\left(\mathbf{b}\right) .
\end{align}
\end{linenomath}
for $\mathfrak{s},\mathfrak{s}'\in S^{N}$ and $\mathbf{b},\mathbf{b}'\in\textrm{orb}_{\mathfrak{S}_{N}}\left(\mathbf{a}\right)$. (We write $\mathbf{T}\left(\mathbf{b}\right)$ to indicate the transition matrix as a function of the payoff values of the game.) $\pi$ extends to a map on $\mathcal{S}'$ defined by $\pi\left(\mathfrak{s},\mathbf{b}\right) =\left(\pi\mathfrak{s},\pi\mathbf{b}\right)$. Since $\pi$ preserves $\mathscr{E}$, $\mathscr{D}$, and the strategy mutations (since they are homogeneous), it follows that the induced map $\pi :\mathcal{S}'\rightarrow\mathcal{S}'$ is an automorphism of the Markov chain on $\mathcal{S}'$ defined by $\mathbf{T}'$. If $\mu '$ is a stationary distribution for the chain $\mathbf{T}'$, then, for each $\mathbf{s}\in S^{N}$ and $\mathbf{b}\in\textrm{orb}_{\mathfrak{S}_{N}}\left(\mathbf{a}\right)$,
\begin{linenomath}
\begin{align}
\mu_{\left(\mathfrak{s},\mathbf{b}\right)}' &= \sum_{\mathfrak{s}'\in S^{N}}\sum_{\mathbf{b}'\in\textrm{orb}_{\mathfrak{S}_{N}}\left(\mathbf{a}\right)}\mu_{\left(\mathfrak{s}',\mathbf{b}'\right)}'\mathbf{T}_{\left(\mathfrak{s}',\mathbf{b}'\right) ,\left(\mathfrak{s},\mathbf{b}\right)}' \nonumber \\
&= \sum_{\mathfrak{s}'\in S^{N}}\sum_{\mathbf{b}'\in\textrm{orb}_{\mathfrak{S}_{N}}\left(\mathbf{a}\right)}\mu_{\left(\mathfrak{s}',\mathbf{b}'\right)}'\delta_{\mathbf{b}',\mathbf{b}}\mathbf{T}_{\mathfrak{s}',\mathfrak{s}} \nonumber \\
&= \sum_{\mathfrak{s}'\in S^{N}}\mu_{\left(\mathfrak{s}',\mathbf{b}\right)}'\mathbf{T}_{\mathfrak{s}',\mathfrak{s}} .
\end{align}
\end{linenomath}
It then follows from the uniqueness of $\mu =\mu\left(\mathbf{b}\right)$ that there exists $c_{\mu '}\left(\mathbf{b}\right)\geqslant 0$ such that $\mu_{\left(\mathfrak{s},\mathbf{b}\right)}'=c_{\mu '}\left(\mathbf{b}\right)\mu_{\mathfrak{s}}$. If $\mu '$ is such a stationary distribution, then, by Lemma \ref{lem:stationaryDistribution}, $\pi\mu' =\mu''$ for some other stationary distribution, $\mu ''$, of the chain on $\mathcal{S}'$. This equation implies that $c_{\mu ''}\left(\mathbf{b}\right) =c_{\mu '}\left(\pi\mathbf{b}\right)$ for each $\mathbf{b}\in\textrm{orb}_{\mathfrak{S}_{N}}\left(\mathbf{a}\right)$. Therefore,
\begin{linenomath}
\begin{align}
\pi\mu\left(\pi\mathbf{b}\right)_{\mathfrak{s}} &= \mu\left(\pi\mathbf{b}\right)_{\pi\mathfrak{s}} = \mu\left(\mathbf{b}\right)_{\mathfrak{s}}
\end{align}
\end{linenomath}
for each $\mathfrak{s}\in S^{N}$ and $\mathbf{b}\in\textrm{orb}_{\mathfrak{S}_{N}}\left(\mathbf{a}\right)$. Consequently, since $\pi\psi_{r}=\psi_{r}$,
\begin{linenomath}
\begin{align}
F_{r}^{s,t,i,j}\left(a\right) &= F_{r}\left(\mathbf{a}_{0}^{\left(s,t,i,j\right)}\left(a\right)\right) \nonumber \\
&= \mu\left(\mathbf{a}_{0}^{\left(s,t,i,j\right)}\left(a\right)\right)\cdot\psi_{r} \nonumber \\
&= \pi\mu\left(\pi\mathbf{a}_{0}^{\left(s,t,i,j\right)}\left(a\right)\right)\cdot\psi_{r} \nonumber \\
&= \pi\mu\left(\mathbf{a}_{0}^{\left(s,t,\pi\left(i\right) ,\pi\left(j\right)\right)}\left(a\right)\right)\cdot\pi\psi_{r} \nonumber \\
&= \mu\left(\mathbf{a}_{0}^{\left(s,t,\pi\left(i\right) ,\pi\left(j\right)\right)}\left(a\right)\right)\cdot\psi_{r} \nonumber \\
&= F_{r}\left(\mathbf{a}_{0}^{\left(s,t,\pi\left(i\right) ,\pi\left(j\right)\right)}\left(a\right)\right) \nonumber \\
&= F_{r}^{s,t,\pi\left(i\right) ,\pi\left(j\right)}\left(a\right) .
\end{align}
\end{linenomath}
As a result, we have
\begin{linenomath}
\begin{align}
\frac{\partial F_{r}}{\partial a_{st}^{ij}}\Bigg\vert_{\mathbf{a}=0} &= \frac{d}{da}\Bigg\vert_{a=0}F_{r}^{s,t,i,j} = \frac{d}{da}\Bigg\vert_{a=0}F_{r}^{s,t,\pi\left(i\right) ,\pi\left(j\right)} = \frac{\partial F_{r}}{\partial a_{st}^{\pi\left(i\right)\pi\left(j\right)}}\Bigg\vert_{\mathbf{a}=0} ,
\end{align}
\end{linenomath}
so Eq. (\ref{eq:partialsEquation}) holds, which completes the proof.
\end{proof}

\setcounter{section}{0}
\setcounter{subsection}{0}
\setcounter{equation}{0}
\renewcommand{\thesubsection}{C.\arabic{subsection}}
\renewcommand{\theequation}{C.\arabic{equation}}

\section*{Appendix C: explicit calculations}

We now perform explicit calculations using Eqs. (\ref{fixationProbabilityFormula}) and (\ref{absorptionTimeFormula}) to show that the Isothermal Theorem extends to neither absorption times nor frequency-dependent games. We also calculate the structure coefficients for the death-birth process on the graph of Fig. \ref{fig:transitiveSymmetric}(A) to show that Corollary \ref{cor:spatialAverage} does not necessarily hold for graphs that are vertex-transitive but not symmetric.

\subsection{The Moran process}\label{subsec:moranAppendix}

Consider the Moran process on the Frucht graph (Fig. \ref{fig:fruchtGraph}), and suppose that the mutant type has fitness $r>0$ relative to the wild type. By the Isothermal Theorem of \citet{lieberman:Nature:2005}, the fixation probability of a fixed number of mutants is independent of the configuration of those mutants on the graph. For $r=2$, the absorption times (of configurations of a single mutant in a wild-type population) are listed in Table \ref{table:absorption}. The fixation probability of a single mutant is $\left(1-\frac{1}{2}\right) /\left(1-\frac{1}{2^{12}}\right)\approx 0.5001$ for every vertex. Thus, unlike fixation probabilities, absorption times depend on the initial location of the mutant. (Some of the absorption times are similar in this case, but no two are the same.)
\begin{table}
\begin{tabular}{|c|c|c|}
\hline
\textbf{initial vertex of mutant} & \textbf{absorption time} \\
\hline
1 & 238.1836 \\
\hline
2 & 237.0596 \\
\hline
3 & 234.5982 \\
\hline
4 & 235.8447 \\
\hline
5 & 236.5967 \\
\hline
6 & 234.5792 \\
\hline
7 & 231.6988 \\
\hline
8 & 238.0375 \\
\hline
9 & 233.6122 \\
\hline
10 & 235.1514 \\
\hline
11 & 230.1340 \\
\hline
12 & 228.7114 \\
\hline
\end{tabular}
\caption{The absorption times of the $12$ initial configurations of a single mutant in a wild-type population for the Moran process on the Frucht graph. The fitness of the mutant relative to the wild type is $r=2$.\label{table:absorption}}
\end{table}

\subsection{Frequency-dependent games}

\subsubsection{Symmetric games}\label{subsubsec:symmetricAppendix}

Consider the instance of the Snowdrift Game that has for a payoff matrix (\ref{snowdriftGame}). For this game, Table \ref{table:fixationAbsorption} gives the fixation probabilities and the absorption times (rounded to four digits after the decimal point) for the death-birth process on the Frucht graph (Fig. \ref{fig:fruchtGraph}) with $\beta =1$.
\begin{table}
\begin{tabular}{|c|c|c|}
\hline
\textbf{initial vertex of mutant} & \textbf{fixation probability} & \textbf{absorption time} \\
\hline
1 & 0.6505 & 116.0959 \\
\hline
2 & 0.6471 & 115.4026 \\
\hline
3 & 0.6469 & 115.7302 \\
\hline
4 & 0.6448 & 115.7348 \\
\hline
5 & 0.6463 & 116.0100 \\
\hline
6 & 0.6562 & 117.4671 \\
\hline
7 & 0.7299 & 129.8609 \\
\hline
8 & 0.6512 & 116.6906 \\
\hline
9 & 0.6545 & 118.3795 \\
\hline
10 & 0.6551 & 117.8995 \\
\hline
11 & 0.7344 & 131.6681 \\
\hline
12 & 0.7326 & 131.9634 \\
\hline
\end{tabular}
\caption{The fixation probabilities and absorption times of the $12$ initial configurations of a single cooperator among defectors for the death-birth process on the Frucht graph. Payoffs are frequency-dependent and derived from the Snowdrift Game, (\ref{snowdriftGame}). The intensity of selection is $\beta =1$.\label{table:fixationAbsorption}}
\end{table}

Similarly, for the same game (and update rule) but on the Tietze graph (Fig. \ref{fig:tietze}) with $\beta =0.1$, Table \ref{table:fixationAbsorptionTietze} and Fig. \ref{fig:tietzeBarGraphs} give the fixation probabilities and absorption times for all possible configurations of a single cooperator among defectors.

\begin{table}
\begin{tabular}{|c|c|c|}
\hline
\textbf{initial vertex of mutant} & \textbf{fixation probability} & \textbf{absorption time} \\
\hline
1 & 0.3777 & 70.7869 \\
\hline
2 & 0.3777 & 70.7869 \\
\hline
3 & 0.3777 & 70.7869 \\
\hline
4 & 0.4141 & 76.5048 \\
\hline
5 & 0.4186 & 77.3094 \\
\hline
6 & 0.4186 & 77.3094 \\
\hline
7 & 0.4141 & 76.5048 \\
\hline
8 & 0.4186 & 77.3094 \\
\hline
9 & 0.4186 & 77.3094 \\
\hline
10 & 0.4141 & 76.5048 \\
\hline
11 & 0.4186 & 77.3094 \\
\hline
12 & 0.4186 & 77.3094 \\
\hline
\end{tabular}
\caption{The fixation probabilities and absorption times of the $12$ initial configurations of a single cooperator among defectors for the death-birth process on the Tietze graph. Payoffs are frequency-dependent and derived from the Snowdrift Game, (\ref{snowdriftGame}). The intensity of selection is $\beta =0.1$. These values are illustrated graphically in Fig. \ref{fig:tietzeBarGraphs}.\label{table:fixationAbsorptionTietze}}
\end{table}

\begin{figure}
\begin{center}
\subfloat[]{\includegraphics[scale=0.45]{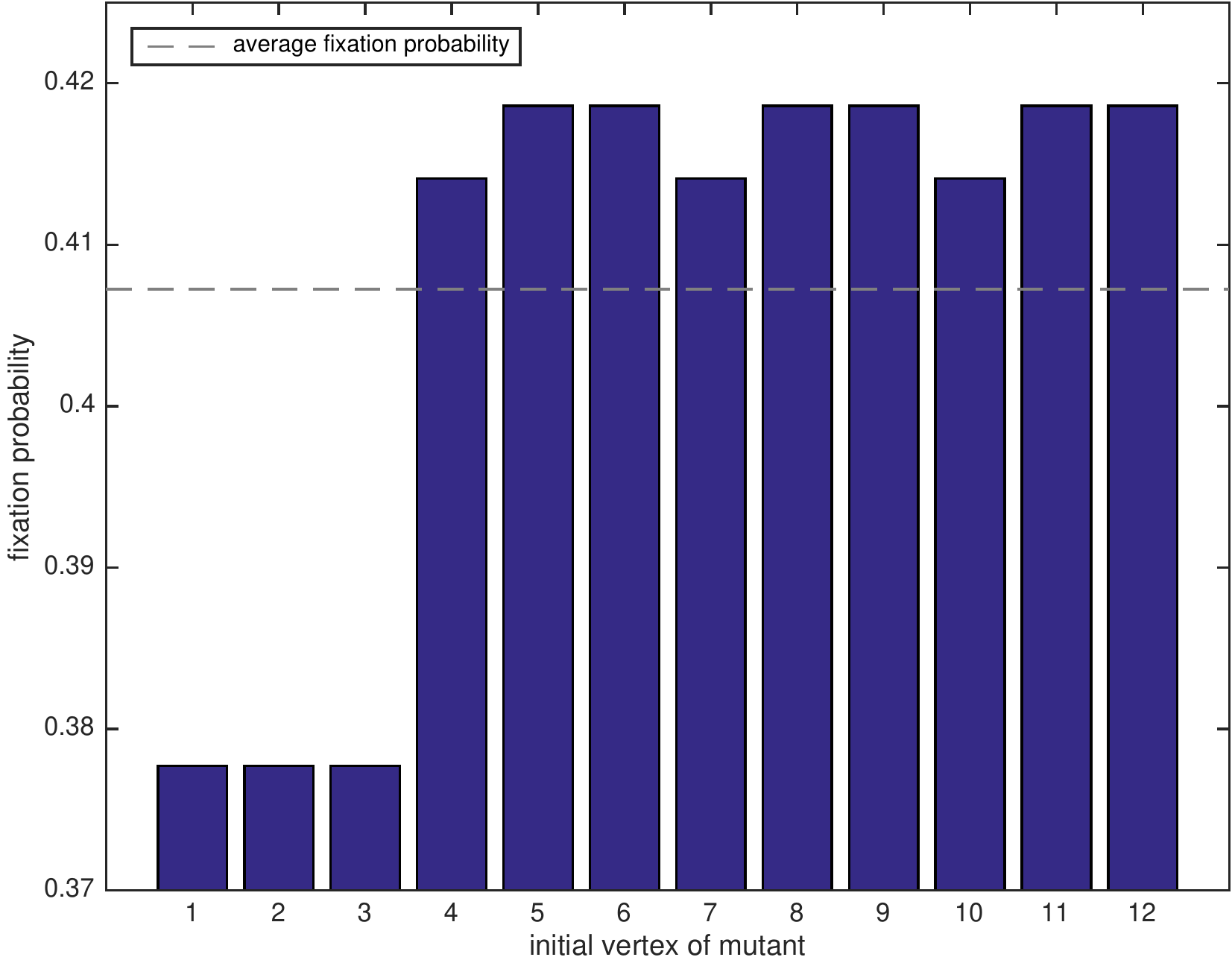}}
\quad\quad
\subfloat[]{\includegraphics[scale=0.45]{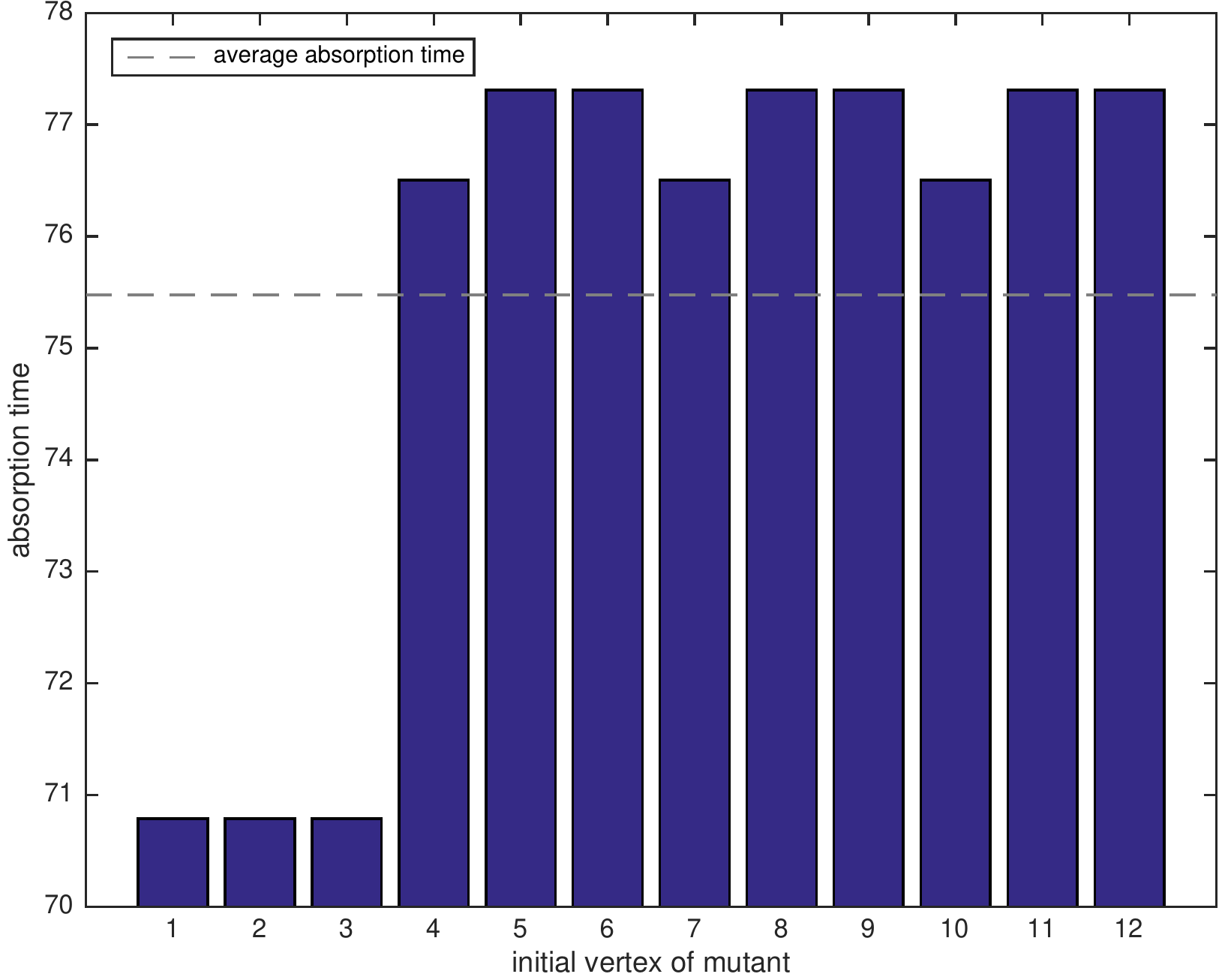}}
\end{center}
\caption{Fixation probability (A) and absorption time (B) versus initial vertex of mutant (cooperator) for a death-birth process on the Tietze graph. In both figures, the game is a Snowdrift Game whose payoffs are described by payoff matrix (\ref{snowdriftGame}), and the selection intensity is $\beta =0.1$. This example illustrates the single-mutant states that are not evolutionarily equivalent in the Tietze graph. Moreover, it happens to be the case that any two of these states with the same fixation probability (or absorption time) are evolutionarily equivalent.\label{fig:tietzeBarGraphs}}
\end{figure}

\subsubsection{Asymmetric games}\label{subsubsection:asymmetricAppendix}

For the death-birth process on the graph in Fig. \ref{fig:transitiveSymmetric}(A) with homogeneous strategy-mutation rate $\varepsilon =0.01$, we calculate the complete collection of structure coefficients $\left\{\tau_{1}^{ij},\tau_{2}^{ij}\right\}_{i,j}$ (for $r=1$) as follows: Let $\psi_{1}$ be the vector indexed by $\mathcal{S}$ with $\psi_{1}\left(\mathfrak{s}\right)$ being the frequency of strategy $1$ in state $\mathfrak{s}$, and let $\mathbf{1}$ be the vector of ones. \citet{mcavoy:JMB:2015} show that, for any $\mathfrak{s}$, Eq. (\ref{asymmetricNetwork}) is equivalent to
\begin{linenomath}
\begin{align}
\frac{1}{n}\textrm{tr}\left(\left(\Big[\mathcal{L}\left(\mathfrak{s},\psi_{1}\right)\vert_{\beta =0}\Big]^{-1} - \Big[\mathcal{L}\left(\mathfrak{s},\mathbf{1}\right)\vert_{\beta =0}\Big]^{-1}\right)\frac{d}{d\beta}\Big\vert_{\beta =0}\mathcal{L}\left(\mathfrak{s},\mathbf{0}\right)\right) &> 0 ,
\end{align}
\end{linenomath}
where $\mathcal{L}=\mathcal{L}\left(\mathbf{T}\right) =I_{\left|\mathcal{S}\right|}-\mathbf{T}$ is the (outdegree) Laplacian matrix of the graph defined by $\mathbf{T}$. If $k\in\left\{1,2\right\}$, $i$ and $j$ are fixed, and we choose the (two-strategy) asymmetric game so that
\begin{linenomath}
\begin{align}
a_{st}^{i'j'} &= \begin{cases}1 & s=1,\ t=k,\ i'=i,\ j'=j; \\ 0 & \textrm{otherwise},\end{cases}
\end{align}
\end{linenomath}
then
\begin{linenomath}
\begin{align}
\tau_{k}^{ij} &= \frac{1}{2}\textrm{tr}\left(\left(\Big[\mathcal{L}\left(\mathfrak{s},\psi_{1}\right)\vert_{\beta =0}\Big]^{-1} - \Big[\mathcal{L}\left(\mathfrak{s},\mathbf{1}\right)\vert_{\beta =0}\Big]^{-1}\right)\frac{d}{d\beta}\Big\vert_{\beta =0}\mathcal{L}\left(\mathfrak{s},\mathbf{0}\right)\right) .
\end{align}
\end{linenomath}
Using this method, we obtain the structure coefficients for Fig. \ref{fig:transitiveSymmetric}(A) listed in Table \ref{table:structureCoefficients}. For the same process on the symmetric graph of Fig. \ref{fig:transitiveSymmetric}(B), we find that $\tau_{1}^{ij}$ and $\tau_{2}^{ij}$ are independent of $i$ and $j$ and are both equal to $2189/27728$.
\begin{table}
\begin{tabular}{|c|c|c|}
\hline
$\left(i,j\right)$ & $\tau_{1}^{ij}$ & $\tau_{2}^{ij}$ \\
\hline
$\left(1,2\right)$ & $707905/9315552$ & $32989/405024$ \\
\hline
$\left(1,4\right)$ & $16291/194074$ & $57057/776296$ \\
\hline
$\left(1,6\right)$ & $707905/9315552$ & $32989/405024$ \\
\hline
$\left(2,1\right)$ & $707905/9315552$ & $32989/405024$ \\
\hline
$\left(2,3\right)$ & $16291/194074$ & $57057/776296$ \\
\hline
$\left(2,6\right)$ & $707905/9315552$ & $32989/405024$ \\
\hline
$\left(3,2\right)$ & $16291/194074$ & $57057/776296$ \\
\hline
$\left(3,4\right)$ & $707905/9315552$ & $32989/405024$ \\
\hline
$\left(3,5\right)$ & $707905/9315552$ & $32989/405024$ \\
\hline
$\left(4,1\right)$ & $16291/194074$ & $57057/776296$ \\
\hline
$\left(4,3\right)$ & $707905/9315552$ & $32989/405024$ \\
\hline
$\left(4,5\right)$ & $707905/9315552$ & $32989/405024$ \\
\hline
$\left(5,3\right)$ & $707905/9315552$ & $32989/405024$ \\
\hline
$\left(5,4\right)$ & $707905/9315552$ & $32989/405024$ \\
\hline
$\left(5,6\right)$ & $16291/194074$ & $57057/776296$ \\
\hline
$\left(6,1\right)$ & $707905/9315552$ & $32989/405024$ \\
\hline
$\left(6,2\right)$ & $707905/9315552$ & $32989/405024$ \\
\hline
$\left(6,5\right)$ & $16291/194074$ & $57057/776296$ \\
\hline
\end{tabular}
\caption{The structure coefficients in Eq. (\ref{asymmetricNetworkTwo}) for the death-birth process on the vertex-transitive (but not symmetric) graph of Fig. \ref{fig:transitiveSymmetric}(A) with homogeneous strategy-mutation rate $\varepsilon =0.01$.\label{table:structureCoefficients}}
\end{table}

\end{document}